\DeclarePairedDelimiter\floor{\lfloor}{\rfloor}
\newtheorem{thm}{Theorem}
\newtheorem{prop}{Proposition}
\theoremstyle{definition}
\newtheorem{example}{Example}
\theoremstyle{plain}
\newtheorem{lemma}{Lemma}
\title{\textbf{Information Aggregation with Delegation of Votes\footnote{We are thankful to Alessandra Casella, Timothy Feddersen, Navin Kartik, Mark Dean, and Carlo Prato and to seminar participants at Columbia University, TU Berlin and Indian Statistical Institute, New Delhi, Delhi School of Economics,  for their excellent comments and suggestions. Dhillon and Kotsialou acknowledge funding from EPSRC grant number EP/P031811/1. This work supersedes \cite{dhillon2021information} and \cite{dilip2021information} which are subsumed into the current paper. The authors' emails are: amrita.dhillon@kcl.ac.uk (AD), g.m.kotsialou@lse.ac.uk (GK), dilip.ravindran@hu-berlin.de (DR), xefteris.dimitrios@ucy.ac.cy (DX).}} \\
}
\author{Amrita Dhillon \\ King's College, London\\ \and
 Grammateia Kotsialou\\ London School of Economics\\ \and Dilip Ravindran\\ Humboldt University of Berlin\\ \and
Dimitrios Xefteris\\ University of Cyprus}
\begin{document}
\maketitle

\begin{abstract}

 Liquid democracy is a system that combines  aspects of direct democracy and representative democracy by allowing voters to either vote directly themselves, or delegate their votes to others. In this paper we study the information aggregation properties of liquid democracy in a setting with heterogeneously informed truth-seeking voters\textemdash who want the election outcome to match an underlying state of the world\textemdash and partisan voters. We establish that liquid democracy admits equilibria which improve welfare and information aggregation over direct and representative democracy when voters' preferences and information precisions are publicly or privately known. Liquid democracy also admits equilibria which do worse than the other two systems. We discuss features of efficient and inefficient equilibria and provide conditions under which voters can more easily coordinate on the efficient equilibria in liquid democracy than the other two systems.

\textbf{Keywords}: Liquid democracy, delegation, strategic voting, information aggregation, Condorcet Jury theorem.

\textbf{JEL Codes}: D72

\end{abstract}
\pagenumbering{arabic} 

\onehalfspacing
\pagestyle{plain}
\setlength{\parskip}{5pt plus 5pt minus 5pt}
\newpage

\section{Introduction}
\label{sec: intro}

Recent progress on distributed ledger technologies and, in particular, in blockchain based aggregation processes has opened up the possibilities of new and improved  ways of voting\footnote{\cite{DBLP:conf/atal/Brill18} discusses emerging ideas of using the internet for new interactive collective decision-making processes.} (see e.g. \citealp{Dhillon20}). {\it Liquid Democracy}, which was first suggested by \cite{Miller1969} and discussed by \cite{Shubik1970},\footnote{Introducing the idea of "Homo Politicus": voters who are well informed about politics despite not being directly involved in it.} is one such electoral rule that combines aspects of direct  and representative democracy. Under liquid democracy, a voter chooses whether to vote themselves or to delegate their vote to another voter (potentially someone more knowledgeable on a particular issue). Voters can choose to vote or delegate differently for every decision.

Several versions of liquid democracy have been proposed in the literature (e.g. where only one round of delegation is allowed or where voters who receive delegated votes may further delegate these votes\footnote{Delegated votes can  transitively `travel' until a voter who casts is reached (usually known as a `guru').  This `travel' route is decided by the delegation rule applied, which is a function assigning the votes (of each voter who delegates) into a delegation `path' ultimately reaching a guru, see e.g. models with preferences over delegates: \citealp{KotsialouR20, ColleyGN20, Colley21, Brilletal22}. }), some of these versions have already been used in settings of applied interest (e.g. by certain political parties in Europe, Argentina and---more recently---also by mutual funds in the US).\footnote{Google Votes is an implementation of a liquid democracy platform, and Pirate parties in Europe have been using this for a while. Also, as mentioned in \citet{KotsialouR20} and  \cite{DBLP:conf/aaai/BloembergenGL19}, the Democracy Earth Foundation and the EU Horizon 2020 project WeGov-Now use similar platforms. Mutual funds in the US that adopted pass-through voting (e.g. BlackRock and Vanguard), give the option to their investors to either let the fund's management vote on their account, or to vote directly themselves (see, e.g., \citealp{malenko2023voting}). We emphasize that enhancing shareholders' voice within organizations by implementing more efficient and democratic management of their rights, such as internal liquid democracy systems where the role of `gurus' (i.e. their proxy advisors) can be accurately and transparently tracked, can be supported by distributed ledger technologies (see e.g. \citealp{DBLP:conf/atal/KotsialouRDMMMP18, RileyKDMMP19, Dhillon20, LafarreVdE21}).}  Figure \ref{fig:arch} illustrates the features of liquid democracy contrasted with direct and representative democracy.

 \begin{center}

\begin{tabular}{c c c c c}
 \textsc{\small{Representative Democracy}} & & \textsc{\small{Direct Democracy}} & &\textsc{\small{Liquid Democracy}}\\
  \includegraphics[width=0.2 \textwidth]{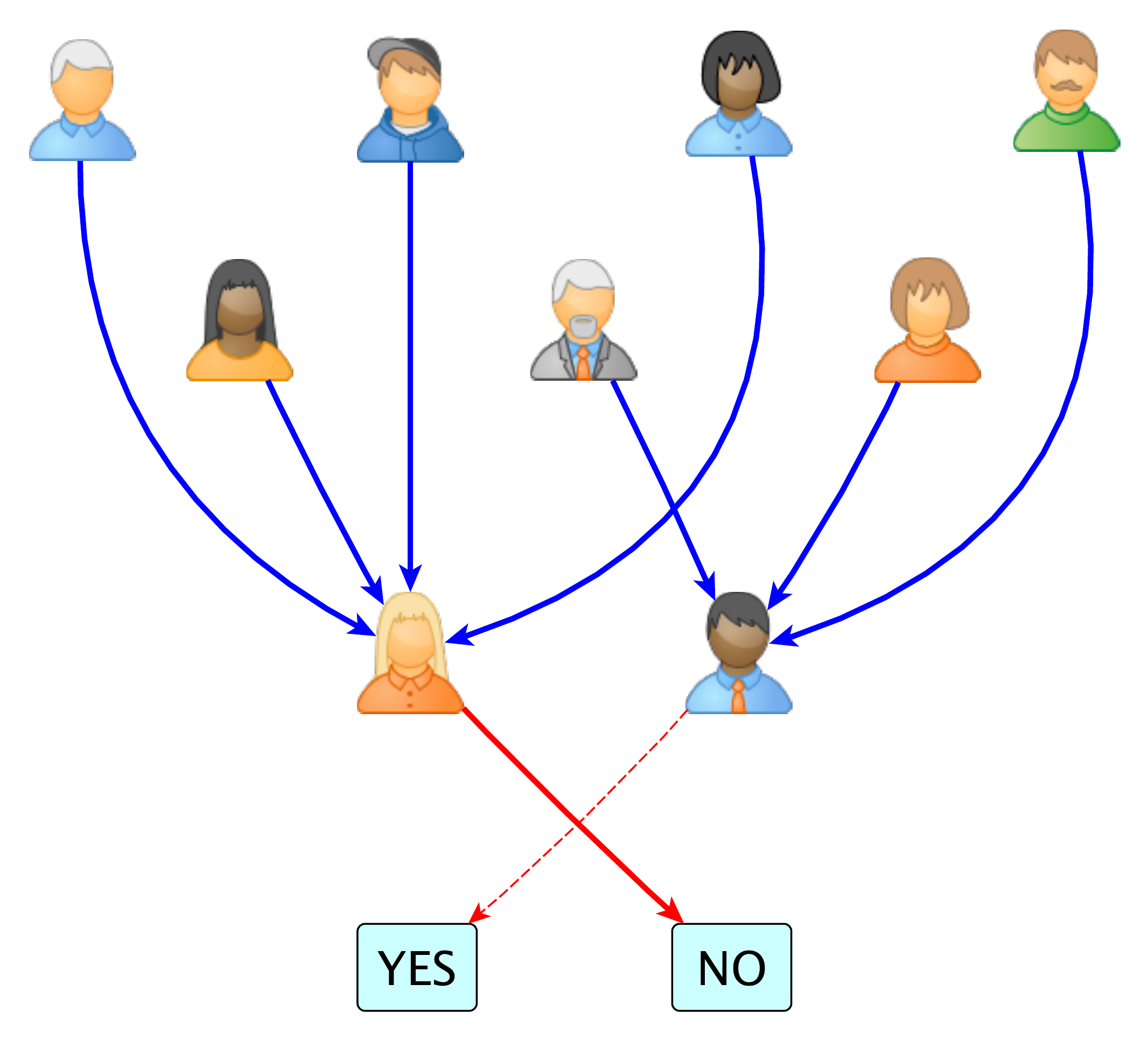} & &
  \includegraphics[width=0.2 \textwidth]{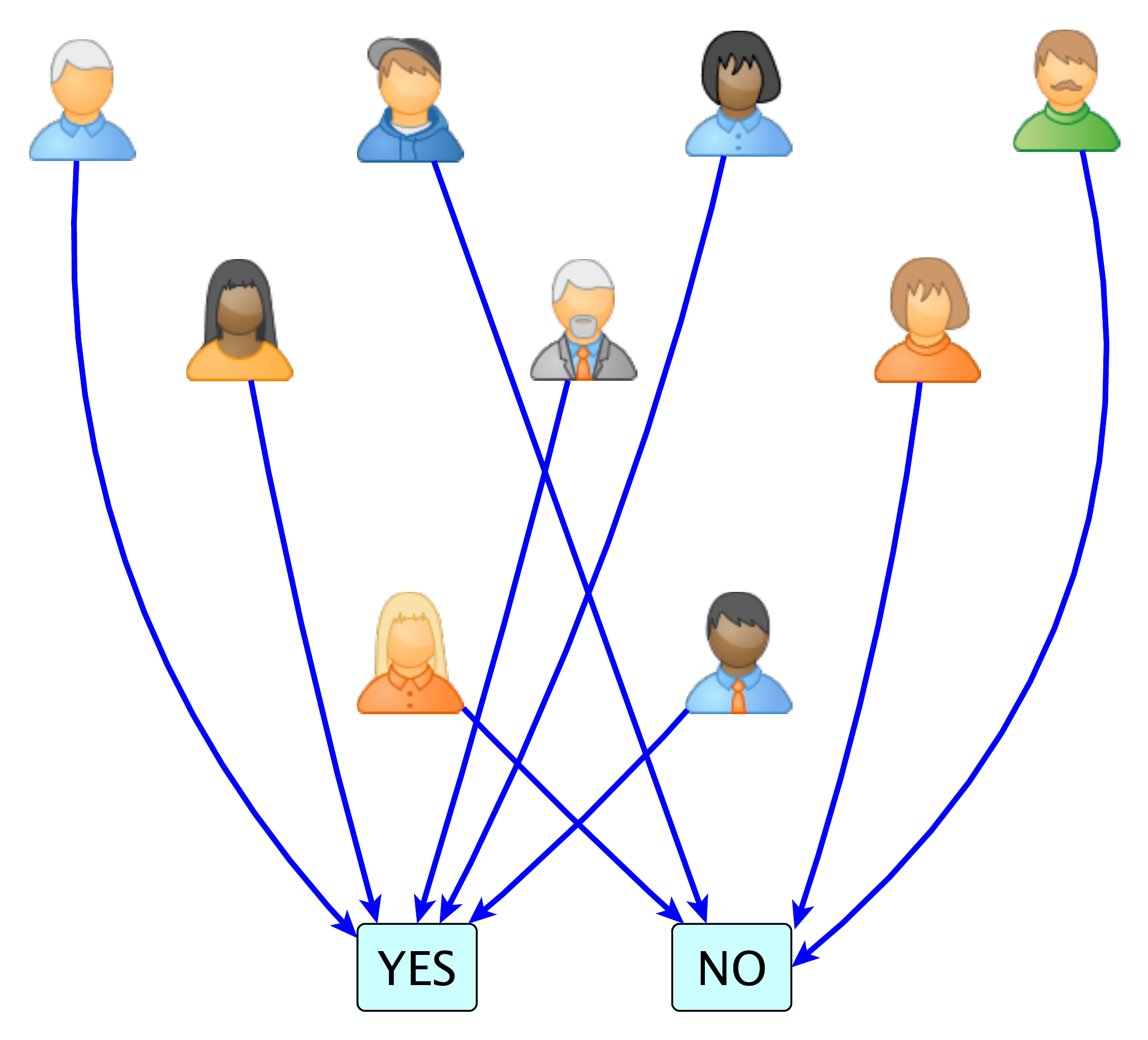} & &
  \includegraphics[width=0.2 \textwidth]{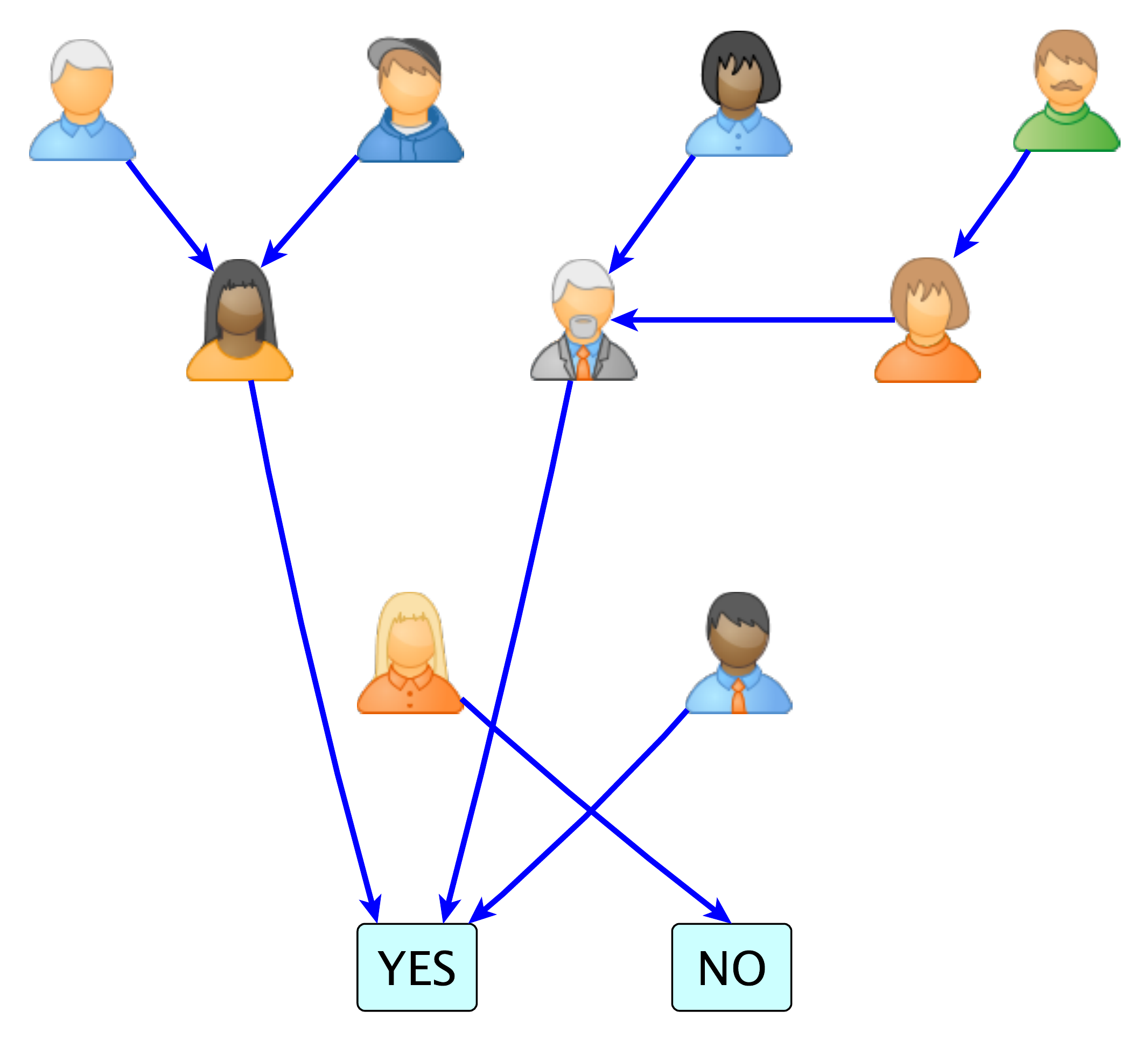}
 \end{tabular}
\\

\captionof{figure}{Design differences in three types of democratic models: representative, direct and liquid democracy. In representative democracy, citizens are represented by elected officials who then vote on  behalf of the first. In direct democracy, citizens vote directly. In liquid democracy, every citizen either votes directly or delegates to a representative of her choice.}\label{fig:arch}
\end{center}

 \cite{blumzuber2016}  present two main normative arguments in favour of liquid democracy. First, allowing for delegation to more informed people allows for better decisions when there is an objectively ``correct'' decision. Second, delegation allows greater equality in the following sense. Instead of creating two types of citizens with unequal power to influence policy (the representatives who vote on policy and ordinary voters who can only vote for representatives), liquid democracy allows voters to either vote directly on policy or, delegate their votes to other agents who can vote directly on policy.
 
 
In this paper we focus on the first issue: We study the information aggregation properties of delegation in elections with two candidates and two states of nature, where truth-seeking and partisan voters co-exist. Truth-seeking voters (or independents) have state-dependent preferences over candidates, while partisan voters have fixed state-independent preferences over candidates.
Truth-seeking voters have some information regarding the state of the world, but their information precision might be heterogeneous (i.e. some voters might be better informed than others about the true state of nature), and partisans are divided into two (potentially unequal) groups, depending on which candidate they support. We investigate whether liquid democracy improves upon direct and representative democracy in binary setups in the sense of welfare (the probability that the ex-post welfare maximizing alternative is chosen) and in the sense of information aggregation (the probability that the alternative matching the state of nature is chosen).

A priori, it is not obvious if and when delegation improves outcomes. When the preferences and information quality of each voter is private information, it is possible for truth-seeking voters to unwittingly delegate to partisans or to other truth-seeking voters who are worse informed. Such delegation can hurt welfare and information aggregation. Even when the preferences and quality of information of all voters are commonly known, the answer is not obvious. In such settings, delegation introduces a tradeoff (\citealp{DBLP:conf/aaai/BloembergenGL19,kahng2018,halpern2021defense,Armstrong21}): while delegation from a poorly informed voter $i$ to a better informed voter $j$ increases the electoral power of  $j$ (which is desirable), it leaves $i$ without a vote to express her own private information and hence leaves valuable information out of the aggregation process (which is undesirable). Whether the net effect of delegation is positive or negative, depends on the exact behavior employed by the voters (\citealp{kahng2018,halpern2021defense}). To our knowledge, the literature has not yet determined the net effects of this tradeoff; we investigate how strategic voters resolve it in equilibrium.


Our main contribution is to show that delegation can improve information aggregation in finite collective decision making  of any size, both when the preferences and the information precisions of voters are known, and when they are not. 

When voter types are common knowledge, we show that liquid democracy admits a better equilibrium---in terms of information aggregation and welfare---than direct and representative democracy. In general, the best such equilibrium is hard to characterize, but we provide some of its properties and characterize it in the case of an electorate containing sufficiently many uninformed voters. In order to tractably study how voters can use delegation to improve aggregate information in other cases, we then focus on a class of committees in which all voters are truth-seeking and are either nonexperts (who possess information of homogenous quality) or experts (who are better informed than nonexperts). 
Under some conditions, we characterize how voters solve the tradeoff posed by delegation optimally amongst a natural class of equilibria.\footnote{We look at \emph{neutral} equilibria in which voters adopt strategies that are symmetric with respect to the state. This equilibrium selection is natural in environments which are symmetric with respect to the state.} Efficient information aggregation requires some voters delegate to better informed voters such that the number of votes held by non-delegators optimally reflects the precision of their information. Hence, under the efficient equilibrium, it is  worth wasting some information (due to delegation) in order to form a sub-electorate of decision makers who are `optimally weighted' relative to their information qualities. 

While collective decisions can perform strictly better with delegation than without, as we show, 
liquid democracy also admits inefficient equilibria. These inefficient equilibria can do worse than the best equilibria of direct democracy and representative democracy, meaning the comparison between the mechanisms is not unambiguous due to the plethora of equilibria. Indeed, it is well known that multiplicity of (inefficient) equilibria is pervasive in voting games, where there are many situations in which no voter is pivotal. There can often be numerous Nash equilibria in which players use, for instance, weakly dominated strategies. We show that delegation can exacerbate this problem of multiplicity; for instance, additional inefficient equilibria in liquid democracy arise when there is `too much' delegation to a small set of voters. The best equilibria in liquid democracy requires coordination between voters on who delegates to whom; we show that mis-coordination can create incentives for voters to delegate much more than what is efficient. This concern of vote concentration in liquid democracy has been noted by \cite{kahng2018}, by \cite{Campbell2022} who find subjects inefficiently overdelegate in an experiment on liquid democracy, and in practice (the German Pirate Party found that a few `celebrities' were delegated large numbers of votes when implementing liquid democracy)\footnote{\label{new_scientist}https://www.newscientist.com/article/mg23531424-500-bitcoin-tech-to-put-political-power-in-the-hands-of-voters/25}, but to our knowledge, it has not been discussed as an equilibrium outcome.

We then identify a condition under which issues of mis-coordination in liquid democracy may not be a huge concern. Eliminating weakly dominated strategies is often used to rule out implausible Nash equilibria in voting games. Taking this further in this context, Iterated Elimination of Weakly Dominated Strategies (IEWDS) is a powerful tool for predicting outcomes (\citealp{dhillon2004})\footnote{The concept of IEWDS has the drawback that the order of elimination may matter in reaching the dominance solvable outcome. However, when voters have strict preferences over alternatives, as in our setting, \citet{marx2000} show that the order of elimination does not affect the outcome. Finally, a common knowledge justification for IEWDS was provided by \citet{rajan1998}.}. We show that the game with delegation is dominance solvable as long as at least one truth-seeking voter is sufficiently well informed, while the game without delegation (i.e. direct democracy) is not. Importantly, the outcome of IEWDS is the efficient one and requires only two rounds of iterated elimination, i.e. it is cognitively easy. The IEWDS solution also only relies on type-symmetric strategies; together this suggests that the presence of a sufficiently well informed voter may make it easy for voters to coordinate on the efficient equilibrium in liquid democracy. In contrast, reaching efficiency under direct democracy may be much harder for voters as the game is not dominance solvable even in the simplest settings and efficiency may require voters of the same type to employ asymmetric strategies. In absence of a coordination mechanism, electoral accidents may be more likely to occur under direct democracy. Finally, it is worth noting that voter participation is greater under the IEWDS solution with delegation than under the efficient equilibrium without delegation. With delegation, all votes are cast either directly or indirectly (i.e. via delegates). When delegation is not allowed, substantial abstention takes place in every efficient equilibrium even when voting is costless (\citealp{swing1996}).
 
What happens when the voters' types are their private information? In this case, vote delegation introduces additional trade-offs, leading to a substantially more involved analysis. On the one hand, liquid democracy can increase the vote-share of the efficient alternative if less informed voters transfer their ballots to better informed truth-seeking voters; but it may also harm the electoral prospects of the efficient alternative if uninformed voters delegate their votes inadvertently to partisans. Indeed, since the players types are unobservable, mistakes in vote transfers are unavoidable.
Again, we prove that the best undominated equilibrium in liquid democracy is always at least as good as the best undominated equilibrium in direct and representative democracy. This is a very strong result, since it holds for every possible type of distribution and society size, and establishes that, despite the additional trade-off, delegation is  welfare improving.\footnote{In settings composed only of truth-seeking voters, it is known that communication before voting can also boost information aggregation (e.g. \citealp{gerardi2007deliberative}). However, it is not clear that it would do so when there is also a group of partisan voters---or, merely, a chance that such voters exist---in the electorate.}

Overall, our main findings combined strengthen the case for liquid democracy. To our knowledge, this is the first paper to establish that rational truth-seeking voters can always exploit delegation to improve their welfare, when partisan voters also exist in the electorate. However, liquid democracy does not guarantee that better outcomes will prevail, as the worst equilibrium of liquid democracy is  worse than the best of direct or representative democracy. Despite that, liquid democracy can lead to efficient information aggregation in some cases without the demanding cognitive requirements (e.g. with respect to higher order beliefs) necessary for direct democracy to aggregate information well (see, for instance, \citealp{swing1996}). This indicates that liquid democracy deserves further academic exploration (e.g. by means of laboratory and field experiments---see, also, \citealp{Campbell2022}). 

The rest of the paper is organised as follows: Section \ref{sec: lit} discusses the related literature, Section \ref{sec: prelim} presents the definitions and notation we use, Section \ref{sec: pub} deals with the case when information on voter types is common knowledge, Section \ref{sec: dominance} discusses dominance solvability when types are common knowledge, Section \ref{sec: private} analyses the case when voter types are  private information, and we conclude and discuss potential future avenues of research in Section \ref{sec: conclusion}.

\section{Related Literature} 
\label{sec: lit} 
There is a vast literature on information aggregation in two candidate elections starting with the seminal work on the Condorcet Jury theorem (1785), which showed that, with two alternatives, two states of the world and common values, if each individual voted for the correct alternative with probability---same across subjects---strictly greater than half, then  the probability that a majority would choose the correct alternative  $(1)$ is higher than the probability that any one individual would, and $(2)$ converges to one as the society grows large. The theorem assumed sincere voting and follows from the law of large numbers. \cite{austen1996} showed that sincere voting was not always rational in such a setting.  \cite{mclennan98} (for common value elections) and \cite{FeddersenPesen1997} (with heterogeneous in information voters and private information on voter types)  show that,  for two candidate elections, there is always an equilibrium that aggregates information efficiently, asymptotically as the size of the electorate goes to infinity. They do not allow abstention. 
 \cite{swing1996},   allow for heterogeneous (in preference) voters who are either fully informed or fully uninformed,  private information and allow abstention. In their setting where the size of the population is unknown, yet large, and truth-seeking and partisan voters co-exist; they show that elections aggregate information efficiently, i.e. the equilibrium outcome is asymptotically the same as though information on the state were common knowledge. In equilibrium, informed truth-seeking voters and partisan voters vote for the alternative that they support, while uninformed truth-seeking voters employ  non-trivial behavior.  More specifically, they are subject to a ``swing voters curse'': a substantial fraction of voters abstain strategically to allow informed truth-seeking voters to be decisive, even when voting is costless. For their main result, they do need large elections.\footnote{As shown by \cite{shotts2006signaling} and \cite{meirowitz2009pivots} moderate voters need not only abstain to avoid diluting the informativeness of the election, but also to signal their private preferences to the politicians in settings of repeated elections.} 
In contrast, we show that delegation can improve on simple majority voting with abstention in elections of any size, with and without private information on voter types.

Also close to our paper is work in the weighted majority voting (WMV) literature (\citealp{nitzan1982, shapley1984optimizing, ben2015voting}). Papers in this literature often study voters making a binary decision through majority voting and wanting this decision to match some underlying state of the world. Typically, voters have heterogeneous qualities of information about this underlying state, and this literature asks (among other questions) what the optimal weight (or number of votes) a social planner should \emph{exogenously} assign to each voter in order to maximize the probability the election outcome is correct. The key difference between  this literature and liquid democracy, is that in liquid democracy voters obtain different numbers of votes \emph{endogenously}.\footnote{Bouton et al. (2022) analyze a model each voter is assigned a number of votes and decides instrumentally how many of them she wants to use (i.e. she is allowed to partially abstain). As we show in section 6, allowing agents to delegate voting power has distinct effects from allowing partial abstention, and can lead in certain circumstances to higher welfare.} The results in certain common-interest special cases of liquid democracy relate closely to those in the WMV literature; we discuss the connection with \citet{nitzan1982}, in particular, in section 4.

 \cite{DBLP:journals/corr/ChristoffG17} and \cite{kahng2018} also study information aggregation with delegated voting, but do not employ an equilibrium approach. 
 \cite{DBLP:journals/corr/ChristoffG17}
focus on the aggregation of individual choices to  social choice and the unintended effects of delegation on the rationality postulates satisfied by direct voting. \cite{kahng2018} study the case where voters have different levels of information, complete information and a network setting. They show that when voters delegate only to more informed voters who are within their local network, then delegation can lead to worse outcomes than simple majority voting, due to the concentration of power. They argue that if alternative behavioral assumptions are imposed then delegation can lead to better outcomes. The key point of the paper is that too much delegation to the same voters (given the network structure) risks losing out on valuable information. \cite{halpern2021defense}, meanwhile, study a setting similar to that of \cite{kahng2018} and find sufficient conditions satisfied by examples of delegation mechanisms under which LD must improve upon DD in large committees. In both these papers, voters are not strategic and behave as perscribed by a given mechanism. We complement their work by showing that when  behavior rules are not fixed but rather endogenously determined by rational voters then delegation can lead to welfare improvements in equilibrium but need not do so.   \cite{DBLP:conf/aaai/BloembergenGL19}, study equilibrium behaviour in a network setting. Voters  know which of two outcomes they would prefer with a probability between 0.5 and 1 (voter accuracy).   They focus on the decision problem of voters, when voters have a choice of direct voting, which incurs a cost (of getting information), and delegated voting, when they know the accuracy type of the other voters and the probability that they have a similar preference, but they do not know the true type. The authors show existence of Nash equilibria and average accuracy achieved in a network setting. In contrast, we study simultaneous delegation and costless voting and we include abstention, complete and incomplete information settings.
 While we focus on information aggregation, \citet{green2015direct} also studies a non-strategic model of liquid democracy but focuses on potential preference aggregation benefits of the system: unlike in many implementations of representative democracy, voters can delegate to someone who has close preferences to them, issue by issue. 
Recently, \cite{Armstrong21} in a pure common interest and complete information game, with costly voting in a fully connected network, show theoretically that delegation always reaches an equilibrium with weakly higher group accuracy at identifying the "ground truth" (objectively correct) outcome.  In their model, there is a first stage of delegations in which every voter can observe the delegations that have already taken place, and which ends only when no voter can improve the probability of selecting the correct alternative by further delegation. This is what ensures that only those delegations, which improve the chance of a correct outcome, happen.  In contrast, our model has partisan voters, has costless voting, simultaneous delegation and voting, and we also handle the incomplete information setting which might be more realistic.\footnote{Notice that, introducing a cost to voting makes it more attractive to delegate.}
However, their experiments with machine learning classifiers show that
neither optimal delegations nor efficiently computable delegation strategies significantly improve accuracy in small or realistically sized electorates, respectively. 
 \cite{Campbell2022} study a model of liquid democracy both theoretically and experimentally. Their theoretical model is nested by ours (modulo some technical details). Theoretically, they characterize an important class of liquid democracy equilibria in specific types of common value electorates, with no abstention; we discuss their theoretical results in Section 6. Experimentally, they find that despite the theoretical possibility of liquid democracy to improve outcomes, subjects systematically delegate too much and do worse with liquid democracy than simple majority voting. While we also discuss potential issues of overdelegation in liquid democracy (Section \ref{sec: pub}), in Section \ref{sec: dominance} we describe a setting where it is cognitively easier for voters to coordinate when delegation is possible; such settings may improve the odds for liquid democracy to perform well in  experimental testing.


Beyond the specific issue of delegation, our work also relates to studies which try to assess the information aggregation properties of different electoral systems, most of which are conducted in an asymptotic setup (i.e. they are relevant only for large elections, while most of our results pertain to groups of any size). \cite{bhattacharya2013preference} and \cite{Barelli2022} extend the Condorcet's Jury theorem to heterogeneous state-dependent preferences and to general state and signal spaces respectively, and show that information aggregation depends on the complexity of the preference and information structure. \cite{Goertz2011}, \cite{Bouton2016} and \cite{Ahn2016} study the properties of approval voting and other scoring rules, for any electorate size, when a divided majority occurs due to disagreements among the truth-seeking voters regarding the most likely state of the world. These papers focus on the case where the majority needs to coordinate in order to eliminate the possibility of inefficient outcomes, and show that approval voting performs better than plurality in settings where truth-seeking voters are identical in terms of information precision. As far as runoff systems with two (or more) voting rounds are concerned, \cite{Piketty2000} observes that due to the existence of multiple voting rounds, the majority group should have additional opportunities to coordinate and aggregate their information. \cite{Martinelli2002} shows that efficient aggregation of information is feasible in equilibrium under a two-round runoff rule, in the setting of a divided majority with three alternatives. \cite{Tsakas2021} extend this possibility result to more general settings. \cite{herrera2019marginal} assess theoretically the information aggregation properties of more ``proportional'' systems. That is, when  a small change in the vote-share distribution affects the outcome, even if it does not affect the winner of the election. They find that in large societies, relatively uninformed voters abstain, and information is aggregated efficiently. Finally, \cite{mcmurray2017voting} and \cite{prato2017wisdom} show that voting might be less efficient in aggregating information, when the policy alternatives are proposed by self-interested candidates.  Our work contributes to this literature by showing that, beyond the ballot space, the aggregation method, and the process through which alternatives emerge, the possibility of endogenous vote transfers can crucially affect the quality of electoral outcomes, even in electorates with finite size.

\section{A model of Liquid Democracy} \label{sec: prelim}

Consider a set of voters $\mathcal{N} = \{1,...,N\}$ who vote to make a
single binary decision between two alternatives $C=\{A,B\}$. The alternative
receiving a simple majority of votes cast is the election outcome $%
\mathcal{O} \in \{A,B\}$ (with ties broken uniformly at random). There is an unknown state $\omega \in \{a,b\}$ and
voters have common prior probability $Pr(\omega|\omega=a)=\pi$, with $\pi \in (0,1)$. 

\textbf{Types of voters.} At the start of the game, voter $i$'s type $t_i$
is realized (we consider types being privately or publicly observed).
After this, $i$ receives a private signal $s_i \in \{a,b\}$ which has a
distribution depending on her type. Voters' types consist of two components: (1)
their preferences over the election outcome under full information, and (2)
the quality of their private information about $\omega$.

A voter $i$ has preferences $p_i \in \{A,B,I\}$, where $A$ indicates preference for alternative $A$, $B$ for alternative $B$, and $I$ for being `independent'. If 
$p_i=I$, then voter $i$ prefers that  the election outcome matches the state of the world and gets an ex-post
payoff equal to 1. The ex-post payoff of voter $i$ is denoted by 
$$u_{p_i}(\mathcal{O},\omega),$$
where $\mathcal{O}$ is the election outcome, $\omega$ the state of the world, and $p_i$ the type of voter $i$. If 
$p_i=I$, then the independent voter $i$ prefers that  the election outcome matches the state of the world and gets an ex-post
payoff equal to 1, that is
$u_{I}(A,a)=1$ and $u_{I}(B,b)=1$.
Voters
with $p_i \in \{A,B\}$ are `partisans' and get an ex-post payoff equal to 1 when the election outcome matches their preference regardless of the state, i.e.  $u_{A}(A,\omega)=1$ and  $u_{B}(B,\omega)=1$.


A voter $i$ has information quality $q_i \in [0.5,1]$ drawn from a commonly
known distribution $Q_i$ (note that $q_i$ is independent of $p_i$).\footnote{%
It doesn't matter that $q_i$ is independent of $p_i$. If not, all we would
care about is the distribution of $q_i$ conditional on $p_i=I$.} A higher $%
q_i$ means the voter is more likely to receive the correct signal since $%
Pr(s_i=\omega|\omega)=q_i$, where $q_i$ is the precision of voter $i$'s
information.\footnote{%
We  assume that signal precision is the same in both states.} If $q_i=1$,
the voter is perfectly informed and, if $q_i=0.5$, the voter is then perfectly
uninformed. Voter $i$'s information precision $q_i$ is  irrelevant when it comes to partisan voters
who prefer one alternative regardless of the state.

A voter $i$'s type is given by $t_i=(p_i,q_i)$, where $ (p_i,q_i) \in T = \{A,B,I\} \times [0.5,1]$. Then we denote a
profile of all voters' types as $t=(t_1,...,t_N) \in T^N$.

\textbf{Private signals.} After types are realized (and observed
privately or publicly), each voter $i$ receives a private signal $s_i$
with information precision $q_i$.

\textbf{Strategies.} We call a voter's choice of what actions to take after
her type is realized an \emph{interim strategy}. This type of strategy is useful when types are publicly observed. A full contingent plan of what
actions a voter takes as a function of her realized type, which is
useful when types are private, is an \emph{ex-ante} strategy. We next describe
these types of strategies.

\textbf{Interim Strategies.} When delegation is not allowed, a (pure)
interim strategy for a voter $i$ is a choice of action as a function of $i$%
's signal realization chosen after her type $t_i$ is realized. When delegation is not
allowed, an interim strategy for $i$ is given by the mapping $\sigma_i: \{a,b\}
\rightarrow \{a,b,x\}$. Function $\sigma_i$ maps  $i$'s signal realization $s_i$
to a decision on whether to cast her vote either in favor of alternative $A$, alternative $B$, or
to abstain (the choice denoted by $x$). If $\sigma_i(a)=a$ and $\sigma_i(b)=b$, we say voter $i$
votes \emph{sincerely}, while if $\sigma_i(a)=\sigma_i(b)=a$, we say $i$ votes \emph{unresponsively} for $a$ (voting unresponsively for $b$ is defined similarly). When delegation is allowed, an interim strategy for $%
i$ is a mapping $\sigma_i: \{a,b\} \rightarrow \{a,b,x,d_{j \neq i}\}$,
where the choice of $d_{j \neq i}$ is a voter $j$ that $i$ chooses to delegate
to. If $i$ delegates to $j$ and $j$ does not delegate, then $i$'s vote is cast for whichever
alternative $j$ votes for (or neither if $j$ abstains). If $j$
delegates her vote to another voter $k$, $i$'s vote is delegated to $k$ along with $j$'s (that is, we allow for transitive delegation). We call  a \emph{delegation cycle} any sequence of voters $i_1,...,i_K$, with $K \geq 2$, such that each voter $i_k$, with $1 \leq k < K $, delegates to voter $i_{k+1}$ and $i_K$ delegates to $i_1$.  In the case of a delegation cycle, all votes held by the voters in such a cycle are cast in abstention. \footnote{None of our results depend on allowing for transitive delegation or nullifying all votes cast in a delegation cycle. While changing either assumption may change the set of equilibria of the game, they will not affect our main results concerning LD having a better equilibrium than DD or RD, nor will they affect our results characterizing best equilibria and best neutral equilibria in Section 4.} We assume that if a voter votes for an alternative (or abstains), all votes she holds are cast for this alternative (or in abstention).\footnote{Similar to the previous footnote, none of our results would change if voters were allowed to split the votes they hold amongst multiple alternatives. However, allowing for this would require strategies to allow for complicated conditional behavior as delegation and voting happen simultaneously.}

\textbf{Ex-ante Strategies.} To describe a voter's strategy before their
type is realized, we need them to choose a contingent plan for every
possible type realization. An ex-ante strategy for voter $i$ 
is a mapping $\Sigma_i: T \times \{a,b\} \rightarrow \{a,b,x\}$ when
delegation is not allowed and is a mapping $\Sigma_i: T \times \{a,b\}
\rightarrow \{a,b,x,d_{j \neq i}\}$ when delegation is allowed. Voter $i$ chooses an interim strategy
for every possible type they could be. For any ex-ante strategy of voter $i$, 
$\Sigma_i,$ and any possible realized type of $i$, $t_i$, let $\Sigma_i(t_i)$ be the
interim strategy of voter $i$ with type $t_i$ under strategy $\Sigma_i$.

\textbf{Timing.} The timing of the game is as follows: \begin{enumerate}
    \item Nature draws state $\omega$.
    \item Types: if there is complete information on types, the vector of types $(t_1,...,t_N)$ is commonly observed. If there is incomplete information on types, voter $i$'s type $t_i=(p_i,q_i)$ is drawn from $P_i \times Q_i$ and is privately observed.
    \item Voters' private signals are realized.
    \item Voters choose to vote/abstain/delegate. If there is complete information on types, they do so following interim strategies; with incomplete information on types they follow ex-ante strategies.
    \item The election outcome is determined.
    \item Payoffs are realized.
\end{enumerate}

\textbf{Strategy Profiles.} An interim strategy profile $\sigma=(%
\sigma_1,...,\sigma_N)$ is a vector of interim strategy profiles for each
voter. Similarly $\Sigma=(\Sigma_1,...,\Sigma_N)$ is an ex-ante strategy
profile. For any ex-ante strategy profile $\Sigma$ and profile of realized
types $t$, let $\Sigma(t)=\left(\Sigma_1(t_1),...,\Sigma_N(t_N)\right)$ be the profile
of interim strategies implied by $\Sigma$ at $t$.

\textbf{Expected utilities and equilibrium.} Given a publicly observed
type profile $t \in T^N$ and an interim strategy profile $\sigma$, let $%
U_i(\sigma,t)$ be voter $i$'s expected utility from voters following interim
strategy profile $\sigma$ and $U_i(\sigma_i^{\prime },\sigma_{-i},t)$ be the
same expected utility when voter $i$ deviates to interim strategy $\sigma_i^{\prime }$.

When types are privately observed, we are interested in  ex-ante
expected payoffs. Fixing ex-ante strategy profile $\Sigma$, let 
\begin{align}
EU_i(\Sigma) = \sum_{t \in T^N} \left( U_i\left(\Sigma(t),t\right)\cdot Pr(t) \right)
\end{align}
be $i$'s expected
utility and $EU_i(\Sigma_i^{\prime },\Sigma_{-i})$ be the same utility when $i$
deviates to $\Sigma_i^{\prime }$.

We look for Bayes Nash equilibria (BNE) of the game in interim strategies
when types are publicly observed and in ex-ante strategies when types are
private information. In both the publicly and privately observable types' cases, we will select for BNE in which weakly dominated strategies are not employed; when we say `equilibrium', henceforth this means BNE in undominated strategies. As it is typical in voting games, there  typically are many equilibria; ruling out equilibria in dominated strategies  alleviates this issue.

\textbf{Direct and Representative Democracy. }Our main aim is to
compare equilibrium outcomes of liquid democracy, with equilibrium outcomes of direct (DD)
and representative democracy (RD). To this end, we need to provide a definition
of these two alternative systems, which formally differ from liquid 
democracy (LD) only in terms of the involved interim strategies. In DD, an interim strategy for voter $i$ is given by mapping $\sigma
_{i}:\{a,b\}\rightarrow \{a,b, x\}$. 

RD differs from DD and LD in that there is a fixed set of representatives $\mathcal{J}$. Set $\mathcal{J}$ can contain any subset of voters in $\mathcal{N}$ and additionally contains two representatives $a^*$ and $b^*$, where $a^*$ is a partisan representative, who is an advocate for alternative $A$ and we assume always (mechanically) votes for $a$, and $b^*$ is an advocate for $B$ who always votes for alternative $B$. Representatives $a^*$ and $b^*$ are hence not strategic and are  not included in $\mathcal{N}$. In RD, an
interim strategy for $i\notin J$ is a mapping $\sigma
_{i}:\{a,b\}\rightarrow \{x,d_{j\in J}\}$ and for $i\in J$ is a mapping $\sigma _{i}:\{a,b\}\rightarrow \{a,b, x\}$.\footnote{Other than the existence of $a^*$ and $b^*$, Representative Democracy (RD)  differs from Liquid Democracy (LD)  in two ways:  first the potential delegates are fixed exogenously and second, if a voter delegates to a representative this delegation applies for all issues/decisions faced by the committee. Liquid Democracy meanwhile allows delegates to be different for different issues. However in this paper we only focus on the first difference, as our model concerns a single issue.} The assumption that representatives $a^*$ and $b^*$ always exist, vote for their preferred alternative, and are not included in $\mathcal{N}$ is for technical convenience in comparing RD with DD and LD in common interest committees.  

In summary, under direct democracy DD, agents can assign votes either to $a$ or to $b$
(or abstain), while under representative democracy they can assign votes to a fixed set of representatives. 


\textbf{Comparing mechanisms.} There are multiple measures of interest in comparing liquid democracy with representative and direct democracy. One natural measure of the welfare attained under an equilibrium of a mechanism is the probability it implements the `majoritarian outcome' (i.e. the alternative that is preferred ex post by most voters). Another important measure is the probability the mechanism implements the `correct' outcome: i.e. the outcome that matches the state of the world. Implementing the correct outcome is not only important to independent voters, but it also gives a measure of how well the committee aggregates voters' private information.

Note that if an equilibrium of a mechanism implements the correct alternative with higher probability than another equilibrium of the same or of a different mechanism, then it also implements the majoritarian outcome  with higher probability. This is so because for all type draws such that the truth-seeking voters are not pivotal as a group (i.e. their choice cannot affect the outcome when players use undominated strategies), then both equilibria lead to the majoritarian outcome with certainty and to the correct outcome with the same probability. Observe that in the remaining type draws the correct alternative coincides with the majoritarian outcome. Therefore, since one equilibrium implements the correct alternative with higher probability than the other unconditionally, it must be the case that it implements the correct alternative---and hence also the majoritarian outcome---with higher probability conditional on the truth-seeking types being decisive. Since the majoritarian outcome is implemented with higher probability under the former equilibrium than the latter conditional on the group of truth-seeking voters being decisive, and with equal probability conditional on truth-seeking voters not being decisive, it leads to the implementation of the majoritarian alternative with higher probability unconditionally. We call the equilibria that maximize the probability of the majoritarian outcome as {\it best} equilibria. 

In the next sections, we compare LD with DD and RD in terms of the best and worst equilibria under the three voting sytems. We first analyse the complete information setting and then the incomplete information setting.    

\section{Complete information about types} \label{sec: pub}

In this section, we assume each agent's type is common-knowledge. We are interested in the structure and information aggregation properties of equilibria; of particular interest are equilibria which best aggregate information (and hence implement the majoritarian outcome with the highest probability). We compare the best equilibria in undominated strategies  in  LD with both DD and RD. While we cannot characterize the set of equilibria, we show the best equilibrium of LD does better than that of DD and RD, and provide some examples of committees in which LD does strictly better. Below we provide the intuition for our first result (Proposition \ref{prop_bestequil}).


 Let $n_I$ be the number of independent voters and $n_A$ and $n_B$ be the number of partisans for $A$ and $B$ respectively. Assume $n_I \geq 1$ or the problem is uninteresting. The existence of an equilibrium in undominated strategies is straightforward in LD, DD and RD when types are commonly known. First, in LD and DD it is a weakly dominant strategy for every partisan $i$ with $p_i=A$ ($p_i=B$) to vote for $A$ ($B$) regardless of her signal. This is simply because the partisan strictly prefers this strategy when pivotal and is indifferent across all strategies otherwise; moreover, every voter is pivotal with probability $>0$, given some strategy profile of her opponents.\footnote{
E.g. w.l.o.g consider an A partisan. If $N$ is even, let $\frac{N-2}{2}$ voters vote for $A$ and $\frac{N}{2}$ vote for $B$. If N is odd, then let $\frac{N}{2}$ vote for $A$ and  $\frac{N}{2}$ vote for $B$.} Given this, in every equilibrium of LD and DD, each partisan votes unresponsively in favor of her preferred alternative.  Meanwhile in RD, $a^*,b^* \in \mathcal{J}$ always vote (mechanically) for their preferred alternatives. Other partisans in $\mathcal{J}$ have weakly dominant strategies of voting for their preferred alternative, and partisans not in $\mathcal{J}$ have a weakly dominant strategy of delegating to either $a^*$ or $b^*$. Partisans will adopt these strategies in any equilibrium in undominated strategies. Having pinned down the behavior of partisans in all three mechanisms, in the case that $|n_A-n_B|>n_I$, in any equilibrium independent voters are (collectively) never pivotal and hence any combination of non-weakly dominated strategies they use constitutes an equilibrium. The election outcome in equilibrium in this case is deterministic. The interesting case is hence when $n_I \geq |n_A-n_B|$. Having fixed the equilibrium behavior of partisans, from the perspective of independents, the game is common-interest. 
As types are commonly known and strategy spaces are finite, there is a profile of non-weakly dominated strategies that maximizes payoffs for independents and, by McLennan (1996), this profile constitutes an equilibrium for each of the LD, RD and DD systems. 
Note that every strategy that is feasible in DD/RD is also feasible in LD and leads to weakly lower welfare compared to the strategy profile that maximizes the probability of implementing the correct alternative in LD---which constitutes an equilibrium.

\subsection{Liquid Democracy}
\label{sec: LD}

The best equilibrium for independents  in LD is of interest beyond just proving existence; it is also the equilibrium that best aggregates the information of the committee. In general, we cannot pin down this strategy profile; the following Proposition tells us that such an equilibrium in LD does better than the corresponding equilibrium in RD/DD.

\begin{prop} \label{prop_bestequil}
In liquid democracy with commonly known types, there is a best (in terms of information aggregation) equilibrium $\sigma$ which does at least as well as the best equilibrium in direct and representative democracy.
\end{prop}

The proof follows from the logic at the end of the previous subsection. We note that whenever the best equilibrium of liquid democracy requires use of delegation, this best equilibrium will do strictly better than that of direct democracy. The comparison between liquid and representative democracy is a bit more nuanced. It can be possible to implement the best equilibrium of liquid democracy under representative democracy if the set of representatives $\mathcal{J}$ contains exactly the set of voters who are delegated to under the best equilibrium of liquid democracy. If some voters who are delegated votes under liquid democracy's best equilibrium are not contained in $\mathcal{J}$, then representative democracy will do strictly worse than liquid democracy. Meanwhile, if the set of representatives $\mathcal{J}$ is too large then representative democracy can also fail to do as well as liquid democracy. For instance, if $\mathcal{J}$ contains all voters in the committee, then representative democracy is equivalent to direct democracy. Hence the welfare comparisons of liquid and representative democracy will depend on the specific structure of the committee.

For some simple types of committees, we can pin down the best equilibrium in LD. This will give some insight into the role delegation plays in information aggregation. First, we introduce a result from the weighted majority voting (WMV) literature, which helps us to characterize equilibria for special cases.

\textbf{Connection to Weighted Majority Voting (WMV):} 
The WMV literature assumes all voters are independents and considers binary state models in which voters make a binary decision seeking to match the state. Instead of allowing for delegation before voting, voters are assigned weights (a number of votes which can be any weakly positive real number) exogenously. Voters receive private signals with varying precisions and cast all of their weight (or votes) sincerely. The alternative $A$ or $B$ receiving the majority of votes  wins and voters prefer this outcome to match the state of the world. 

An important result (found in multiple papers starting with \citet{nitzan1982}) is that there exist `optimal' weights for each voter $(w^*(1),...,w^*(N)) \in \mathbb{R}_{++}^{N}$ such that the following holds: Under these optimal weights and under sincere voting, the outcome of the election is efficient -- or first-best -- in the sense that it is always the same outcome as if a single decision maker (with the same preferences as the voters) looked at all private signals and made the decision unilaterally. When voters are assigned these optimal weights, the committee \emph{efficiently aggregates} the private information of all committee members in the sense that the election outcome is the first-best (absent asymmetric information) outcome with probability equal to $1$. \citet{nitzan1982} show that for $i \in \mathcal{N}$, the optimal weight is $w^*(i)  = \log(\frac{q_i}{1-q_i})$, where $q_i$ is the precision of voter $i$.\footnote{Note that if $q_i=1$ then the optimal weight is infinite. This is because voter $i$ always knows the state and hence should optimally dictate the election outcome.} Under these weights, each voter's power optimally matches the quality of their information. If, for instance, $w^*(i) = 2w^*(j)$ for some $i,j$, this means that in terms of information content, voter $i$'s private signal is worth two of voter $j$'s private signal. If $s_i=a$ (voter $i$ receives information in favor of state $A$), then it would take more than two independent signal realizations of precision $q_j$ in favor of state $B$ to outweigh the information provided by $s_i$. Importantly for some of our results, these weights are not unique; scaling the weights of all voters by the same constant does not change voting outcomes.

In WMV, voters' weights are exogenously assigned; this means some voters are ex-ante more powerful than others, which may be arguably socially underisable or hard to implement. In liquid democracy all voters are ex-ante equally powerful (each is endowed with a single vote) but can use delegation to endogenously determine weights. However, under liquid democracy the optimal weights of \citet{nitzan1982} may not be attainable, as delegation results in the delegator being left without a vote. Despite this, some of our main results will show that it can be efficient for some voters to delegate to enable other voters to endogenously attain these optimal weights.

While, the best equilibrium under LD is typically hard to characterize, we can do so for an interesting class of committees.

\textbf{Committees with uninformed voters and few experts.} Consider a committee with prior $\pi = \frac{1}{2}$ and $n_A$ $A$ partisans and $n_B$ $B$ partisans (wlog assume $n_A \geq n_B$). Amongst independent voters, there are $n_U$ voters who are perfectly uninformed, that is $q_i=0.5$. There are $E$ independents who are `experts'; each such voter $i$ has precision $q_i>0.5$. The `experts' may be heterogeneous in their precisions. 

Suppose $n_U$ is large: there is a large number of uninformed independents. While these voters hold no information, they can aid other independents in aggregating information. First, as $n_A-n_B$ extra votes are  cast by partisans for $A$ over $B$, if $n_A-n_B$ uninformed voters vote unresponsively for $B$, they will entirely negate the partisans. There is no information loss in doing this. 

Given this behavior, weighted majority voting with the remaining uninformed voters, each having a weight of $0$ and each expert $i$ having a weight of $w^*(i)$, would deliver first-best outcomes. Attaining these outcomes only depends on the ratio of votes held by any two experts reflecting their relative information qualities: for any experts $i$ and $j$, the ratio of votes held should be $\frac{w^*(i)}{w^*(j)}$. One can see that delegation from uninformed voters (along with abstention from remaining uninformed voters) can arbitrarily approximate optimal weights as $n_U$ gets large. As the \citet{nitzan1982} weights are robust to certain small perturbations, we can then attain first-best information aggregation with LD for large enough (finite) $n_U$. Proposition \ref{prop_uninformed} captures this.

\begin{prop}\label{prop_uninformed}
Assume $\pi = \frac{1}{2},$ $n_A\geq n_B$, $n_U$ uninformed ($q_i=0.5$) independent voters and  $E$ independent experts ($q_i>0.5$). There exists $n^*$ such that when $n_U>n^*$, under the best LD equilibrium $\sigma$: \begin{enumerate}
\item $n_A-n_B$ uninformed independents vote unresponsively for $B$.
\item Each remaining uninformed independents either: (1) abstains at both signal realizations, or (2) delegates to the same expert at both signal realizations.
\item The election outcome under LD is the first-best outcome w.p. $1$.
\item Let $v_i$ and $v_j$ be the number of votes held by experts $i$ and $j$, respectively. Then, as $n_U \rightarrow \infty$, the ratio $\frac{v_i}{v_j}$ converges to the ratio $\frac{w^*(i)}{w^*(j)}$ (experts approach optimal relative weights).
\end{enumerate}

\end{prop}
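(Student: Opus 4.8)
The plan is to show that first-best information aggregation is attainable for $n_U$ large, which pins down the best equilibrium because first-best is the theoretical ceiling. First I would record two reductions from the discussion preceding Proposition~\ref{prop_bestequil}: partisans play their weakly dominant unresponsive strategies in any undominated equilibrium, so the only strategically free players are the $E$ experts and the $n_U$ uninformed independents; and, with partisan behavior fixed, the independents face a common-interest game, so by McLennan (1996) the profile maximizing their common payoff over undominated strategies is a best equilibrium. Since the election outcome is a (possibly randomized) function of the realized signal profile, no profile can give the independents higher expected payoff than the MAP decision of an agent observing all signals; hence first-best is an upper bound on the value of any equilibrium, and it suffices to exhibit an undominated profile achieving first-best with probability one.

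The construction implements properties 1 and 2. I would have $n_A-n_B$ uninformed voters vote unresponsively for $B$; since the $n_A$ $A$-partisans and $n_B$ $B$-partisans are mechanically fixed, this balances the partisan bloc into an $n_A$--$n_A$ tie and wastes no information, because these voters are uninformed. The outcome is then determined solely by the ``free'' sub-electorate of experts plus the remaining uninformed. Among these I would fix integer vote totals $v_i$ for each expert $i$, realized by having $v_i-1$ of the remaining uninformed delegate to $i$ (at both signal realizations) and letting the leftover uninformed abstain, with experts voting sincerely. Taking $v_i$ to be the nearest integer to $M\,w^*(i)/W$, where $W=\sum_k w^*(k)$ and $M=\sum_i v_i$, makes the expert vote vector proportional, up to rounding, to the optimal weights of \citet{nitzan1982}. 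Feasibility needs $n_U\ge (n_A-n_B)+(M-E)$, so $n^*$ is read off from the threshold on $M$ below.

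The crux, and the step I expect to be the main obstacle, is showing this integer-weighted sub-electorate reproduces the first-best decision \emph{exactly}, not merely asymptotically. With $\pi=\tfrac12$, the MAP decision at signal profile $s$ is the sign of $\mathrm{LLR}(s)=\sum_{i:\,s_i=a}w^*(i)-\sum_{i:\,s_i=b}w^*(i)$. As there are finitely many profiles I may set $g=\min\{|\mathrm{LLR}(s)|:\mathrm{LLR}(s)\neq 0\}>0$. The realized vote margin equals $\mathrm{LLR}(s)$ with $v_i$ in place of $w^*(i)$; since $|v_i-M\,w^*(i)/W|\le 1$ for each of the $E$ experts, this margin differs from $\tfrac{M}{W}\,\mathrm{LLR}(s)$ by at most $E$. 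Hence once $M>EW/g$, the integer margin is nonzero and shares the sign of $\mathrm{LLR}(s)$ on every profile with $\mathrm{LLR}(s)\neq 0$, so no tie-break occurs there and the outcome matches MAP; on profiles with $\mathrm{LLR}(s)=0$ the full-information agent is indifferent, so either realized outcome is first-best. This yields first-best with probability one and fixes $M^*$, hence $n^*$. The care points here are that the rounding be chosen to keep each $v_i\ge 1$ while holding $\sum_i v_i=M$, and that the exact partisan balancing leaves the free margin as the sole determinant of the outcome, tie-breaking included.

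To conclude, the constructed profile attains the first-best ceiling, so by the common-interest argument it is a best equilibrium; and because each action used (unresponsive voting, abstention, delegation to a fixed expert, sincere expert voting) is strictly optimal at some positive-probability pivotal event, the profile is undominated, giving properties 1--3. For property 4, I would let $M=M(n_U)\to\infty$ as $n_U\to\infty$, using the growing supply of uninformed delegators and abstaining only the integer-feasibility remainder; then $v_i/v_j=\mathrm{round}(M\,w^*(i)/W)/\mathrm{round}(M\,w^*(j)/W)\to w^*(i)/w^*(j)$, so the expert vote ratios converge to the optimal relative weights. The only delicate point in this last step is that best equilibria are not unique, so property 4 should be read as a property of this canonical refining sequence of best equilibria rather than of an arbitrarily chosen one.
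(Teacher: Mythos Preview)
Your proposal is correct and follows essentially the same route as the paper: balance the partisan bloc with $n_A-n_B$ unresponsive $B$ votes, then use the remaining uninformed voters to scale experts' weights proportionally to the Nitzan--Paroush log-odds, arguing that because there are only finitely many signal profiles the minimum nonzero margin $g$ is strictly positive and a large enough scale factor swamps the integer rounding error (at most $E$). The paper writes the scale as $k$ and rounds down, you write it as $M/W$ and round to nearest, and you give the explicit threshold $M>EW/g$ where the paper just says ``for large enough $k$''; your added caveat that property~4 pertains to a particular refining sequence of best equilibria rather than an arbitrary one is a fair point the paper leaves implicit.
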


Importantly, this result does not qualitatively depend on the worst informed voters being perfectly uninformed. In particular, whenever the result holds and the committee attains first-best outcomes in equilibrium, there exists an $\epsilon>0$ such that if all $n_U$ uninformed independents instead had precisions in the range $[0.5,0.5+\epsilon]$, then the same equilibrium would still deliver first-best outcomes. This is an implication of \citet{nitzan1982}'s  Corollary 1, which tells us that first-best solutions will anyway ignore the information of poorly informed voters who do not hold enough information to affect the best choice ex-interim. 

While the logic of the Proposition is that with large enough number of dispensable votes we can approximate \citet{nitzan1982} weights, the result is not an asymptotic one, as we can attain first-best outcomes for finite $n_U$. The following example demonstrates that in fact attaining first-best outcomes do not require $n_U$ to be very large.


\emph{Example.} There are four experts with $q_1=0.8, q_2=0.7, q_3=0.65, q_4=0.6$. The log likelihood ratios are respectively $1.3862, 0.8473, 0.618, 0.4055.$ Rounding off to 2 decimals we have $1.39, 0.85, 0.62, 0.4.$  Dividing through by 0.4 we have the normalised weights 3.475, 2.125, 1.55 and 1, or 3.5, 2.1, 1.5 and 1. 
Therefore 1 and $i$ should prevail over $j$ and $k,$ for any $i,j,k$ distinct from each other and from 1. However 2,3,4 should prevail over 1. 
This remains true even with rounded down weights. So we need $n_U\geq 3$ to achieve this result. With DD we cannot achieve FB-at best 4 and 5 can abstain.   

 Proposition 2 highlights two important roles uninformed independent voters can play in information aggregation: delegating to informed voters to enable them to achieve optimal weights, and voting unresponsively to counter partisans in the committee. As uninformed voters do not possess any private information, these two types of strategies are costless, i.e. they waste no information. This allows committees with enough uninformed independent voters to attain first-best outcomes. In the next section, we study committees in which all voters are informed (i.e. receive signals with $q_i>0.5$), which introduces a tradeoff in delegation: a voter $i$ delegating to a better informed voter can enable the better informed voter to have a weight closer to the optimal weight, but this comes at the cost of $i$'s signal no longer being aggregated into the election outcome. In order to elucidate how this tradeoff affects information aggregation, we focus on pure common-interest committees.

\subsection{Common-interest committees and neutral equilibria} 
\label{sec: LDex1}
In this section we  consider environments which are symmetric with respect to the state: there are no partisans ($n_A=n_B=0$), and the prior is $\pi=1/2$.\footnote{All results also apply if $\pi$ is close to $1/2$ or if the number of $A$ and $B$ partisans is equal, or if there are sufficiently many uninformed voters who can neutralise the partisans.} In this symmetric environment with no partisans, the best equilibrium is the one that best aggregates voters' information. Consider the best equilibrium of LD in the setting of Proposition 2; according to Proposition 2, this best equilibrium is neutral with respect to state in the sense that every voter's strategy is symmetric with respect to their signal realization. Formally, we say a strategy profile  $\sigma$ under LD is \emph{neutral} if under $\sigma$, every voter $i$ employs a neutral strategy by either: voting sincerely, abstaining at both signal realizations, or delegating to the same voter at both signal realizations (if $\sigma_i(a) = d_j$ then $\sigma_i(b)=d_j$). For DD, neutrality is defined similarly without allowing for delegation. In RD, a neutral strategy for a representative $i \in \mathcal{J}$ is either voting sincerely or abstaining at both signal realizations; a neutral strategy for a nonrepresentative $i \not \in \mathcal{J}$ is to either abstain at both signal realizations, delegate to a given independent partisan at both signal realizations, or to delegate to $A$ partisan $a^*$ at signal realization $s_i=a$ and to $B$ partisan $b^*$ at signal realization $s_i=b$ (this last type of strategy is analogous to sincere voting under DD and LD).  

Surprisingly, even in symmetric environments, it need not be the case that the best equilibria of LD are neutral. We present an example in the Appendix in which all best equilibria involve voters employing asymmetric strategies and electing the correct alternative with higher probability in one state than the other. These equilibria can rely on voters conditioning delegation on their signal realization, which allows for correlating voting between multiple voters.\footnote{If voter $i$ delegates to $j$ only when her signal is $a$, then the posterior probability of state A is higher conditional on $j$ casting an additional vote.} While delegation results in the loss of some private information of the delegator, conditional delegation can be used to mitigate this. However, these strategies require considerable coordination between voters and characterizing the best equilibrium is difficult. Instead, we focus on the best neutral equilibrium, which is a natural benchmark in symmetric settings. We  show that even when restricting to this natural class of equilibria -- hence shutting down conditional delegation -- voters can still use delegation in LD to do strictly better than under DD and RD.

Lemma \ref{lemma_best_neutral} in the Appendix shows that, in committees with no partisans and a flat prior, the best neutral strategy profile under all three mechanisms is an equilibrium. This is because when the environment is symmetric, a voter always has a neutral best response when all others use neutral strategies.\footnote{In fact, any neutral strategy profile at which there are no profitable deviations using only neutral strategies will be an equilibrium.} As strategy spaces are finite, every symmetric game hence has a neutral equilibrium in LD,  RD, and DD. In the remainder of the section we analyze some natural classes of committees. We demonstrate that the best neutral equilibrium of LD can do strictly better than that of DD and RD and also discuss `bad' neutral equilibria under LD.

\textbf{The single expert committee.} Here we consider the simplest setting in which liquid democracy may be useful in improving information aggregation. The committee is composed of $N-1$ `nonexperts' $1,...,N-1$ and a single `expert' $e$. Assume $N$ is odd.\footnote{Results are the same for $N$ even, but with $N$ odd ties are avoided which simplifies exposition.} All committee members $i \in \mathcal{N}$ are independents: $p_i = I$. The nonexperts $i=1,...,N-1$  receive private signals with precision $q_i =q\in (0.5,1)$ and the expert receives information with superior precision $q_e = r \in (q,1)$. Additionally, suppose that $\pi=\frac{1}{2}$ so the information setting is entirely symmetric.

This is a common-interest setting where all voters are purely interested in information aggregation. Intuitively, when the expert is sufficiently well informed relative to the nonexperts ($r$ is high relative to $q$), there is an incentive for the nonexperts to delegate to the expert. Delegation however comes at a cost: if a voter $i$ delegates, she cannot express her own private information in voting. We are interested in how voters solve this tradeoff in delegation under the best neutral equilibrium and other neutral equilibria.

Let $w^*(e) = \frac{\log(\frac{r}{1-r})}{\log(\frac{q}{1-q})}$; $w^*(e)$ can be thought of as the number of non-expert signals that  collectively have  informational content equivalent to the expert's private signal. By \citet{nitzan1982}, if the expert had $w^*(e)$ votes and the nonexperts had $1$ vote each, all voters voting sincerely would efficiently aggregate the information of the committee. Proposition \ref{prop_single_exp} characterizes the best neutral equilibrium of the single-expert committee under LD as a function of $w^*(e)$.

\begin{prop} \label{prop_single_exp}
There is a best neutral equilibrium $\sigma$ of the single expert committee characterized as follows:
\begin{enumerate}
\item If $w^*(e)<2$: all voters vote sincerely under $\sigma$.
\item If $w^*(e) \geq 2$: Exactly $\min\{\floor{w^*(e)}-1,\frac{N-1}{2}\}$ nonexperts delegate to the expert (i.e. $\sigma_i(a)=\sigma_i(b)=d_e$). The remaining voters vote sincerely. 
\end{enumerate}
These strategies are independent of the number of nonexperts in the committee. Whenever $w^*(e) \in (2,\frac{N+1}{2}]$, the best neutral equilibrium of LD does strictly better than that of DD. 
\end{prop}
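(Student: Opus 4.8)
The plan is to reduce the problem to a one-dimensional optimization over the number $k$ of nonexperts who delegate to the expert, and then pin down the optimum by a marginal (coupling) computation. By the Appendix lemma that the best neutral profile is an equilibrium, it suffices to find the neutral profile maximizing the probability $P_{\mathrm{correct}}$ of matching the state. First I would argue this maximizer has the claimed shape. The expert votes sincerely (delegating or abstaining the best-informed signal is never optimal). No nonexpert delegates to another nonexpert: redirecting any such cross-delegation to the expert, who votes on the strictly more precise signal ($r>q$), weakly raises accuracy, so every cross-delegation collapses into either a delegation to $e$ or an abstention. Finally, no nonexpert abstains at the optimum, since once the expert's weight is at its optimal level, removing a correctly-weighted nonexpert vote strictly lowers accuracy. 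Hence a candidate best neutral profile is indexed by $k\in\{0,\dots,\frac{N-1}{2}\}$: $k$ nonexperts delegate to $e$ (so $e$ carries $W:=k+1$ votes) and the remaining $N-1-k$ nonexperts vote sincerely, with total $N$ (odd) so no ties arise.

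The core step is the comparison of $W$ and $W+1$. Writing $P_{\mathrm{correct}}(W)$ for the accuracy of the weighted-majority rule with expert weight $W$ and $N-W$ sincere nonexperts, I would couple the two rules on a common draw of all signals: passing from $W$ to $W+1$ silences one voting nonexpert (net signal $\delta\in\{\pm1\}$) and adds a unit to the expert's weight (direction $\sigma\in\{\pm1\}$), so the vote margin shifts by exactly $\sigma-\delta$. The two rules therefore disagree only on two events, where the expert overturns or just fails to overturn the nonexpert bloc by one vote; conditioning on state $a$ and letting $D$ be the net signal of the $m=N-W-1$ untouched nonexperts, these are $\{s_e=a,\ \delta=-1,\ D=-W\}$ and $\{s_e=b,\ \delta=+1,\ D=W\}$. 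Using $P(D=W\mid a)=\big(\frac{q}{1-q}\big)^{W}P(D=-W\mid a)$ (parity always permits $D=\pm W$, since $m\equiv W\pmod 2$), the difference $P_{\mathrm{correct}}(W+1)-P_{\mathrm{correct}}(W)$ has the sign of $r(1-q)-(1-r)q\big(\frac{q}{1-q}\big)^{W}$, i.e. of $\log\frac{r}{1-r}-(W+1)\log\frac{q}{1-q}$, which is positive iff $W+1<w^*(e)$. Thus $P_{\mathrm{correct}}$ strictly increases while $W+1<w^*(e)$ and strictly decreases once $W+1>w^*(e)$, giving the unconstrained optimum $W^*=\lfloor w^*(e)\rfloor$, i.e. $k^*=\lfloor w^*(e)\rfloor-1$; for $w^*(e)<2$ this returns $W^*=1$ (all sincere), matching part~1. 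Crucially the threshold $w^*(e)\lessgtr W+1$ contains no reference to the committee size, which yields the stated independence of the number of nonexperts.

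The cap enters because the two disagreeing events require $D=\pm W$ among $m=N-W-1$ nonexperts, which is vacuous once $W>\frac{N-1}{2}$; there the expert already holds a strict majority, further delegation is payoff-irrelevant, and $P_{\mathrm{correct}}$ is flat. Hence the realized optimum is $k^*=\min\{\lfloor w^*(e)\rfloor-1,\frac{N-1}{2}\}$. For the strict comparison with DD, I would use that every neutral DD profile implements a weighted-majority rule in which the expert carries weight exactly one, whereas for $w^*(e)>2$ the LD optimum gives weight $W^*=\lfloor w^*(e)\rfloor\ge 2$; since the DD profiles are precisely the $k=0$ subfamily of LD profiles and the coupling gives $P_{\mathrm{correct}}(2)>P_{\mathrm{correct}}(1)$ strictly when $w^*(e)>2$, the LD optimum strictly dominates all-sincere DD. The delicate point, and the main obstacle, is that DD can still make the expert effectively decisive by abstaining nonexperts (abstaining all of them implements ``follow the expert'', with accuracy $r$), so I must rule out that any weight-one rule, even with abstention, attains $P_{\mathrm{correct}}(W^*)$. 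This holds whenever $W^*<\frac{N+1}{2}$, for then the LD optimum aggregates strictly positive nonexpert information on top of an expert of weight $\ge 2$, which no weight-one rule can replicate; at $w^*(e)=\frac{N+1}{2}$ the expert dominates even under the LD optimum and both systems attain $r$, which is the (non-strict) boundary of the stated range.
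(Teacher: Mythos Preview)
Your overall architecture matches the paper's: reduce to the one-parameter family (expert weight $W$, remaining nonexperts each with one vote, all sincere), then optimize $W$ by a pivotal/marginal computation. Your coupling derivation of the sign of $P_{\mathrm{correct}}(W{+}1)-P_{\mathrm{correct}}(W)$ is correct and is precisely the paper's pivotality calculation in different dress (they condition on a nonexpert's pivotality and arrive at the same threshold $v_e+1\le w^*(e)$). The cap, the size-independence, and your discussion of the DD comparison---including the point that DD with abstention can replicate the dictator outcome---are all in line with the paper.

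The genuine gap is the reduction. Your assertion that ``redirecting any cross-delegation to the expert weakly raises accuracy'' because $r>q$ is not a proof. A single redirection from nonexpert $j$ to the expert changes the outcome only on $\{s_e\neq s_j,\ |M|=1\}$, and its effect has the sign of
\[
r(1-q)\Pr(R=-t\mid a)-(1-r)q\,\Pr(R=t\mid a),
\]
where $R$ is the \emph{weighted} signal sum over the remaining voters and $t=v_e-v_j+1$. When those remaining voters themselves carry heterogeneous weights---exactly the configuration you are trying to rule out---the ratio $\Pr(R=t\mid a)/\Pr(R=-t\mid a)$ is not $(q/(1-q))^t$, and $r>q$ alone does not pin down the sign. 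Your ``no abstention'' step is likewise under-argued: it presupposes the expert already sits at the optimal weight, which is what you are trying to establish.

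The paper handles the reduction through three lemmas. It first bounds $v_e\le\lfloor w^*(e)\rfloor$ by an information-counting argument (any profile exceeding this wastes more delegators than one that already first-best aggregates its active set); it then rules out ties; and for the cross-delegation step it reassigns \emph{all} excess nonexpert weight to the expert in one move and shows, via an inclusion of error sets $Z(V,X)\supseteq Z(V',X')$, that every signal realization at which the reallocated profile errs---necessarily one where the still-underweighted expert's coalition loses despite being more informed---was already an error under the original profile, where the expert has even less weight. This global argument bypasses the pointwise monotonicity your single-redirection step would need.
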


The proof is in the Appendix but we explain the intuition. Note that if $w^*(e)<2$, then the best neutral equilibrium in liquid democracy, as per the Proposition, requires no delegation and hence is replicable under direct democracy. Meanwhile if $w^*(e) \geq \frac{N+1}{2}$, expert receives $\frac{N-1}{2}$ delegated votes; as she votes sincerely, she then dictates the election outcome. This outcome is also replicable under direct democracy by all nonexperts abstaining. Hence, liquid democracy strictly improves on direct democracy when the expert sufficiently well informed relative to nonexperts, but not too well informed. A comparison between liquid and representantive democracies will depend on the set of representatives $\mathcal{J}$, as discussed at the beginning of Section \ref{sec: pub}. For instance, the best neutral equilibrium of liquid democracy will do strictly better than that of representative democracy whenever $w^*(e)>2$ and $e \not \in \mathcal{J}$ (as the expert's superior information cannot then be given additional weight via delegation), or whenever $w^*(e) \in (2, \frac{N+1}{2}]$ and $\mathcal{J}=\mathcal{N}$ (in which case representative and direct democracy are equivalent).

We now give intuition for the structure of the best neutral equilibrium of LD. First note that if all nonexperts had a single vote and the expert held $\floor{w^*(e)}$ votes, all voters voting sincerely would produce equivalent election outcomes to the case where the expert had exactly $w^*(e)$ votes.\footnote{This is simply because with an odd number of voters and no abstentions under the strategy profile, all decisions are made by a margin of at least $1$ vote; hence rounding the expert's weight will not swing the election outcome.} In Case 1, when $w^*(e)<2$, $\floor{w^*(e)}=1$ and so all voters voting sincerely will attain the first-best outcome and hence must be the best neutral equilibrium. In this case, the best neutral equilibria of LD and DD perform identically.

Case 2 is the more interesting case:  The prescribed strategy profile allocates $\min\{\floor{w^*},\frac{N-1}{2}+1\}$ votes to the expert. When $w^*(e) \geq \frac{N-1}{2}+1$, the expert's signal contains more information than any $\frac{N-1}{2}+1$ other voters and this implies it is optimal for the expert to dictate the election outcome. More interestingly, when $2 \leq w^*(e) < \frac{N-1}{2}+1$, then delegation to the expert such that she attains a weight of $\floor{w^*}$ is optimal. As all nondelegating nonexperts are voting sincerely, this means the committee efficiently aggregates the information of the expert and all $(N-1) - (\floor{w^*}-1)$ nonexperts holding votes. However, this comes at the cost of wasting the private information of $(\floor{w^*}-1)$ delegators, meaning all information of the committee is not being efficiently aggregated. The implication is that when the expert is sufficiently, but not too well, informed it is worth wasting some voters' information in order to efficiently aggregate the information of the remaining voters. The best neutral equilibrium forms a `subcommittee' of decision makers who endogenously attain optimal weights.


To understand why it is optimal to delegate exactly $\floor{w^*}-1$ votes to the expert when $2 \leq w^*(e) < \frac{N-1}{2}+1$, suppose the expert is being delegated $k$ votes and consider the incentive of another nonexpert $i$ to delegate to her. For simplicity, assume that all voters who are not delegating vote sincerely (we show in the Appendix that this is the behavior at the best neutral equilibrium). Voter $i$'s delegation decision will only affect payoffs when: (1) $i$'s vote is pivotal, and (2) $s_i \neq s_e$ (if $i$ and $e$ receive the same signal realization, as nondelegators are casting their votes sincerely, delegation will not change how this vote is cast). When both these conditions hold, it must be that $s_i$'s realization agrees with that of $\frac{N-1}{2}$ other nonexperts; meanwhile $s_e \neq s_i$ and $s_e$ agrees with $\frac{N-1}{2}-(k+1)$ nonexperts. Hence, in the event that $i$ delegating to $e$ is pivotal, $i$'s private information agrees with that of $k+1$ \emph{excess} nonexperts while disagreeing with the expert's. Voter $i$ should delegate to the expert if and only if the expert is more likely to be correct. This is true when the informational content of $s_e$ is worth that of at least $k+2$ nonexpert signals; i.e. if and only if $k+2 \leq w^*(e) \iff k+1 \leq \floor{w^*}-1$.

There are a few important implications of this result. First, note that the number of votes the expert should be delegated in the best neutral equilibrium does not depend on the number of nonexperts in the committee. This is because the expert's optimal weight does not depend on this size of the committee, but only on the quality of her information relative to that of a nonexpert. Second, whenever $2 \leq w^*(e)<\frac{N-1}{2} + 1$, the best equilibrium involves nonexperts employing asymmetric strategies: either some nonexperts delegate while others vote sincerely. This means that there will be multiple best equilibria as nonexperts are interchangeable. Coordinating on equilibria in which voters of the same type use different strategies may be hard; we discuss this in more detail in Section 5. The following example demonstrates that while the best neutral equilibrium given in Proposition 2 highlights that LD can do strictly better than DD, miscoordination in delegation can lead to worse neutral equilibrium outcomes as well.

\begin{example}
\label{ex:overdelegation}
 Consider a single expert committee with  nonexperts $i=1,...,8$ (so $N=9$)   having precisions $q_i=0.6$ and the expert having precision $r=0.7$. There is a (neutral) equilibrium in which: \begin{itemize}
\item Nonexperts $i=1,2$ delegate to the expert ($\sigma_i(a)=\sigma_i(b)=d_e$) at both signal realizations.
\item Nonexperts $i=3,5,7$ delegate to nonexpert $i+1$: $\sigma_i(a)=\sigma_i(b)=d_{i+1}$ at both signal realizations.
\item Nonexperts $i=4,6,8$ vote sincerely.
\end{itemize}
\end{example}
When $q=0.6$ and $r=0.7$, $w^*(e) \in (2,3)$ and $\floor{w^*(e)} = \floor{\frac{\log(0.7/0.3)}{\log(0.6/0.4)}}=2$.  This means the expert should optimally vote with twice the weight of nonexperts. The best neutral equilibrium of Ls has one nonexpert delegating to the expert and the rest voting sincerely. In the equilibrium above, there is `overdelegation' to the expert, who ends up with $3$ votes. This overdelegation would make the expert suboptimally too powerful relative to nonexperts unless nonexperts delegate to one another in order to achieve the correct balance of power amongst those casting votes. Notice that the strategy profile given above leaves 3 nonexperts ($4,6,8$) each with $2$ votes. The alternative the expert votes for will win the election unless all $3$ nonexperts receive information disagreeing with her; as $w^*(e) \in (2,3)$, this means that this equilibrium first-best aggregates the information of the expert and three nonexperts holding votes (the correct decision is always made when only taking into account these four voters' information). However, obviously this equilibrium is inefficient as nonexperts signals are wasted due to overdelegation.

The single expert committee demonstrates the role delegation can play in allowing voters to optimally balance power relative to how informed voters are. However even in such simple settings and selecting only for neutral equilibria, mis-coordination in delegation can create incentives to waste further information through delegation. These incentives to agglomerate votes can result in inefficiency in equilibrium. Inefficiencies due to wasted delegation stemming from overdelegation to a small number of experts have been noted in the Computer Science literature on LD (\cite{kahng2018},\cite{halpern2021defense}), when testing liquid democracy experimentally (\cite{Campbell2022}), and in practice (see footnote \ref{new_scientist}) but to our knowledge have not been shown to be possible in equilibrium when voters are strategic.

\textbf{Committees with nonexperts and few experts.} We now show that the intuition from the best neutral LD equilibrium of the single expert committee extends to another natural class of committees. Consider a committee in which all members are independents and the prior is $\pi = \frac{1}{2}$. Suppose there are $n$ (even) voters who each have precision $q_i=q$; call these voters the `nonexperts'. The remaining $E=N-n$ (odd) voters are `experts'; each such voter $i$ has precision $q_i>q$ and $w^*(i) = \frac{\log(\frac{q_i}{1-q_i})}{\log(\frac{q}{1-q})}$.\footnote{Assuming $n$ even and $E$ odd is for convenience.} This committee is an adaptation of the extension of the single expert committee allowing for heterogeneity amongst multiple experts. When the number of nonexperts is sufficiently large (i.e. the number of experts is relatively few), the results from the single expert committee extend intuitively.

\begin{prop}\label{prop_multiexp}
There exists $\bar{n}  \in \mathcal{N}$ s.t. for committee described above with $n \geq \bar{n}$, there is a best neutral equilibrium of LD $\sigma$ in which: \begin{enumerate}
\item For each expert $i$, exactly $\floor{w^*(i)}-1$ nonexperts delegate to her (i.e. for  $\floor{w^*(i)}-1$ nonexperts $j$, $\sigma_j(a)=\sigma_j(b)=d_i$). 
\item All nondelegators vote sincerely.

\end{enumerate}

Moreover the best neutral equilibrium in LD is strictly better than the best neutral equilibrium in DD whenever $w^*(i)>2$ for some expert $i$.
\end{prop}

Proposition \ref{prop_multiexp} shows that the logic of Proposition \ref{prop_single_exp} extends to other settings. As long as the number of worst informed voters is large enough, optimal information aggregation---when restricted to neutral strategy profiles---in common-interest committees requires delegation such that all nondelegators attain (approximately) their  \citet{nitzan1982} optimal weight. Again, the committee forms a subcommittee of approximately optimally weighted voters to make the decision. Here as in the single expert committee, the optimal number of delegated votes each expert should receive does not depend on the size of the committee. Also, the best neutral equilibria  typically requires nonexperts to employ asymmetric strategies (some delegate and some vote sincerely) and hence there may be multiple such equilibria. 

With multiple experts in the committee, we cannot directly extend the single expert committee result -- that experts attain their rounded optimal weights in the best neutral equilibrium -- without there being a sufficiently large number of nonexperts. When the number of nonexperts is too small, the best neutral equilibrium may not have this property. This stems from the fact that efficient information aggregation requires every two voters to have relative weights that correctly reflect their relative information qualities, but this can be hard to achieve in small committees because delegation causes discrete changes in voters' weights. When the number of nonexperts is sufficiently large, however, we can extend the logic of Proposition \ref{prop_single_exp}. Importantly, Proposition \ref{prop_multiexp} is not an asymptotic result; it characterizes the best neutral equilibrium for finite (but large enough) $n$.

The proof is in the Appendix, but the logic is as follows. First, we show that as the committee gets large, the best neutral equilibrium must have a bounded number of voters delegating. This is because if the number of delegators grew arbitrarily as the committee grew, the committee would be wasting an unbounded amount of information (that of the delegators), which  is inefficient. As the number of nonexperts grows, the best neutral equilibrium must then require that almost all of these nonexperts do not delegate. Then, voting nonexperts dominate in the committee and what is most important for information aggregation is that the experts are optimally weighted relative to them. Hence the same incentives for nonexpert-expert delegation as in the single expert committee prevail and we attain a similar result. Moreover, whenever some expert has an optimal weight above $2$, the best neutral equilibrium of DD does strictly worse as this expert cannot attain her optimal weight without delegation. As with the single expert committee, the comparison with the best neutral equilibrium of RD depends on the set of representatives, but LD will do strictly better whenever, for instance, some expert $i$ with $w^*(i)>2$ is not a representative (preventing her from being delegated more votes), or if all voters in the committee are representatives (rendering RD and DD identical).

Numerous inefficient neutral equilibria will generally exist in these committees  under LD. As in the single expert committee, such equilibria can for instance be generated by overdelegation to experts and delegation between nonexperts to compensate.

\subsection{Inefficiency in equilibrium}
In the following we  denote the number of independent, $A/B$ partisan and uninformed independent voters as $n_I,n_A,n_B$ and $n_U$, respectively. 
As discussed before, it is hard to characterize the best equilibrium for general committees, let alone the full set of equilibria. However, it is easy to see that bad equilibria always exist. For instance, in LD, RD  and DD, it is a BNE for all voters to vote unresponsively for option $a$; as no voter is ever pivotal under this strategy profile, everyone is totally indifferent across all strategies.\footnote{Note that voting unresponsively for $a$ may be weakly dominated for some voters. However, it is not hard to come up with committees for which all independents voting for $a$ is an undominated equilibrium behavior and results in $A$ being the outcome of the election w.p. $1$.} This sort of equilibrium multiplicity is common in voting games and the problem is exacerbated when delegation is allowed in LD. For example, in any committee if we fix an independent voter $i$, whenever the number of independent voters $ n_I> |n_A-n_B|+1$, all independents delegating to $i$ (and $i$ voting sincerely) can be supported in equilibrium. Under such a strategy profile, $i$ dictates the election outcome and no individual voter can change this. Considering again the single expert committee, this means that it is an equilibrium for all voters to delegate to any particular nonexpert $i$, who votes sincerely. Hence equilibrium can support even the worst informed voter dictating the election outcome. Importantly, both of the types of equilibria just discussed can result in the committee failing to improve outcomes even as the committee (and hence the sum of its' members information) gets arbitrarily large.  This means equilibria in under all voting systems can fail to aggregate information asymptotically, something that, for instance, the best equilibrium of DD is capable of doing (e.g. sincere voting by all voters does so by the Condorcet Jury Theorem). Further, the former equilibrium example (of everyone voting unresponsively for one alternative) will typically be strictly worse than the best equilibria of direct democracy and representative democracy, hence the welfare comparison between liquid democracy and these other mechanisms is not clear-cut. 

Past literature has focused on undominated strategies as a selection criterion to deal with equilibrium multiplicity (\citealp{swing1996,FeddersenPesen1997}). We note that, even when making this selection, typical committees still admit inefficient equilibria of the types mentioned in the previous paragraph under all three mechanisms. However, in the next section we show that, for certain types of committees, applying \emph{iterated} deletion of weakly dominated strategies can be very powerful under LD.

\section{Dominance solvability with complete information} \label{sec: dominance}

While we have shown that when there is complete information about types, liquid democracy admits better equilibrium than mechanisms without delegation, LD also has numerous bad equilibria. Such equilibria can occur due to overdelegation or mis-coordination between voters in delegation. As the best equilibrium typically involves agents of the same type employing asymmetric strategies (e.g. some delegating and some not delegating), miscoordination is a concern. In Section \ref{subsec: deleg} we discuss one way in which this concern is mitigated: Proposition \ref{prop: dominant} implies that when some independent member of the committee is fully informed, then the game with delegation is dominance solvable (DS) and the DS solution is the best equilibrium. In such situations, it may be reasonable to expect voters to coordinate in playing the best equilibrium. Meanwhile, the game without delegation is not necessarily DS suggesting that, in certain settings, liquid democracy may make it easier to coordinate on the efficient equilibrium.

 We have already argued that it is weakly dominant for partisan voters (or partisan representatives in RD) to always vote in favor of their preferred alternative.  Assume that  $n_I \geq |n_A-n_B|$. 
 Suppose that some independent member of the committee $e \in \mathcal{N}$ has precision $q_e = 1$; that is, at least one member of the committee learns the state with certainty. It is straightforward to see that it is weakly dominant for $e$ to vote sincerely; conditional on her pivotality, sincere voting is uniquely optimal (as she knows $\omega$ and wants the election outcome to match it) and conditional on her not being pivotal, she is indifferent across all strategies. 

Below we provide an example showing that the game without delegation (direct democracy) is not DS while the game with delegation (liquid democracy) is. \begin{example}

Suppose there are 3 $A$-partisans, 1 $B$-partisan, 1 expert independent ($q_e=1$)  and  4 non expert independents ($q_i<1$).  It is easy to see that the partisans and expert voters have dominant strategies: an $A$ ($B$ partisan , expert independent) is pivotal if excluding her, the remaining votes are divided equally between $a$ and $b$. In all other cases they are indifferent between their top ranked candidate and others. Thus, assume that partisan and expert independents are voting their dominant strategies and consider the reduced game and the strategies of non-experts.

 The efficient equilibrium is chosen when  non expert independents  neutralise the partisans - so 2  vote $b$ unresponsively and two abstain unresponsively, ending up with a tie (excluding the expert). This ensures that the expert  decides the election. However this strategy requires a lot of coordination among the non expert independent voters. Moreover, in this example, each of the non experts is pivotal so the strategies of voting $b$ unresponsively and voting $x$ unresponsively are a unique best response.  The strategy of voting $a$ unresponsively is a unique best response 
 if 3 non experts vote $b$ unresponsively while one votes $a$ unresponsively. (This requires at least $n_e+ |n_A-n_B|+1$ non experts.)  This ensures that even if the partisans and expert voters have weakly dominant strategies, the reduced game does not allow further elimination of dominated strategies and is not dominance solvable. 
 
 With delegation, each of the non experts can delegate to the expert and achieve the efficient outcome.  Moreover, delegation weakly dominates all other strategies: consider the profile where 3 of the non experts vote $b$ unresponsively, then there is a tie in state $a$ excluding one non expert. Delegation ensures the correct outcome in both states while voting $b$ unresponsively delivers $b$ in state $a$. Voting $x$ unresponsively delivers a tie in state $a$. Thus delegation is strictly better than these two strategies. Consider the profile where 1 of the non experts votes $b$ unresponsively while 2 abstain, then the remaining non expert is pivotal in state $b$.  Delegation achieves the correct outcome in both states, while voting $a$ unresponsively delivers $a$ in state $b$.  Therefore delegation is a weakly dominant strategy and the game is dominance solvable.

\end{example}
Let $n_e$ be the number of perfectly informed independent voters. Applying iterated deletion of weakly dominated strategies, assume all partisans and $n_e$ perfectly informed independents vote as prescribed and consider the behavior of independents $i$ with $q_i<1$. We can show that as long as these nonperfectly informed independent voters are pivotal with positive probability in both states, the game is DS because it is iteratively weakly dominant for each independents with $q_i<1$ to delegate to some voter $e$ with $q_e=1$. This is simply because delegating to a perfectly informed voter can never hurt (as the perfectly informed voter always makes the correct voting decision) and will be strictly better for some strategies of others'. As $n_I \geq |n_A-n_B|$ and $n_e \geq 1$, under the DS solution the election outcome is determined by the perfectly informed independents and hence will always match the state of nature.\footnote{Strictly speaking, when $n_I = |n_A-n_B|$, the election outcome will be correct in one state and a tie \textemdash and hence correct w.p. $0.5$ \textemdash in the other state. This will still be the best equilibrium.} Note that if voters are able to understand that perfectly informed voters have an undominated strategy, the DS solution should be easy to coordinate on. The solution is also `type-symmetric' in a sense that all nonperfectly informed independents can delegate to any perfectly informed voter. All perfectly informed voters adopt the same strategy as do all partisan voters of each type. One can argue that reaching this DS solution has low cognitive cost. After presenting Proposition \ref{prop: dominant}, we argue that the same argument goes through as long as some independent voter in the electorate is sufficiently well informed, even if no one is perfectly informed.


In contrast, the game without delegation is not DS (Proposition \ref{prop: nodeleg}). As a result, reaching the efficient pure strategy Nash equilibria (PSNE) is not a straightforward task: it requires a lot of  coordination among the nonperfectly informed independent voters as efficient equilibria are in type-asymmetric strategies. Moreover, other types of equilibria that require less coordination (type-symmetric PSNE or mixed strategy Nash equilibria) exist, but they  frequently lead to inefficient outcomes.

Now we extend this example to the case of RD. The results are easy to show when the set of delegates includes the expert, in addition to one of each partisan. As before, each of the delegates is pivotal when votes are equally split among the two other voters. Therefore voting sincerely is a dominant strategy in the second stage.

The remaining voters are 2 $A$ partisans, and 4 non expert independents. 
Consider the reduced game where  delegates use their dominant strategies. In the first stage, therefore it is easy to show that voting for delegate  $A$ is a dominant strategy for the $A$ partisans and voting for the expert is a dominant strategy for the non experts.\footnote{Since there are 6 voters, we can create a profile where  a tie is created in favour of the favoured delegate. }. In this case, RD reproduces what LD does endogenously and the outcomes are the same. 

Consider a different set of delegates which includes a non-expert independent voter. The expert voter has a dominant strategy of voting for partisan delegate $A$ when the state is $a$ and delegate $B$ when the state is $b$. The non expert delegate must coordinate with the non expert voters. To achieve the correct outcome for sure, two of the non expert delegates must vote to neutralise the $A$ partisans, therefore two of the nonexperts vote for $B$ unresponsively while two nonexperts (including the delegate) abstain. Moreover each of these strategies (voting $b$ unresponsively or abstaining unresponsively) is a unique best response for the nonexperts.   Since nonexperts are identical, this implies that these strategies cannot be eliminated for any of them.  Suppose, instead, in the reduced game,  two of the nonexperts vote for partisan $B$, one of them votes for $A$ and the remaining one abstains. In this case, the expert decides the election but each of the strategies of the nonexperts in a unique best response, therefore the game is not DS. 
Overall, therefore, there exists a committee where the best RD equilibrium is  equivalent to the best equilibrium in the  LD game and the game is DS  but the expert is chosen exogenously to be in the committee, and there exists a committee where  the best RD equilibria are equivalent to the best DD equilibria, and the game is not DS. In this case, in order to achieve  full information equivalence, the non-experts must have asymmetric strategies. 

In Section \ref{sec: LDex1}, 
we showed that the best equilibria in liquid democracy can involve asymmetric voting/delegating amongst voters of the same type. In the LD cases discussed, one voter's decision to delegate may affect another's incentive to do so. For example, in the single expert committee, nonexpert $i$ may want to delegate to the expert if and only if sufficiently few other nonexperts have delegated. The argument presented in this section shows that issues of coordination on the best equilibria are less of an issue when a voter's delegation incentives do not depend on others' behavior: when some voter is sufficiently well informed, others delegating to her is good regardless of what other voters are doing and hence the best equilibrium can be easy to coordinate on. 

In summary, there are two reasons why we might want the game to be DS. First, because,  as we see in the example, this property rules out many other equilibria which are inefficient and implausible in the game with delegation. Second, without this property, reaching the efficient equilibria requires a fine balancing act, where independent voters must use type-asymmetric strategies. We have seen in the previous section how coordination mistakes can lead to far from efficient informational outcomes. When the game is DS, however, liquid democracy allows for easier coordination on the best equilibrium in an unambiguous sense.

\subsection{With delegation}
\label{subsec: deleg}

Proposition \ref{prop: dominant} below shows that the election game with delegation is dominance solvable, and characterizes the DS outcome.
Recall that $n_U$ is the number of independents who have type $(I,q_i)$, where $q_i<1$.
 
 \begin{prop}
 \label{prop: dominant}
 Suppose that $n_e \geq 1$, $n_U \geq n_e+ |n_A-n_B|+1$ 
 and $max(n_e+n_A,n_e+n_B) \leq \frac{N}{2}$.
Then  the election game with delegation is dominance solvable. The DS outcome is the efficient outcome. 
\end{prop}

The proof of this proposition is in the Appendix. 

The dominant strategies for $A$, $B$ and independent and perfectly informed voters are $\sigma_i(a)=\sigma_i(b)=a$, $\sigma_i(a)=\sigma_i(b)=b$, and $\sigma_i(a)=a,\sigma_i(b)=b$ respectively, while the strategy for an independent voter $i$ who is not perfectly informed in the DS equilibrium is $\sigma_i(a)=\sigma_i(b)=d_j$ for some perfectly informed $j$. Given the conditions on numbers of each type, this profile ensures that alternative $a$ wins in state $\omega=a$ and alternative $b$ wins in state $\omega=b$. The inequality  $n_I \geq |n_A-n_B|$ ensures independents are pivotal with positive probability which ensures sincere voting is a weakly dominant strategy for perfectly informed voters ($n_e \geq 1$). Finally,  $max(n_e+n_A,n_e+n_B) \leq \frac{N}{2}$ is needed to guarantee each nonperfectly informed independent $i$ is pivotal under some profile in each state. Given this, $i$ delegating to a perfectly informed voter is weakly dominant as it is strictly preferable to all other strategies when $i$ is pivotal (and weakly preferable otherwise). 

While the assumption that some voter in the committee is perfectly informed ($q_i=1$) is a strong one, this is not a necessary condition for Proposition \ref{prop: dominant} to hold.  It would be sufficient, for instance, for there to be some independent member of the committee $e$ whose private signal is more informative than all other independents' private signals combined. In other words, $$w^*(e) > \sum_{\{i \in \mathcal{N} | i \neq e, p_i=I\}} w^*(i).$$ Again, it is weakly dominant for partisans to vote for their preferred outcome. Then, if $n_I \geq |n_A-n_B|$, it is a weakly dominant strategy for $e$ to vote sincerely. This is because, any updating she would do when conditioning on her own pivotality is based on the information held by other independents (as partisans' behavior is deterministic). If her own signal is relatively informative enough, this updating will be outweighed by the updating from her private signal. Note that under this condition, a best equilibrium (in terms of information aggregation) is $e$ dictating. Again, an independent voter $i$, such as $i \neq e$ cannot be harmed by delegating to $e$ and  strictly benefits when pivotal with positive probability.
Note that in the case of RD, the game is DS  by  similar arguments iff at least one sufficiently well informed expert is in $\cal J.$ But the latter is not guaranteed, so RD could be equivalent to DD or to LD depending on the independent experts in the set $J$.

Next, we move to the game without delegation and show that it is not DS.

\subsection{Without delegation}

\label{subsec: nodeleg}
\begin{prop}
\label{prop: nodeleg}
 Suppose that $n_e \geq 1$, $n_I \geq |n_A-n_B|$, $max(n_e+n_A,n_e+n_B) \leq \frac{N}{2}$, and that delegation is not allowed. Then: \begin{enumerate}
 \item  The game is not dominance solvable. 
 \item There exist multiple efficient equilibria in asymmetric strategies for nonperfectly informed voters where the outcome is $a$ in state $\omega=a$ and $b$ in state $\omega=b$. Assume $(n_I-n_e)\geq n_I+ |n_A-n_B|+1$.  If $n_e> |n_A-n_B|,$ there exists an efficient equilibrium where all nonperfectly informed independent voters abstain. If $n_e < |n_A-n_B|$ then there exists an inefficient equilibrium where all nonperfectly informed independent voters abstain and the correct outcome is not chosen in at least one state. There also exist inefficient symmetric PSNE where the outcome is $i \in \{a,b\}$ regardless of the state.  
 \end{enumerate}
\end{prop}

The proof of this proposition can also be found in the Appendix.

Overall, we observe that the efficient outcome is more "likely" (in the sense of DS of the game) to be reached in the game with delegation than in the game without delegation. The reason is that it is much easier for voters to coordinate their strategies in the game with delegation as the DS equilibrium strategies are type-symmetric. In contrast, if the game is not DS, the efficient equilibrium requires high level of coordination as there are multiple such equilibria (any permutation of an equilibrium strategy profile across nonperfectly informed independent voters is an equilibrium). Moreover there are also multiple inefficient equilibria even if we restrict ourselves to PSNE. All outcomes are possible in this game. 

Unlike much of the literature on information aggregation  (e.g. \citealp{swing1996}), we do not need to assume large elections for this result. With delegation, when types are known  e.g. in committee elections, delegation ensures that the efficient outcome is reached. When delegation of votes is not allowed it is "likely" that inefficient outcomes prevail. 
Of course the more plausible case is of incomplete information on types. Section \ref{sec: private} describes this. 

\section{Incomplete information about types} \label{sec: private}
 
In this section we consider the type of each agent as her private
information. This is an important generalisation as preferences and
information precision cannot be publicly observed in reality unless a truth
revelation mechanism is in place. We still allow each agent to be
characterised by a different likelihood of being assigned each specific
type, which makes the model more realistic compared to the standard
assumption that all agents draw their type from a common distribution.

A first challenge that we have to face is with respect to equilibrium
existence. Indeed, if the space of types was finite, existence of a
(Bayesian Nash) equilibrium would follow trivially from standard equilibrium
existence arguments for finite games. Since we want to be more general and
permit atomless distributions, we have to provide independent arguments that
an equilibrium exists. Throughout this section, we focus on pure undominated
equilibria in threshold strategies. That is, we do not constrain the
strategy set of the players, but only the set of equilibria that we are
interested in. Focusing on undominated strategies imposes that partisans do
not delegate and simply vote for their preferred party. Threshold
strategies imply that if two common value types of an agent characterized by different precisions take the same
action, then no common value type of this agent with an intermediate precision takes another action. Whenever we refer to an equilibrium from now on, we implicitly assume that it is
one characterized by the features mentioned.

The second challenge we  encounter relates to the welfare effects of
liquid democracy vis-a-vis direct and representative democracy. Since the
environment is much more complex than before, one cannot provide a
similarly detailed description of the set of equilibria in the general case.
What we can provide though is a comparative analysis of the best equilibria
of all systems in terms of information aggregation. This is compatible with
similar studies that also provide welfare comparisons based on the best
equilibria (given some welfare benchmark) of alternative collective choice
mechanisms (see, for instance, \cite{Ahn2016}). 

To establish our first result and to be able to proceed with the one
regarding welfare, we work in the following manner. First, we define an
auxiliary game that is identical to the one that we study except for a twist: in this auxiliary game, even types $A$ and $B$ have the same ex post utility
function as the common value voters. However, the action space of these
types is now constrained to only one action: type $A$ voters are only
allowed to vote for $A$, and type $B$ voters are only allowed to vote for $B$. It is straightforward that each equilibrium of the auxiliary game with
liquid, direct or representative democracy corresponds to a unique undominated equilibrium
of the game we are interested in with liquid, direct or representative
democracy, and vice versa. Therefore, we formally analyze the auxiliary
game, to understand whether the game of interest admits an equilibrium and
how the best equilibria of different kinds of democracy compare.\footnote{ Formally, a threshold strategy in the auxiliary game is an ex-ante strategy
and is characterized by a collection of thresholds $t_i$ $t_{i}\in \lbrack
0.5,1]^{n+1}$ for each player and signal realization, and a strict ordering $%
r_{i}$ of the $n+2$ available actions (i.e. the actions $a$, $b$, $x$, $%
d_{j\neq i}$) for each player and signal realization. When a player has a
type from $0.5$ up to the first threshold, she employs the action at the
top of the ordering, when her type is larger than the first threshold and at
least as large as the second threshold she employs the action that ranks
second in the ordering, and so on. A threshold strategy in the standard game
is identical to the one of the auxiliary game as far as independent types
are concerned, and employs the weakly dominant action for the partisan
types.}

Second, we observe that for any ex ante strategy profile, each player has a
best response in threshold strategies. That is, the profile of threshold
strategies that maximizes the ex ante common expected utility of the agents
must be an equilibrium of the auxiliary game (\cite{mclennan98}), and it is also
the best equilibrium in terms of information aggregation. Hence, our
equilibrium existence proof amounts to establishing that such a maximizer
profile of threshold strategies exists. We achieve this by arguing that the
ex ante common expected utility of the players in the auxiliary game is
continuous in threshold strategies, and that the set of
threshold strategies is compact.

\begin{lemma}
When types are the agents' private information,
an equilibrium always exists under liquid democracy.
\end{lemma}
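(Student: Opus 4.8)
The plan is to exploit the team-game structure of the auxiliary game and reduce existence to a standard Weierstrass argument over a compact parameter space. In the auxiliary game every player — including the action-constrained $A$ and $B$ types — shares the common-value ex ante objective, so the game is one of common interest: each $EU_i(\Sigma)$ coincides with a single common payoff, call it $W(\Sigma)$. By \cite{mclennan98}, in such a common-interest game any profile maximizing $W$ is a Bayes--Nash equilibrium, so it suffices to exhibit a maximizer of $W$. I would restrict attention to the threshold strategy profiles described in the footnote, for two reasons. First, by the monotone/pivotal-voting argument already invoked in the text, every player has a best response in threshold strategies to any profile of the others, so a threshold best response is a global best response. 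Second, threshold strategies admit a finite-dimensional parametrization that I can make compact.

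Combining these two facts completes the reduction. Let $\Sigma^*$ maximize $W$ over the (compact) set of threshold profiles. For any player $i$, her threshold best response $\mathrm{BR}_i$ to $\Sigma^*_{-i}$ is again a threshold strategy, so $(\mathrm{BR}_i,\Sigma^*_{-i})$ is a threshold profile; since $i$'s payoff is the common payoff $W$, we have $W(\mathrm{BR}_i,\Sigma^*_{-i}) \geq W(\Sigma^*)$, while maximality of $\Sigma^*$ over threshold profiles forces the reverse inequality. Hence $\Sigma^*_i$ already attains the value of $i$'s best (threshold, and therefore global) response, so $\Sigma^*$ is a Nash equilibrium of the auxiliary game. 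Via the one-to-one correspondence between equilibria of the auxiliary and original games, it yields an undominated equilibrium of liquid democracy. The entire burden is thus to show a maximizer of $W$ over threshold profiles exists.

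For existence of the maximizer I would verify the two hypotheses of Weierstrass's theorem. \textbf{Compactness:} following the footnote, a threshold strategy for player $i$ at each signal realization is a point of the closed cube $[0.5,1]^{n+1}$ together with a strict ordering of the $n+2$ actions. For each fixed profile of orderings the thresholds range over a product of closed cubes, which is compact; since there are only finitely many ordering profiles, the full threshold-strategy space is a finite union of compact sets and hence compact. \textbf{Continuity:} conditional on the state, the probability that player $i$ casts any given action equals the $Q_i$-measure of the precision interval the thresholds assign to that action, times the relevant signal probability. Because actions are conditionally independent across players given $\omega$, the induced distribution over vote tallies — hence the election-outcome probabilities and the bounded expected payoff $W$ — is a finite sum of products of these action probabilities. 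It therefore suffices that each action probability be continuous in the thresholds, which holds provided $Q_i$ is atomless, since perturbing a threshold changes the assigned interval by a set of arbitrarily small $Q_i$-measure.

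The step I expect to be the main obstacle is precisely this continuity claim, and within it two delicate points. The first is atomlessness: if $Q_i$ had an atom, sliding a threshold across it would jump the mass assigned to an action and break continuity; this is exactly where the hypothesis permitting atomless distributions is used, and where a finite-support model would instead be disposed of trivially as a finite game. The second is behavior at the seams where the ordering component changes, i.e. where two adjacent thresholds coincide: there the intermediate action is assigned a null interval, so under atomlessness $W$ agrees regardless of how the merged actions are ordered, and continuity extends across the finite union of cubes to the whole space. Once continuity and compactness are in hand, Weierstrass delivers the maximizer, \cite{mclennan98} upgrades it to an equilibrium, and the auxiliary-to-original correspondence gives the Lemma.
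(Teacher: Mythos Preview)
Your proposal is correct and follows essentially the same route as the paper: pass to the auxiliary common-interest game, invoke \cite{mclennan98} so that a maximizer of the common payoff is an equilibrium, observe that best responses can always be taken in threshold form, and then use Weierstrass (compactness of the threshold-strategy space plus continuity under atomless $Q_i$) to produce that maximizer. Your treatment is in fact somewhat more careful than the paper's---you spell out the finite-union-of-cubes compactness and explicitly handle the ``seams'' where adjacent thresholds coincide---but the architecture and all the key ingredients are the same.
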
 


\begin{proof} As explained above, we focus on the auxiliary game and proceed in distinct
steps. First we show that for every strategy profile, there always exists a
best response that is a threshold strategy. Then, we prove that a
utilitarian planner who is constrained to choose a threshold strategy for
every player, can always solve her problem (i.e. a maximizer strategy
profile exists among the ones involving only threshold strategies). And
finally we combine \cite{mclennan98}, along with our first two steps, and show
that an equilibrium in threshold strategies always exists.

Notice that the interim expected utility of player $i$ in the auxiliary game
when she is of type $(I,q_{i})$, receives signal $a$, takes action $y\in
\{a,b,x,d_{j\neq i}\},$ and the other players are expected to use the
ex-ante strategy profile $\sum_{-i}$, which is given by
\begin{align*}
q_{i}\cdot \Pr(A \text{wins}|\omega =a,y,\sum_{-i})+(1-q_{i})  \cdot \Pr (B\text{ wins}|\omega=b,y,\sum_{-i}).
\end{align*}
This is a linear function of $q_{i}\in \lbrack 0.5,1]$.
Therefore, if we also take the expected utility of this player for any of
the other action $y^{\prime }$, the two of them---unless they fully
coincide---can intersect at most one $q_{i}$. That is, one can always
find a threshold value within $[0.5,1]$ such that using $y$ is optimal if we
are below (above) the threshold, and $y^{\prime }$ is optimal if we are
above (below) the threshold. This trivially generalizes to multiple actions,
and hence for any ex-ante strategy profile $\sum_{-i}$ of the other players,
there always exists a best response for player $i$ that is a threshold
strategy (even if $\sum_{-i}$ does not involve threshold strategies).

Now we move to the problem of a social planner who can fix the ex-ante
strategy of all the players (she is allowed to use only threshold
strategies), and wishes to maximize the sum of the players' utilities. Since
all the players have the same ex-post utility, the planner essentially
maximizes the probability that the outcome is correct (i.e. it matches the
state of the world). Given that the set of threshold strategies is compact,
and the probability  the correct outcome wins is continuous in the
threshold strategy employed for each player and signal realization (because
of the fact that precisions are drawn from an atomless distribution), it
follows that the planners problem has a solution. That is, there exists a
profile of threshold strategies that maximizes the ex-ante probability of
election of the alternative that matches the state of the world.

By \cite{mclennan98}, we know that the strategy profile that solves the
planner's problem is an equilibrium of the game in which players are
constrained to use only threshold strategies. While we do not constrain the
player to use threshold strategies, we have already established that among a
player's best responses there is always a threshold strategy. Hence, the
auxiliary game---and thereafter the game we are interested in---always
admits a pure undominated equilibrium in threshold strategies.\footnote{Notice that an ex-ante BNE in threshold strategies is
trivially an interim BNE of the game as well.} 
\end{proof}

Given the foundation laid above, and the arguments used to establish
existence of an equilibrium under liquid democracy, the main welfare results follow directly\footnote{Notice that the above proof can be adapted to also establish equilibrium existence under direct and representative democracy in this environment of incomplete information. Indeed, the only thing we need to change is the interim action set so that our argument corresponds to each specific form of democracy, and the rest follows.}.

\begin{thm}\label{main}
In liquid democracy, the best equilibrium (in terms of information aggregation) is at least as good as the best equilibrium in direct and representative democracy.
\end{thm}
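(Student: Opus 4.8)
The plan is to leverage the characterization of best equilibria established in the preceding Lemma, together with a feasibility-containment argument across the three mechanisms. The Lemma (and the footnote following it) establishes that, for each of liquid (LD), direct (DD), and representative (RD) democracy, the best equilibrium in terms of information aggregation coincides with the solution of a utilitarian planner's problem in the associated auxiliary game: namely, the threshold strategy profile that maximizes the ex-ante probability that the elected alternative matches the state. Writing $V_{LD}$, $V_{DD}$, $V_{RD}$ for these optimal values (the maximal attainable probability of a correct outcome under each mechanism), the theorem reduces to showing $V_{LD} \geq V_{DD}$ and $V_{LD} \geq V_{RD}$.

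For the comparison with DD, I would observe that the interim action set under DD, namely $\{a,b,x\}$, is a subset of the LD action set $\{a,b,x,d_{j \neq i}\}$, and that partisan types are constrained identically in the two auxiliary games (each partisan's forced action being a direct vote for her party). Consequently every threshold strategy profile feasible under DD is also feasible under LD and induces the identical outcome distribution, so the LD planner optimizes over a superset of the profiles available to the DD planner and $V_{LD}\geq V_{DD}$ is immediate.

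The comparison with RD is the main obstacle, because of the fixed delegate set $\mathcal{J}$ and the two mechanical partisan representatives $a^*$ and $b^*$ that lie outside $\mathcal{N}$. I would construct an outcome-preserving embedding of any RD threshold profile into an LD threshold profile: a non-representative who abstains (resp. delegates to some $j \in \mathcal{J}$) in RD does the same in LD; a representative who votes $a$, $b$, or abstains does likewise in LD; and a partisan non-representative who in RD must route her ballot through $a^*$ (resp. $b^*$) instead votes directly for her party in LD. All of these actions are available in LD, and one-step delegation to a representative is a special case of LD's transitive delegation. The only remaining discrepancy is the pair of ballots that $a^*$ and $b^*$ cast on their own account regardless of any delegations; since these add exactly one vote for $A$ and one for $B$, they leave the vote margin $n_a - n_b$ unchanged, and hence under simple majority rule with uniform tie-breaking they do not affect the outcome distribution. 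The constructed LD profile thus reproduces the RD outcome exactly, giving $V_{LD}\geq V_{RD}$.

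Combining the two inequalities yields $V_{LD}\geq\max\{V_{DD},V_{RD}\}$, which is the claim. The step warranting the most care is the RD embedding: I must verify both that every delegation target in $\mathcal{J}$ remains a valid LD delegate and that no informational content is altered either by rerouting partisan delegations into direct votes or by discarding the mutually offsetting own-ballots of $a^*$ and $b^*$. Once this equivalence of induced outcome distributions is secured, the result follows purely from the fact that each mechanism's best equilibrium value is a planner's maximum and that LD's feasible set dominates those of DD and RD.
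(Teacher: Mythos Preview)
Your argument is the paper's own: the Lemma (via McLennan) identifies each mechanism's best equilibrium with the planner's maximum over threshold profiles in the auxiliary game, and then feasibility containment delivers the inequality. The paper's proof is a two-sentence sketch that simply asserts the best DD/RD profile ``is also feasible in the auxiliary game under liquid democracy''; you spell out the RD embedding more carefully, and your observation that the own-ballots of $a^*$ and $b^*$ offset under simple majority with uniform tie-breaking is correct and fills in a detail the paper leaves implicit.

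One small patch is needed in the RD embedding. You write that a non-representative who delegates to $j\in\mathcal{J}$ ``does the same in LD,'' and separately reroute \emph{partisan} non-representatives who go through $a^*$ or $b^*$ into direct votes. But an \emph{independent} non-representative may also, in the optimal RD threshold profile, delegate to $a^*$ or $b^*$ (these lie in $\mathcal{J}$), and those targets are not in $\mathcal{N}$, hence are not valid LD delegates. Indeed your own closing verification clause---``every delegation target in $\mathcal{J}$ remains a valid LD delegate''---fails precisely here. The fix is immediate: extend the rerouting $d_{a^*}\mapsto a$, $d_{b^*}\mapsto b$ to all voters regardless of type. This map is injective on actions, so the image profile is still a threshold profile in LD, and outcome equivalence follows from the same margin-invariance argument you already give for the $a^*,b^*$ own-ballots.
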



In the auxiliary common-value game under direct/representative democracy, the threshold strategy profile that maximizes the probability of the correct alternative winning the election is the best equilibrium in terms of information aggregation. This strategy profile is also feasible in the auxiliary game under liquid democracy, and leads to weakly lower welfare compared to the threshold strategy profile that maximizes the probability of implementing the correct alternative---which constitutes an equilibrium. Since each equilibrium of an auxiliary game corresponds to a unique equilibrium of the corresponding game we are interested in, the result described in Theorem \ref{main} holds.

To show that the best equilibrium in liquid democracy can be strictly better
than the best equilibrium in direct and representative democracy, we consider the following example.



\textbf{An example: common-interest committee with experts and nonexperts and incomplete information on types.} Consider the following modification of the committee with nonexperts and multiple experts introduced in Section \ref{sec: pub}. As before, all $N$ committee members are independents, the prior is $\pi = \frac{1}{2}$, and there are $E$ (odd) experts with precisions $q_e=r$. Suppose that the $N-E$ (even) other voters, nonexperts $i=1,...,N-E$, each have precision drawn iid from a continuous distribution with support on $[0.5,\bar{q}]$ for some $\bar{q} \leq r$. While all voters are certain of the quality of the experts' information, and are certain this information is superior to that of other voters, the precision of each nonexpert's information is random and privately learned. This setup is studied in \cite{Campbell2022}; they focus on type-symmetric equilibria in which: (1) non-experts do not delegate to one another and all adopt the same ex-ante strategy, (2) all non-delegating voters vote sincerely. They show that such an equilibrium exists that strictly improves payoffs (and information aggregation) relative to simple majority voting. This equilibrium takes the form of some cutoff $\tilde{q} \in [0.5,\bar{q}]$ such that a voter $i$ delegates to an expert (picking one uniformly at random) if and only if $q_i < \tilde{q}$ and votes sincerely otherwise.\footnote{\cite{Campbell2022} does not allow for abstention, but the equilibria they find are still equilibria when allowing for abstention.} This type of equilibrium is very intuitive: voters delegate their vote if and only if the information they receive is of low enough quality. Interestingly, this equilibrium maximizes payoffs amongst all sincere voting  semi-symmetric equilibria: the equilibrium cutoff $\tilde{q}$ balances the costs and benefits of delegation optimally. For example (from \citet{Campbell2022}) when $N=3$, $K=1$, $r=0.7$, and $F(q)$ is uniform on $[0.5, 0.7],$ $\tilde q=0.572$ i.e. a nonexpert delegates if and only if their precision is below $0.572.$ This cutoff stems from an indifference condition for a nonexpert with exactly the cutoff precision. A nonexpert $i$'s decision to delegate their vote is conditioned on the event that their vote is pivotal, i.e. (i) $i$'s signal disagrees with the expert (ii) so does the other nonexpert's signal (iii) the other non expert is not delegating. In this case $\tilde q$ solves $$0.7\times(1-\tilde q)\times[1-\mu(\tilde q)]= 0.3\times\tilde q\times\mu(\tilde q),$$ where $\mu(\tilde q)$ is  the expected precision of the other non expert given her precision is above $\tilde q$. I.e. indifference at $\tilde q$ requires that the probability that two non experts are correct and the expert is wrong equals the probability that the expert is correct and both non experts are wrong. 


\textbf{Liquid democracy  with partisans.} As it has been recently shown, in settings like this one, one can improve
upon simple majority voting by introducing cumulative voting, according to which each agent is allocated a number of votes and she is allowed
to distribute them anyway she likes between the two alternatives (see \citealp{Bouton21}). Liquid democracy, like cumulative voting, allows agents to
better correlate their voting influence with their information precision,
leading also to better outcomes than simple majority voting. But the
underlying mechanisms behind these welfare improvements are not equivalent.
As the following example shows, liquid democracy can aggregate information
better in some cases than any voting rule that assigns the same voting power
to each agent, independently of whether one is allowed to abstain partially on some of her votes or not.

Consider that there are three players, and that the priors are even as in the previous example. The first player is a partisan but it is not known which party she supports (she is a partisan of each alternative with equal probability), the second and the third player are informed (i.e. their signal is always correct) independent voters with probability $1/2-x$, uninformed independent voters with probability $1/2-x$, and partisans of each party with probability $x$, for $x\in[0,1/2)$. Without delegation everybody votes sincerely (and the uninformed are indifferent between voting or abstaining), but the outcome is not always optimal (i.e. when 2 players are independent but only 1 is informed, the correct outcome does not win for sure). With delegation, though, the following equilibrium exists: All types except uninformed independent (henceforth, uninformed) ones vote sincerely, and the uninformed type of player two (three) delegates her vote to player three (two). Hence, whenever there are two independent voters and at least one of them is informed, the correct outcome wins for sure (which is strictly better compared to no delegation), and the correct outcome wins with probability $1/2$ otherwise (which is the same as with no delegation).

The reason why this happens is because the uninformed type of player two
understands that increasing the voting power of player three, can only affect
her expected utility if player three is informed. In that case, by delegating
her vote to player three she can guarantee that the correct outcome will
prevail, even when player one is a partisan supporting the incorrect
alternative. When player three is uninformed or partisan, then, from the point
of view of player two, the probability that the correct outcome wins is $1/2$ independently of her action. Therefore, with liquid democracy full
information equivalence obtains in this case. Notice that in direct
democracy \emph{under any anonymous voting rule}, there is no way that the correct
outcome wins with certainty when player two is
informed, player three is uninformed and player one is a partisan of the incorrect alternative, establishing our claim that vote-delegation is not equivalent to any anonymous voting system which precludes transfers of votes from one player to another.

Of course, liquid democracy---as modelled here---does
not always perform better than any kind of voting system. For instance, when
all voters are common value, but their information precision is their
private information, then direct democracy with cumulative voting can do
better than liquid democracy.\footnote{Consider the following simple version of cumulative voting: each player has one vote and can cast any fraction of it to any of the two alternatives. Under this formulation, there is always an equilibrium that achieves full-information equivalence (\citealp{Bouton21}).} Indeed, if we have three players and two of them are independent with information precision $q'$, while the third is an independent either with precision $q'$ or with precision one, then with cumulative voting full-information equivalence obtains in equilibrium, while with liquid democracy it does not.

These examples imply that the optimal way to go is by merging the two
interventions: i.e. allow agents to transfer votes both between alternatives
and voters. Indeed, one can trivially adapt the arguments supporting Theorem
2, to the case in which each agent is endowed not with a unique, but with
several votes instead, and establish that a hybrid system of liquid
democracy with cumulative voting supports an equilibrium that aggregates
information at least as well as the best equilibrium of the standard version
of liquid democracy analyzed here, and of direct democracy with cumulative
voting.

\section{Conclusion} \label{sec: conclusion}
We consider a model of liquid democracy focused on studying the information aggregation properties of this growingly popular voting system. We show that liquid democracy admits an equilibrium that outperforms, in terms of information aggregation and welfare, direct and representative democracies. Such an equilibrium exists even there are partisan voters who are not interested in information aggregation, and even when voters' types are not commonly known. When types are commonly known, we unpack what drives a voter's incentive to delegate and how voters optimally solve the tradeoff presented by delegation. We discuss when delegation is needed to strictly improve outcomes over other voting mechanisms, but also show that miscoordination in delegation can lead to bad outcomes. Finally, we argue that despite the presence of bad equilibria, in some settings liquid democracy may make it easier for voters to coordinate on efficient equilibria than other voting systems. 

Finally, one might wonder whether the specifics of the mechanism in place matter greatly in settings where communication is allowed. To be sure, in common value settings communication is known to improve welfare and to diminish differences between voting systems (see, for instance, \citealp{gerardi2007deliberative}). In settings with partisan voters, though, communication is not enough to guarantee an efficient outcome. Consider that after types and signals are drawn voters are asked to submit a report (e.g. a binary message), and then the distribution of reports is presented publicly. If the frequency of a given report affects monotonically the beliefs of the agents regarding the likelihood that state $a$ is correct, then such a report would also be chosen in equilibrium by partisan voters who support alternative $A$, making the distribution of reports less informative. 

Perhaps more importantly, even if there exists an efficient equilibrium with communication, it is never unique (i.e. an inefficient babbling equilibrium always exists) and it is never the case that the efficient outcome is implemented with iterated elimination of dominated strategies. Unlike delegation, where agents with high information precision have a weakly dominant strategy to vote for the alternative they think is most likely correct, when agents are asked to first communicate and then vote, sending a given message is optimal only conditional on how one believes that it will be perceived by others. That is, a perfectly informed common value voter, whose type is known to the other voters, will find it optimal to send message $x$ in the communication stage only if the other agents interpret $x$ in a certain way, and might not find it optimal otherwise. Therefore, while communication is a valid means to improve welfare, it does not seem to eliminate all the advantages of delegation, like the possibility of implementing the efficient outcome in certain settings without imposing demanding assumptions on beliefs. 

 \bibliographystyle{apalike}
\bibliography{vote}
\newpage
\appendix
\setcounter{table}{0}
\renewcommand{\thetable}{A\arabic{table}}
\setcounter{figure}{0}
\renewcommand{\thefigure}{A\arabic{figure}}

\section{Proofs}

\subsection{Proofs of Section 4.}
We first prove a Lemma.

\begin{lemma}
 For any independent voter $i$ in LD or DD, and any independent $i \in \mathcal{J}$ in RD, the following strategies are weakly dominated: \begin{itemize}
\item[(1)] $\sigma_i(a)=b$ and $\sigma_i(b) \neq b$
\item[(2)] $\sigma_i(a) \neq a$ and $\sigma_i(b)=a$.
\end{itemize}
\end{lemma}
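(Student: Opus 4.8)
The plan is to show that each ``anti-sincere'' strategy is weakly dominated by the strategy that swaps the voter's two signal-contingent actions. Write a generic interim strategy as the pair $(\alpha,\beta)=(\sigma_i(a),\sigma_i(b))$ and its swap as $(\beta,\alpha)$. Fixing any opponent profile $\sigma_{-i}$, I would define, with ties counting as $\tfrac12$,
$$g_a(y) = \Pr(\mathcal{O}=A \mid \omega=a,\ i \text{ plays } y,\ \sigma_{-i}), \qquad g_b(y) = \Pr(\mathcal{O}=B \mid \omega=b,\ i \text{ plays } y,\ \sigma_{-i}),$$
for $y \in \{a,b,x,d_{j}\}$. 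Since $i$ is independent, her state-$a$ payoff is $g_a$ and her state-$b$ payoff is $g_b$, and conditioning on the state and on $i$'s action makes the outcome independent of $i$'s own signal. Hence the expected utility of the interim strategy $(\alpha,\beta)$, averaging over the state and $i$'s signal, is
$$U_i(\alpha,\beta) = \pi\big[q_i\,g_a(\alpha) + (1-q_i)\,g_a(\beta)\big] + (1-\pi)\big[(1-q_i)\,g_b(\alpha) + q_i\,g_b(\beta)\big].$$

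The first key step is a monotonicity (coupling) claim: among all actions, voting $a$ weakly maximizes $g_a$ and weakly minimizes $g_b$, while voting $b$ weakly maximizes $g_b$ and weakly minimizes $g_a$. I would prove this by fixing $\sigma_{-i}$ together with all signal and tie-break realizations and observing that changing $i$'s action only relocates the block of votes she controls---her own vote plus any votes delegated to her, a quantity determined by $\sigma_{-i}$ and others' signals and \emph{independent} of $i$'s own action. Placing that block on $A$ yields a weakly larger $A$-tally than abstaining, delegating, or voting $B$, so $g_a$ is extremal at $a$; the argument for $g_b$ is symmetric. I expect the main obstacle to be handling transitive delegation and cycle-nullification: I must check that routing $i$'s controlled block through a delegate (possibly into a cycle, which nullifies it) can only move the $A$-tally weakly toward the abstention case, never above the value obtained by voting $a$ directly, so the extremal property is preserved.

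Given the monotonicity claim, a direct computation yields the clean difference
$$U_i(\alpha,\beta) - U_i(\beta,\alpha) = (2q_i-1)\Big[\pi\big(g_a(\alpha)-g_a(\beta)\big) - (1-\pi)\big(g_b(\alpha)-g_b(\beta)\big)\Big].$$
For a strategy of type (1) I set $\alpha=b$ and $\beta\neq b$: then $g_a(\alpha)\le g_a(\beta)$ and $g_b(\alpha)\ge g_b(\beta)$ by monotonicity, so the bracket is $\le 0$; since $q_i\ge 0.5$ gives $2q_i-1\ge 0$, the swap $(\beta,b)$ is weakly better against \emph{every} $\sigma_{-i}$. For a strategy of type (2) I set $\beta=a$ and $\alpha\neq a$, and the identical sign analysis shows the swap $(a,\alpha)$ is weakly better. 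Strictness then follows from pivotality: for $q_i>0.5$ the prefactor is positive, and since every voter is pivotal with positive probability under some opponent profile, one can choose $\sigma_{-i}$ making $i$'s block decisive in state $a$, which turns one monotonicity inequality strict and hence the whole expression strictly negative. This exhibits a strategy doing at least as well everywhere and strictly better somewhere, establishing weak domination; the RD case is identical, since a representative $i\in\mathcal{J}$ has the same action set $\{a,b,x\}$ and the same payoff structure.

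Finally I would flag the delicate boundary $q_i=0.5$, where the prefactor $2q_i-1$ vanishes, so the swap gives exact payoff equality and does not by itself witness strict domination. Indeed $U_i$ then reduces to the average $\tfrac12 h(\alpha)+\tfrac12 h(\beta)$ of the unconditional correct-outcome probabilities $h(y)=\pi g_a(y)+(1-\pi)g_b(y)$, and the dominating constant strategy is the maximizer of $h$, which can vary with $\sigma_{-i}$. The lemma is therefore applied to informed independents ($q_i>0.5$), for whom the swap argument is conclusive; I would state it accordingly or treat the perfectly-uninformed case separately.
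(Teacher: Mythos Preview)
Your proof is correct and takes a genuinely different route from the paper's. The paper compares the anti-sincere strategy $(b,y')$ against the two \emph{constant} strategies $(b,b)$ and $(y',y')$, deriving a pair of inequalities (one at each signal realization) that cannot simultaneously hold when $q_i>\tfrac12$, and concludes that one of the two constant strategies must dominate. You instead exhibit the \emph{swap} $(y',b)$ as a single dominating strategy and compute the ex-ante payoff difference directly, obtaining the factored form $(2q_i-1)\big[\pi(g_a(\alpha)-g_a(\beta))-(1-\pi)(g_b(\alpha)-g_b(\beta))\big]$, whose sign follows immediately from the monotonicity of $g_a$ and $g_b$ in the action.

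Your route is arguably tighter on one point: the paper's argument shows that for each fixed $\sigma_{-i}$ at least one of $(b,b)$ or $(y',y')$ weakly beats $(b,y')$, but it is not transparent from the argument as written that the \emph{same} constant strategy works uniformly across all opponent profiles---which is what weak domination requires. The swap sidesteps this entirely, since $(y',b)$ is weakly better against every $\sigma_{-i}$ simultaneously. Your coupling argument for monotonicity is also more explicit about why transitive delegation and cycle-nullification only relocate or nullify $i$'s controlled block and hence never push $g_a$ above the vote-$a$ value or below the vote-$b$ value. Conversely, the paper's approach stays at the level of interim (signal-conditional) payoffs and does not need to carry the prior $\pi$ through the computation. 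Your flag on the $q_i=\tfrac12$ boundary is well taken: the prefactor $2q_i-1$ vanishes there and the swap no longer yields strict improvement anywhere; the lemma is applied in the paper only in settings with informed independents ($q_i>\tfrac12$), so the restriction is harmless in context but worth stating.
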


\begin{proof}
We first prove the result under LD and then show that the same applies for DD and RD. We prove (1) and by symmetry (2) holds as well. Let payoff$(s_i,y,\sigma_{-i})$, $s_i \in \{a,b\}$, $y \in \{a,b,x,d_{j \neq i}\}$ be $i$'s payoff from taking action $y$ when her signal is $s_i$ and others are playing strategy profile $\sigma_{-i}$. $i$'s payoff from using strategy $\sigma_i(a)=b, \sigma_i(b)=y' \neq b$ is:

\begin{align*}
& Pr(s_i=a)payoff(a,b,\sigma_{-i}) + Pr(s_i=b)payoff(b,y',\sigma_{-i}) \\
\end{align*}    

As $i$ is an independent, we have: 

\begin{align*}
& payoff(a,b,\sigma_{-i}) \geq payoff(a,y',\sigma_{-i}) \\
& \iff q_i (Pr(\mathcal{O}=A|\omega=A,\sigma_{-i},y=b)-Pr(\mathcal{O}=A|\omega=A,\sigma_{-i},y=y')) \\
& + (1-q_i)(Pr(\mathcal{O}=B|\omega=B,\sigma_{-i},y=b) - Pr(\mathcal{O}=B|\omega=B,\sigma_{-i},y=y')) \geq 0 \\
& \iff |Pr(\mathcal{O}=B|\omega=B,\sigma_{-i},y=b) - Pr(\mathcal{O}=B|\omega=B,\sigma_{-i},y=y')| \\
& \geq \frac{q_i}{1-q_i}|Pr(\mathcal{O}=A|\omega=A,\sigma_{-i},y=b)-Pr(\mathcal{O}=A|\omega=A,\sigma_{-i},y=y')|
\end{align*}

where the last step comes from the fact that the first term on the LHS of the second equation is weakly negative; voting for $b$ when the state is $A$ results in a weakly lower probability of $A$ winning than taking any other action (this is because other player's actions are independent from $i$'s conditional on the state; note that this statement is true even if the other action we considering for $i$ is delegation to some $j$ and creating a cycle: then all the votes in the cycle would be abstained if $i$ delegates to $j$ but would be cast for $b$ if $i$ votes for $b$). Similarly, the second term on the LHS is weakly positive.

Meanwhile, similarly:

\begin{align*}
& payoff(b,y',\sigma_{-i}) \geq payoff(b,b,\sigma_{-i}) \\
& \iff q_i (Pr(\mathcal{O}=B|\omega=B,\sigma_{-i},y=y')-Pr(\mathcal{O}=B|\omega=B,\sigma_{-i},y=b)) \\
& + (1-q_i)(Pr(\mathcal{O}=A|\omega=A,\sigma_{-i},y=y') - Pr(\mathcal{O}=A|\omega=A,\sigma_{-i},y=b)) \geq 0 \\
& \iff |Pr(\mathcal{O}=A|\omega=A,\sigma_{-i},y=y') - Pr(\mathcal{O}=A|\omega=A,\sigma_{-i},y=b)| \\
& \geq \frac{q_i}{1-q_i}|Pr(\mathcal{O}=B|\omega=B,\sigma_{-i},y=y')-Pr(\mathcal{O}=B|\omega=B,\sigma_{-i},y=b)|
\end{align*}

As $q_i >0.5$, both of these two expressions cannot hold and so the strategy must be dominated by either $\sigma_i(a)=\sigma_i(b)=b$ or $\sigma_i(a)=\sigma_i(b)=y'$.

The same argument applies directly to DD and independents in $\mathcal{J}$ in RD, but without delegation in the strategy space. 

\end{proof}

Recall Proposition 2. We provide a proof below.

\textbf{Proposition 2.} Assume $\pi = \frac{1}{2},$ $n_A\geq n_B$, $n_U$ uninformed ($q_i=0.5$) independent voters and  $E$ independent experts ($q_i>0.5$). There exists $n^*$ such that when $n_U>n^*$, under the best LD equilibrium $\sigma$: \begin{enumerate}
\item $n_A-n_B$ uninformed independents vote unresponsively for $B$.
\item Each remaining uninformed independents either: (1) abstains at both signal realizations, or (2) delegates to the same expert at both signal realizations.
\item The election outcome under LD is the first-best outcome w.p. $1$.
\item Let $v_i,v_j$ be the number of votes held by experts $i$ and $j$. As $n_U \rightarrow \infty$, $v_i / v_j \rightarrow w^*(i)/w^*(j)$ (experts approach optimal weights).
\end{enumerate}
Before going on to the proof,
we define some useful notation. First, let $W^{*}(\mathcal{J}) = \sum_{i \in \mathcal{J}} w^*(i)$ be the summed \citet{nitzan1982} optimal weight of $\mathcal{J} \subset \mathcal{N}$. 

In the case of $\pi=\frac{1}{2}$, we define the following. Let $\mathcal{C}$ represent the set of all partitions of all independents into two subsets or coalitions. Here we represent profiles of signal realizations by the coalitions, $(C,C') \in \mathcal{C}$, they induce; $(C,C')$ denotes the event where all voters $i,j \in C$ receive signal realization $s_i = s_j$ ($=s_C$) and all voters in the complementary set, $C'$, get signal realization $s_{C'} \neq s_C$. Note that each $(C,C') \in \mathcal{C}$ occurs with positive probability.  Each $(C,C')$ also corresponds to two different signal realization profiles: one where voters in $C$ get signal $a$ and $C'$ get signal $b$ and vice versa; because $\pi=1/2$ and there are no partisans, these are informationally equivalent in that both events contain the same strength of evidence in favor of $\omega=s_C$ versus $\omega=s_{C'}$. We use the term `signal realization' to denote both a profile $(s_1,...,s_{I+1})$ and the coalitions $(C,C')$.


\textbf{Proof of Proposition \ref{prop_uninformed}.}

\begin{proof}

The best equilibrium implements the correct outcome with the highest probability. Hence any equilibrium which first-best aggregates the information of all experts must be the best equilibrium (uninformed independents have no information). It remains to show that such an equilibrium exists for $n_U$ large enough.

Assume $n_U \geq n_A-n_B$. For every expert $i \in E$ let $w^{*k}(i) = k w^*(i)$ for $k>0$. Suppose $n_A-n_B$ uninformed voters vote unresponsively for $B$. Fix $k>0$ and suppose that the remaining uninformed voters abstained and all experts $i$ had exactly weight $w^{*k}(i)$. As optimal weights can be scaled without changing election outcomes, this would first-best aggregate the information of experts. Consider any signal realizations partitioning the set of experts into subsets $C$ and $C'$ with disagreeing signals and, wlog,  $W^*(C) \geq W^*(C')$ (note the prior is flat, so it does not matter which signal each group received). Whenever $W^*(C)>W^*(C')$, the decision will be made by a margin of $k(W^*(C)-W^*(C'))$ votes. For large enough $k$ this quantity is greater than $E$ and hence the same decision would be implemented if we rounded down each $w^{*k}(i)$ to an integer (as rounding each $w^{*k}(i)$ to an integer will affect vote tallies by at most $E$). Whenever $W^*(C)=W^*(C')$, voters are indifferent between options under the first-best, and so rounding weights will leave payoffs first-best. 

Fix some such large $k$ and let the rounded weights be $\floor{w^{*k}(i)}$. Then for large enough $n_U$, the same outcomes can be replicated by an LD strategy profile where experts vote sincerely and amongst uninformed voters: $n_A-n_B$ vote unresponsively for $B$, $\floor{w^{*k}(i)}-1$ delegate at both signals to each expert $i \in E$, and the remaining always abstain.

For the final statement, as $k \rightarrow \infty$, $\floor{w^{*k}(i)}/\floor{w^{*k}(j)} \rightarrow w^*(i)/w^*(j)$.
\end{proof}

\subsubsection{Example of Best Non-Neutral Equilibria}

\textbf{Example 1} Suppose $\pi=0.5$ and consider a committee with: (1) one `expert' $e$ with precision $0.8$, (2) `nonexperts' $1,...,4$ with precisions $0.6$. All voters are independents.

The Nitzan and Paroush (1982) weight of the expert is $w^* \in (3,4)$ and that of the nonexperts is $1$. The best neutral equilibrium is as follows:

\begin{itemize}
\item Expert votes sincerely: $\sigma_e(a)=a$, $\sigma_e(b)=b$.
\item (At least) two experts delegate to the expert so she attains her rounded down Nitzan and Paroush (1982) weight ($\sigma_i(a)=\sigma_i(b)=d_e$ for at least two nonexperts $i$).
\item In this equilibrium, the expert dictates; this attains EU $0.8$ (expert's precision).
\item As $w^* \in (3,4)$, the expert dictating is good \emph{except} if all four nonexperts disagree with her. i.e. this equilibrium makes the ex-interim wrong decision only at signal realizations: $(a,a,a,a,b)$ and $(b,b,b,b,a)$ (with the last signal realization in the vectors corresponding to the expert's).
\end{itemize}

Now consider the best equilibrium overall. Note that if there is a delegation cycle ($i$ delegates to $j$ and $j$ delegates to $i$), our model assumes both votes are abstained (although similar examples can be constructed under alternative assumptions). The best equilibrium is:

\begin{itemize}
\item Expert votes sincerely: $\sigma_e(a)=a$, $\sigma_e(b)=b$.
\item Nonexperts $1$ and $2$ vote for $a$ at signal $a$ and delegate to one another at signal $b$: $\sigma_1(a)=\sigma_2(a)=a$, $\sigma_1(b)=d_2$, $\sigma_2(b)=d_1$.
\item Nonexperts $3$ and $4$ vote for $b$ at signal $b$ at delegate to the expert at signal $a$: $\sigma_3(a)=\sigma_4(a)=d_e$, $\sigma_3(b)=\sigma_4(b)=b$. 
\item This equilibrium attains EU $0.80272$.
\item One can check that this equilibrium makes the ex-interim correct decision at all signal realizations except $(a,a,a,a,b)$ ;it correctly chooses $b$ at $(b,b,b,b,a)$ (where again, the last signal realization is the expert's and the first four are nonexperts' 1-4 respectively).
\end{itemize}

\subsubsection{Neutral Equilibria Results}

Through this section we will maintain that the committee has no partisans and $\pi=1/2$. Propositions \ref{prop_single_exp} and \ref{prop_multiexp} deal with neutral equilibria.  Before proving them we introduce some definitions and a Lemma. Lemma \ref{lemma_best_neutral} shows that the best neutral strategy profile of LD, DD, or RD in a committee with a symmetric environment is an equilibrium.

\begin{lemma} \label{lemma_best_neutral}
Suppose $\pi=\frac{1}{2}$. For committees without partisans, the best neutral strategy profile under LD, DD, or RD is an equilibrium. 
\end{lemma}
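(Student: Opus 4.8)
The plan is to exploit the fact that, once we drop partisans and set $\pi=\tfrac12$, every voter shares the same payoff (the indicator that the outcome matches the state), so the game is common-interest and ``welfare'' coincides with each player's own expected utility. Since the set of neutral strategy profiles is finite, a best neutral profile $\sigma^*$ exists. Because payoffs are common, the fact that $\sigma^*$ maximizes over neutral profiles immediately implies that no voter can raise her payoff by switching to another \emph{neutral} strategy: a single-player neutral deviation cannot increase the common objective, which $\sigma^*$ already maximizes. This is the content of the footnoted observation, and it reduces the whole lemma to one claim: when every other voter uses a neutral strategy, voter $i$ always has a best response (over \emph{all} of her available actions, neutral or not) that is itself neutral. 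Granting this, any all-strategy best response of $i$ to $\sigma^*_{-i}$ can be taken neutral, so its value cannot exceed that of $\sigma^*_i$; hence $\sigma^*_i$ is a best response and $\sigma^*$ is a Nash equilibrium.

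The heart of the argument is therefore the neutral-best-response claim, which I would establish via a state-relabeling symmetry. Consider the involution that swaps the states $a\leftrightarrow b$, the alternatives $A\leftrightarrow B$, and every voter's signal $a\leftrightarrow b$ (and, in RD, the advocates $a^*\leftrightarrow b^*$). Under $\pi=\tfrac12$ and state-symmetric precisions, this map leaves the prior and every signal distribution invariant; crucially, it also fixes each neutral strategy (sincere voting, abstention, and fixed delegation all map to themselves; in RD the ``split'' delegation to $a^*$/$b^*$ maps to itself). Hence if $\sigma_{-i}$ is neutral it is invariant under the swap, and the decision problem facing $i$ is symmetric between the two states. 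Writing $V(y\mid s_i)$ for $i$'s expected payoff from action $y$ conditional on signal $s_i$ (others fixed at $\sigma_{-i}$), the symmetry gives $V(y\mid a)=V(\bar y\mid b)$ for every action $y$, where $\bar{\cdot}$ is the mirror map ($\bar a=b$, $\bar b=a$, $\bar x=x$, $\overline{d_j}=d_j$, $\overline{d_{a^*}}=d_{b^*}$). Since $i$ optimizes her signal-$a$ and signal-$b$ actions independently (her expected utility is $\Pr(s_i=a)\,V(\sigma_i(a)\mid a)+\Pr(s_i=b)\,V(\sigma_i(b)\mid b)$), an optimal action $y^*$ at signal $a$ forces the optimal action $\bar{y^*}$ at signal $b$, so $i$ has a best response of the mirror form $(y^*,\bar{y^*})$.

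Inspecting the mirror pairs, $(a,b)$ is sincere, $(x,x)$ is abstention, and $(d_j,d_j)$ (or $(d_{a^*},d_{b^*})$ in RD) is fixed/split delegation---all neutral. The only non-neutral possibility is the ``against-the-signal'' pair $(b,a)$ (respectively $(d_{b^*},d_{a^*})$). To close this gap I would show that casting \emph{with} one's signal weakly dominates the mirror action in value: comparing $V(a\mid a)$ with $V(b\mid a)$ and using swap-symmetry to equate the relevant terms, $P_1=\Pr(\mathcal O=A\mid \omega=a,\,i\text{ casts }A)=\Pr(\mathcal O=B\mid\omega=b,\,i\text{ casts }B)$ and $P_2=\Pr(\mathcal O=A\mid\omega=a,\,i\text{ casts }B)=\Pr(\mathcal O=B\mid\omega=b,\,i\text{ casts }A)$, one obtains $V(a\mid a)-V(b\mid a)=(2q_i-1)(P_1-P_2)\ge 0$, since casting for $A$ weakly raises $\Pr(\mathcal O=A)$ in every state (so $P_1\ge P_2$) and $q_i\ge\tfrac12$. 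Thus whenever the against-signal action is optimal at signal $a$, the sincere action ties it, and a neutral best response always exists.

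The main obstacle is precisely this last step: the mirror argument alone leaves open the against-the-signal strategy, which is not on the neutral list, and one must rule it out (or show it is matched by a neutral action) using the dominance computation rather than symmetry alone. The remainder---existence of $\sigma^*$ by finiteness, the common-interest reduction, and the verification that the neutral sets are swap-invariant in each of LD, DD, and RD---is routine. I would present the LD argument in full, note that DD is the special case without delegation, and observe that RD follows identically once one checks that the representative and non-representative neutral strategy sets are swap-invariant and that delegating to $a^*$ (resp.\ $b^*$) plays the role of a sincere vote for $A$ (resp.\ $B$).
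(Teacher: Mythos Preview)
Your proposal is correct and follows essentially the same approach as the paper: establish that against a neutral profile every voter has a neutral best response via the state-swap symmetry, then invoke the common-interest maximizer argument. The one noteworthy difference is how you dispose of the against-the-signal pair $(b,a)$ (or $(d_{b^*},d_{a^*})$ in RD): the paper appeals to its separate weak-dominance lemma to conclude this strategy cannot be a strict best response, whereas you compute directly that $V(a\mid a)-V(b\mid a)=(2q_i-1)(P_1-P_2)\ge 0$, so sincere voting always ties anti-sincere when others are neutral. Your route is a bit more self-contained and transparent here, since the paper's dominance lemma only guarantees $(b,a)$ is dominated by $(b,b)$ or $(a,a)$---neither of which is itself neutral---and one then needs an extra unwind (using the same symmetry identities) to recover a neutral best response; your computation gets there in one step.
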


\begin{proof}
We prove this first for LD and then extend it to DD and RD. Fix any voter $i$ and consider $i$'s payoff from action $y$ when her signal is $s_i=a$ and payoff from action $y'$ when her signal is $s_i=b$: 

\begin{align*}
& payoff(s_i=a,y,\sigma_{-i}) =  q_i Pr(\mathcal{O}=A|\omega=A,\sigma_{-i},y) + (1-q_i)Pr(\mathcal{O}=B|\omega=B,\sigma_{-i},y) \\
& payoff(s_i=b,y',\sigma_{-i}) =  (1-q_i) Pr(\mathcal{O}=A|\omega=A,\sigma_{-i},y') + q_i Pr(\mathcal{O}=B|\omega=B,\sigma_{-i},y')
\end{align*}

Note that as $\sigma_{-i}$ prescribes all voters strategies that are symmetric with respect to signal realizations, and hence the state: for any $y \not \in \{a,b\}$ $Pr(\mathcal{O}=A|\omega=A,\sigma_{-i},y)=Pr(\mathcal{O}=B|\omega=B,\sigma_{-i},y)$. Also, $Pr(\mathcal{O}=A|\omega=A,\sigma_{-i},y=a)=Pr(\mathcal{O}=B|\omega=B,\sigma_{-i},y=b)$ and $Pr(\mathcal{O}=A|\omega=A,\sigma_{-i},y=b)=Pr(\mathcal{O}=B|\omega=B,\sigma_{-i},y=a)$. Hence if $y =a $ maximizes $payoff(s_i=a,y,\sigma_{-i})$, then $y'=b$ maximizes $payoff(s_i=b,y',\sigma_{-i})$ and $i$ best responds to $\sigma_{-i}$ by voting sincerely. If $y \in \{d_{j}\}_{j\neq i} \cup \{x\}$ maximizes the payoff at $s_i=a$, the same action does at $s_i=b$. Finally, if $y=b$ maximizes $payoff(s_i=a,y,\sigma_{-i})$, then we must have that $y'=a$ maximizes $payoff(s_i=b,y',\sigma_{-i})$; but this strategy for $i$ is weakly dominated (Lemma 1), and so $i$ must have a best response of either sincere voting, abstaining at both signal realizations, or delegating to the same other voter at both realizations.

The same argument applies to DD with the modification that we don't allow for delegation. For RD, for a representative $i \in \mathcal{J}$, the same argument applies with the modification of not allowing for delegation. For a nonrepresentative $i  \not \in \mathcal{J}$, the same argument applies after replacing  the strategy of voting for $A$ with delegating to partisan representative $a^*$ and replacing voting for $B$ by delegating to $b^*$ (sincere voting is then replicated by delegating to $a^*$ at $s_i=a$ and $b^*$ and $s_i=b$).
\end{proof}

We introduce some more notation needed for the neutral equilibria results. Note that any neutral strategy profile in LD and DD can be described by $(V,X)$ where: (1) $X$ is a set of abstainers $X=\{i \in \mathcal{N}:\sigma_i(a)=\sigma_i(b)=x\}$, and (2) a $V$ is a feasible vote allocation $V=(v_1,...,v_N) \in \mathbb{N}_+^{N}$ with $\sum_i v_i = N$ which is the allocation of votes after delegation (\emph{feasible} because votes are integers and must add to $N$; in DD, the allocation is always $V=(1,...,1)$). Any voter $i \not \in X$ with $v_i \geq 1$ votes sincerely. Given any $(V,X)$, let $\Phi = \{i \in \mathcal{N}: v_i=0\} \cup X$ be the set of nonvoters, and let $U(V,X)$ denote the expected utility delivered by $(V,X)$. Given a strategy profile $(V,X)$ and a subset of voters $\mathcal{J}\subset \mathcal{N}$, let $W^{V,X}(\mathcal{J})=\sum_{i \in \mathcal{J}, i \not \in X} v_i$ be the number of votes cast by $\mathcal{J}$. 


\subsubsection{Proof of Proposition \ref{prop_single_exp}}
Recall Proposition \ref{prop_single_exp}.

\noindent{\bf Proposition \ref{prop_single_exp}}

{\it There is a best neutral equilibrium $\sigma$ of the single expert committee characterized as follows.

\begin{enumerate}
\item If $w^*(e)<2$: all voters vote sincerely under $\sigma$.
\item If $w^*(e) \geq 2$: Exactly $\min\{\floor{w^*(e)}-1,\frac{N-1}{2}\}$ nonexperts delegate to the expert (i.e. $\sigma_i(a)=\sigma_i(b)=d_e$). The remaining voters vote sincerely. 
  
\end{enumerate}

These strategies are independent of the number of nonexperts in the committee. Whenever $w^*(e)>2$, the best neutral equilibrium of LD does strictly better than that of DD.}


In the single expert committee, for any neutral strategy profile $(V,X)$, let $v_e$ be the number of votes the expert holds under $V$. We prove the result via some Lemmas.

\begin{lemma} \label{lemma_no_overweight}
Under LD, the single expert committee has a best neutral strategy profile with: (1) the expert voting sincerely. (2) $1 \leq v_e \leq \min\{ \floor{w^*}, \frac{N+1}{2}\}$. 
\end{lemma}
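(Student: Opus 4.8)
The plan is to start from an arbitrary welfare-maximizing neutral profile (one exists since there are finitely many neutral profiles, and it is an equilibrium by Lemma \ref{lemma_best_neutral}) and massage it, through a sequence of modifications that never lower welfare, into one of the claimed form. Two facts drive every step. First, since $\pi=\tfrac12$ and the profile is neutral the environment is symmetric across states, so welfare equals $\Pr(\mathcal{O}=A\mid\omega=a)$ (because $\Pr(\mathcal{O}=A\mid\omega=a)=\Pr(\mathcal{O}=B\mid\omega=b)$); it therefore suffices to track this single quantity. Second, the outcome is a monotone function of each voter's vote direction, so for any fixed sincere ``slot'' of weight $w$, $\Pr(\mathcal{O}=A\mid\omega=a)$ and $\Pr(\mathcal{O}=B\mid\omega=b)$ are increasing in the occupant's precision; the occupant of a sincere slot thus matters only through her precision.

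The first two reductions are easy. Step 1 (the expert votes sincerely): if the expert's signal is unused -- she abstains or delegates -- while some nonexpert votes sincerely in a weight-$w$ slot, swap their roles, so the expert votes sincerely with weight $w$ and the nonexpert takes the expert's old, signal-unused role. Since an abstained or delegated-away signal never enters the tally, the only change is a precision-$r$ signal replacing a precision-$q$ one in that slot, which weakly raises welfare by the monotonicity fact; and if nobody votes sincerely then welfare is $\tfrac12<r$, beaten by the expert voting alone. So a best profile has the expert sincere with $v_e\ge 1$. Step 2 (dictatorship bound): if $v_e>\tfrac{N+1}{2}$ the expert alone holds a strict majority and dictates, so returning her surplus votes to their delegators leaves the outcome, hence welfare, unchanged; thus $v_e\le\tfrac{N+1}{2}$ may be assumed.

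Step 3 (no overweighting) is the heart. After reducing to the canonical form in which every non-delegating nonexpert casts a single sincere vote and nobody abstains, exactly $N$ votes are cast; since $N$ is odd there are no ties, and the nonexpert vote-margin $R$ satisfies $\Pr^{a}(R=k)=\bigl(\tfrac{q}{1-q}\bigr)^{k}\Pr^{a}(R=-k)$. Consider lowering the expert's weight from $m+1$ to $m$ by returning one delegated vote to a sincere nonexpert. Writing $\lambda=\tfrac{q}{1-q}$ and expanding $\Pr(\mathcal{O}=A\mid\omega=a)$ before and after, the difference groups by the value of $R$; crucially, $N$ odd forces $R$ to share the parity of $m$, which annihilates every cross-parity term and collapses the difference to
\[
r(1-q)\bigl(\lambda^{\,m+1-w^{*}}-1\bigr)\,\Pr^{a}(R=-m),
\]
which is $\ge 0$ precisely when $m+1\ge w^{*}$, i.e. whenever $m\ge\lfloor w^{*}\rfloor$. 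Hence any reduction from a weight exceeding $\lfloor w^{*}\rfloor$ weakly improves welfare, and iterating drives $v_e$ down to $\lfloor w^{*}\rfloor$; combined with Step 2 this yields $v_e\le\min\{\lfloor w^{*}\rfloor,\tfrac{N+1}{2}\}$ while preserving optimality.

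The main obstacle is the canonicalization presupposed in Step 3: eliminating nonexpert blocs (splitting a delegated vote off into its own sincere vote) and eliminating abstention both alter the parity of the number of votes cast and rely on the margin being ``positively tilted,'' $\Pr^{a}(R=k)\ge\Pr^{a}(R=-k)$ for $k\ge0$. This tilt is immediate when all weights are equal but can fail once the expert is heavily overweighted relative to her informational weight $w^{*}$ (a low-information, high-weight voter breaks it), so the modifications must be ordered with care -- exploiting that a dictating, over-weighted expert makes the entire nonexpert configuration payoff-irrelevant, so that structure can be fixed for free \emph{before} the expert's weight is brought into the range where the clean parity computation applies, and using the $N$-odd parity constraint to control ties throughout. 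I would also emphasize that the conclusion is only an existence claim: weights $\lfloor w^{*}\rfloor$ and $\lceil w^{*}\rceil$ are frequently welfare-equivalent (as in the worked example following the Proposition), so the final reduction to $\lfloor w^{*}\rfloor$ is weakly, not strictly, improving.
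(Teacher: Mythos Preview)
Your Steps 1 and 2 match the paper's proof essentially verbatim. The divergence is in Step 3 (the bound $v_e\le\lfloor w^*\rfloor$). You proceed incrementally: canonicalize to single-vote nonexperts with no abstention, then compute the welfare change from peeling one delegated vote off the expert, obtaining a sign that flips exactly at $w^*$. The paper instead gives a one-shot, non-computational argument: exhibit the specific profile $(V^*,X^*)$ with $v_e^*=\lfloor w^*\rfloor$ and all other voting nonexperts holding a single vote; by Nitzan--Paroush this profile first-best aggregates the signals of the expert and $N-\lfloor w^*\rfloor$ nonexperts. Any profile with $v_e'>\lfloor w^*\rfloor$ leaves at most $N-\lfloor w^*\rfloor-1$ nonexperts holding votes, so its welfare is bounded above by first-best aggregation of a strictly smaller signal set, which cannot exceed $(V^*,X^*)$. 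No canonicalization, no parity bookkeeping, no incremental comparison is needed.

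What each buys: your route, if the canonicalization were nailed down, would yield more---a monotonicity of welfare in $v_e$ above $\lfloor w^*\rfloor$---and feeds naturally into the pivotality calculation used later for Proposition~\ref{prop_single_exp}. But as you yourself flag, the canonicalization is the weak link: splitting nonexpert blocs and removing abstention are not obviously welfare-nondecreasing when the expert is overweighted but not dictating (your ``positively tilted'' margin condition can fail precisely there), and the ordering fix you sketch is not a proof. The paper's argument sidesteps all of this by comparing directly to a fixed target profile and invoking the first-best upper bound, which is both shorter and complete for the purposes of this lemma.
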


\begin{proof}
First, note that some voter must vote sincerely under any best neutral strategy profile; if not, all voters delegate or abstain, attaining EU $1/2$; this is dominated by any single voter deviating to vote sincerely. Next, the expert must vote sincerely at any best neutral strategy profile. If not then pick some other voter $i$ who is voting sincerely and is pivotal with positive probability. Consider instead that $e$ and $i$ switched strategies and that all votes delegated to $i$ were delegated to $e$ and vice versa; this new profile does strictly better as whenever $i$ was pivotal, $e$ is now pivotal and is more likely to be correct when voting sincerely. Hence we also have that $v_e \geq 1$. 

Next, any strategy profile with $v_e \geq \frac{N+1}{2}$ and the expert voting sincerely is payoff equivalent to the expert holding exactly $\frac{N+1}{2}$ votes (as the expert dictates the outcome regardless). It remains to show  that $v_e \leq \floor{w^*}$. Consider the strategy profile $(V^*,X^*)$ with $X^*=\emptyset$ (all non-delegators vote sincerely) and $V^*$ s.t. $v_e^* = \floor{w^*}$ and $v_i^* \in \{0,1\}$ for all nonexperts $i$. This strategy profile first-best aggregates the private information of the expert and all $N-\floor{w^*}$ nonexperts who do not delegate, since these voters achieve Nitzan and Paroush weights (while the expert's weight is her rounded down NP weight, this does not change election outcomes relative to first-best; see discussion after Proposition \ref{prop_single_exp}). In any strategy profile $(V',X')$ with $V'$ s.t. $v_e'>\floor{w^*}$, at most $N-\floor{w^*}-1$ nonexperts hold votes and hence this allocation can  best first-best aggregate the information of the expert and $N-\floor{w^*}-1$ nonexperts. Hence $(V^*,X^*)$ must do weakly better.
\end{proof}

\begin{lemma} \label{single_expert_no_ties}
Under LD and DD, in any best neutral equilibrium of the single expert committee or the committee with nonexperts and few experts, there is never a tie in the election.
\end{lemma}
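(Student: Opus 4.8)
The plan is to show that any neutral strategy profile under which a tie occurs with positive probability cannot be a best neutral profile; since Lemma \ref{lemma_best_neutral} guarantees that the best neutral profile is itself an equilibrium (so that the best neutral equilibrium coincides with the utility-maximizing neutral profile), this rules out ties in \emph{every} best neutral equilibrium at once. Both committees have an odd number of voters $N$ (in the single-expert committee $N$ is odd by assumption; in the committee with $n$ nonexperts and $E$ experts, $n$ is even and $E$ is odd, so $N$ is odd). Because all non-abstaining voters vote sincerely under a neutral profile $(V,X)$, a tie can arise only when the number $T = W^{V,X}(\mathcal{N})$ of votes actually cast for $A$ or $B$ is even; with $N$ odd this forces $N-T$ to be odd, hence at least one vote is never cast toward the tally. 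So I would start from a neutral profile $(V,X)$ with positive tie probability, fix a voter $i$ whose own vote is uncast (either an abstainer or a member of a delegation cycle), and build a strictly better neutral profile from it.

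The construction I would use activates \emph{exactly one} uncast vote. Set $i$ to vote sincerely and set every voter who, after this change, would delegate directly or transitively into $i$ to abstain instead; this leaves $i$ holding only her own vote, so she casts a single sincere vote, while all votes that previously flowed into $i$ remain uncast (now through abstention rather than through $i$ or through a nullified cycle). The resulting profile is again neutral, the behavior and weights of all originally-casting voters are left untouched, and the number of cast votes rises from $T$ to $T+1$. The point of insisting on a single added vote, rather than activating $i$'s entire held weight, is precisely to avoid flipping decisions that were previously settled by a positive margin; this parity bookkeeping is the step that needs care, and the one mildly fiddly case is when the uncast vote comes from a delegation cycle rather than a plain abstainer, which is handled by the same ``activate one, abstain the rest'' device.

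Finally I would verify the improvement. Let $D$ denote the signed margin for $A$ over $B$ among the originally-casting voters; since $T$ is even, $D$ is even, so either $|D|\geq 2$ or $D=0$. When $|D|\geq 2$ the extra $\pm 1$ from $i$'s vote cannot change the winner, so payoffs are unchanged. When $D=0$---the tie event, which has positive probability---the original profile delivers a coin flip (expected payoff $\tfrac12$), whereas in the new profile $i$ is pivotal and votes her signal. The key observation is that, because the casters vote sincerely with $\pi=\tfrac12$ and symmetric signal precisions, the event $D=0$ is invariant under the $A\leftrightarrow B$, $a\leftrightarrow b$ relabeling, so $\Pr(\omega=A\mid D=0)=\tfrac12$; since $i$'s signal is independent of the casters' signals given $\omega$ and has precision $q_i>\tfrac12$, this yields $\Pr(\omega=A\mid D=0,\,s_i=a)=q_i>\tfrac12$. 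Hence, conditional on the tie, sincere voting by $i$ gives expected payoff $q_i>\tfrac12$, a strict gain, with no loss elsewhere. The new neutral profile is therefore strictly better, contradicting the optimality of $(V,X)$ and completing the argument. Note that strictness relies on every voter here being strictly informed ($q_i>\tfrac12$), which is exactly what distinguishes these committees from the setting of Proposition \ref{prop_uninformed}; and the same reasoning applies verbatim to DD, where an abstainer automatically holds only her own vote and no re-routing is needed.
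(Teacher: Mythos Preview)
Your proof is correct and follows essentially the same approach as the paper: activate a single uncast vote (which must exist by parity since $N$ is odd), observe that this can only affect the outcome at former ties, and note that a sincere vote of precision $q_i>\tfrac12$ strictly beats the coin flip there. You are in fact more careful than the paper on two points---explicitly ensuring via the rerouting-to-abstention device that exactly one vote is activated (handling the cycle case and the case where the abstainer held delegated votes), and spelling out the symmetry argument for $\Pr(\omega=A\mid D=0)=\tfrac12$---but the core idea is the same.
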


\begin{proof}
There are an odd number of votes total, so if $X = \phi$, there will never be a tie. Now consider a neutral profile $(V,X)$ under which there is a tie with positive probability. Then at least one vote must be cast in abstention and an even number of votes are cast for options $A$ and $B$. Consider instead some voter $i$ in $\Phi$ voting sincerely using a single, previously abstained, vote. Election outcomes will be the same under this new profile except when there was a tie under the old one; conditional on a tie, EU was $1/2$ under the old profile, while under the new one, as $i$'s signal is conditionally independent to others, conditional on the same event EU is $q_i>0.5$.  
\end{proof}

\begin{lemma} \label{single_expert_no_nonexpert_del}
Under LD, the single expert committee has a best neutral strategy profile in which every nonexpert $i \not \in X$ has $v_i \leq 1$.
\end{lemma}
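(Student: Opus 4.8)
The plan is to begin from a best neutral profile and repeatedly \emph{unwind} any concentration of votes on a nonexpert, showing that each such move cannot lower welfare; iterating then yields a best neutral profile in which no nonexpert holds more than one vote. Concretely, I would fix a best neutral profile $(V,X)$, which exists by Lemma~\ref{lemma_best_neutral}, and use Lemma~\ref{lemma_no_overweight} to assume the expert votes sincerely with $1 \le v_e \le \min\{\floor{w^*},\tfrac{N+1}{2}\}$. Suppose some nonexpert $i \notin X$ has $v_i \ge 2$. Because each nonexpert is endowed with a single vote and the expert does not delegate, the surplus weight on $i$ arrives through delegation from other nonexperts; pick a nonexpert $j$ that delegates to $i$ and herself receives no delegated votes (a leaf of the delegation tree feeding $i$). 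Modify the profile so that $j$ votes sincerely instead of delegating to $i$: this leaves $i$ with $v_i-1$ votes, gives $j$ one vote, and changes nothing else.

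To show the modification weakly raises the probability of the correct outcome, I would condition on $Z$, the net ($A$-minus-$B$) margin cast by all voters other than $i$ and $j$; this $Z$ has the same distribution under both profiles. Since $\pi=\tfrac12$ and there are no partisans, expected utility equals $\Pr(\text{correct}\mid\omega=a)$, so it suffices to work in state $a$, where I write $g(m)=\Pr(Z+m>0)+\tfrac12\Pr(Z+m=0)$ for the payoff when $i$ and $j$ jointly add margin $m$. A case split on the signals of $i$ and $j$ shows the two profiles agree whenever $i$ and $j$ agree, and differ only on the two disagreement events, each of probability $q(1-q)$ because $i$ and $j$ share precision $q$: the concentrated profile contributes margins $\pm v_i$ while the spread profile contributes $\pm(v_i-2)$. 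Hence the welfare gain from the move equals $q(1-q)\big[g(v_i-2)+g(2-v_i)-g(v_i)-g(-v_i)\big]$, and a telescoping identity for the increments of $g$ reduces this to a positively weighted sum of the differences $\Pr(Z=k)-\Pr(Z=-k)$ at $k=v_i-2,\,v_i-1,\,v_i$, all nonnegative.

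The crux — and the step I expect to be the main obstacle — is therefore to establish that the aggregate margin of the remaining, correct-biased voters is itself biased toward the true state, i.e. $\Pr(Z=k)\ge\Pr(Z=-k)$ for all $k\ge0$ in state $a$. This is \emph{not} true for arbitrary weights: a heavily weighted but poorly informed voter reverses it, and indeed with an overweighted nonexpert present in $Z$ the inequality can fail. So here I would exploit the specific weight structure. Pairing each signal realization $\eta$ of the other voters with its sign-flip $-\eta$, the likelihood ratio is $\prod_\ell (q_\ell/(1-q_\ell))^{\eta_\ell}$, whose sign matches that of the log-likelihood margin $\sum_\ell \eta_\ell \log\frac{q_\ell}{1-q_\ell}$. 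Because each nonexpert carries unit weight equal to her normalized log-likelihood ratio and, by Lemma~\ref{lemma_no_overweight}, the expert's integer weight does not exceed her optimal weight $w^*$, the vote margin $Z$ and this log-likelihood margin share the same sign outside tie configurations; summing the pairwise comparison over all $\eta$ with $\sum w_\ell\eta_\ell=k$ then gives $\Pr(Z=k)\ge\Pr(Z=-k)$, with equality forced at $k=0$ (where the rounding of the expert's weight is immaterial).

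Granting the inequality, each unwinding move weakly increases welfare; since the starting profile was already best, welfare is unchanged and the modified profile remains a best neutral profile. As the move strictly lowers $\sum_i v_i^2$ (replacing a weight-$v_i$ block by weight-$(v_i-1)$ and weight-$1$ blocks), only finitely many moves are possible, so the process terminates at a best neutral profile in which every nonexpert $i\notin X$ has $v_i\le1$. The one point requiring care is that the bias inequality can fail when \emph{another} nonexpert contributing to $Z$ is overweighted; I would address this by unwinding the over-concentration in an order that, at each swap, leaves the expert (whose weight is capped by Lemma~\ref{lemma_no_overweight}) as the only active voter whose weight exceeds her informativeness, noting that the fragile regimes (expert barely more informed than nonexperts, $\floor{w^*}=1$) are precisely those in which no over-concentration can be part of a best profile in the first place.
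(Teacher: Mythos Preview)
Your iterative unwinding is a natural idea, but the step you flag as ``the main obstacle'' really is one, and the ordering fix you sketch does not close it. The bias inequality $\Pr(Z=k)\ge\Pr(Z=-k)$ requires that for every signal profile $\eta$ of the voters contributing to $Z$, the vote margin $\sum_\ell w_\ell\eta_\ell$ and the log-likelihood margin $\sum_\ell\eta_\ell\log\tfrac{q_\ell}{1-q_\ell}$ share sign. That holds when the only potentially overweighted voter in $Z$ is the expert (since $v_e\le\lfloor w^*\rfloor\le w^*$), but it breaks as soon as a \emph{second} overweighted nonexpert sits in $Z$. Your proposed cure is to choose an unwinding order ``that, at each swap, leaves the expert as the only active voter whose weight exceeds her informativeness''---yet if the starting (best) profile has two or more nonexperts with $v_\ell\ge 2$, no first swap can satisfy this: whichever nonexpert you peel from, the other one is in $Z$. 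Concretely, take $v_e=2$, one nonexpert with weight $3$, and unwind a second nonexpert with weight $2$; then $Z$ is the margin from the expert (weight $2$) and the weight-$3$ nonexpert, and in state $a$ one computes $\Pr(Z=1)-\Pr(Z=-1)=q(1-r)-(1-q)r=q-r<0$, so your welfare-change expression $\tfrac12[\Pr(Z{=}2)-\Pr(Z{=}{-}2)]+[\Pr(Z{=}1)-\Pr(Z{=}{-}1)]+\tfrac12\cdot 0=q-r$ is strictly negative. The single swap \emph{lowers} welfare, so you cannot conclude the post-swap profile is still best, and the induction stalls. Since the whole point is to rule out multiply-overweighted nonexperts in a best profile, you cannot assume it away.

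The paper sidesteps this by abandoning the one-vote-at-a-time comparison. It splits on $d=|\Phi|$: if $d\ge\lfloor w^*\rfloor-1$, it simply compares the given profile to the canonical one (expert weight $\lfloor w^*\rfloor$, remaining nonexperts weight $1$), noting the canonical profile first-best aggregates at least as many signals. If $d<\lfloor w^*\rfloor-1$, it moves \emph{all} excess nonexpert weight to the expert in one shot, producing $(V',X')$ in which every voting nonexpert has weight $1$ and the expert has $v_e'<\lfloor w^*\rfloor$; the comparison is then a direct inclusion $Z(V',X')\subset Z(V,X)$ on the set of signal realizations where the active voters are misaggregated. The inclusion is immediate because in $(V',X')$ the only ``mistakes'' are events where the expert's side should win but loses, and moving back to $(V,X)$ only shifts weight away from the expert toward nonexperts, preserving those mistakes. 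This single-step reassignment is exactly what lets the paper avoid the intermediate configurations where your pairing argument fails.
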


\begin{proof}
Consider any neutral strategy profile $(V,X)$ with $1 \leq v_e \leq \min\{\floor{w^*},\frac{N+1}{2}\}$, $e \not \in X$ (as per Lemma \ref{lemma_no_overweight}), and for contradiction $v_i>1$ for some nonexpert $i \not \in X$. To prove the result, it suffices to find a neutral strategy profile $(V',X')$ with $1 \leq v_e' \leq \min\{\floor{w^*},\frac{N+1}{2}\}$, $e \not \in X$, and $v_j' \leq 1$ for all nonexperts $j$ with $U(V',X') \geq U(V,X)$.

Let $d = |\Phi|$. First suppose that $d \geq \floor{w^*} -1$. Let $V'$ be such that $v_e' = \floor{w^*}$ and $v_j' \leq 1$ for all nonexperts $j$. Let $X'=\phi$, so all vote holders vote sincerely. Then, as in the proof of Lemma \ref{lemma_no_overweight}, $(V',X')$ first-best aggregates the information of $e$ and $N-\floor{w^*}$ nonexperts. $(V,X)$ at best first-best aggregates the information of $N-1-d \leq N-\floor{w^*}$ nonexperts and the expert and hence does weakly worse than $(V',X')$.\footnote{In this case, there is over delegation, resulting in an inefficient loss of information. The profile $(V',X')$ reduces overall delegation, meaning more information is aggregated, and sets $v_e$ to the optimal weight.}

Now suppose $d<\floor{w^*}-1$. Let $M = \{j \in \mathcal{N}: j \not \in \Phi, j \neq e\}$ (set of voting nonexperts). Let $K = W^{V,X}(M)$. $K>|M|$ as some voting nonexpert $i$ has $v_i>1$. Consider neutral strategy profile $(V',X')$ with $X'=X$ and $v'_j =1$ for all $j \in M$, $v'_e = v_e + (K-|M|)$. $V'$ reassigns all `extra' votes of voting nonexperts under $V$ to the expert. Note as $d<\floor{w^*}-1$ that $v_e' < \floor{w^*}$. Let $\Phi' = \{i \in \mathcal{N}: v_i' = 0\} \cup X'$ and note $\Phi' = \Phi$. $(V',X')$ moves the relative weight of the expert closer to the optimal weight, while number of delegators stays the same. It follows that $(V',X')$ does better than $(V,X)$; the remainder of the proof shows this explicitly.

First we introduce some notation: For any neutral $(V'',X'')$ with $\Phi''=\{i \in \mathcal{N}: v_i''=0\} \cup X''$, let $Z(V'',X'') = \{(C,C') \in \mathcal{C}: W^{V'',X''}(C)<W^{V'',X''}(C') \text{ and } W^*(C \cap (\mathcal{N} \setminus \Phi''))>W^*(C' \cap (\mathcal{N} \setminus \Phi''))\}$. $Z$ is the set of signal realizations for which $(V'',X'')$ picks outcome $s_{C'}$ when, given the information of voters not in $\Phi''$ (i.e. active voters), $s_C$ would have been better ex-interim.

We will show $Z(V',X') \subset Z(V,X)$. As $\Phi=\Phi'$ and all signal realizations occur with positive probability, this is sufficient to show $(V',X')$ is weakly better than $(V,X)$. Under $(V',X')$, the nonexperts outside $\Phi'$ have optimal weights of $1$ while the expert is underweighted ($v_e < \floor{w^*}$). Hence $Z(X',V')$ comprises of realizations at which the expert's coalition looses when it `should' win.

\begin{align*}
 & Z(V',X') = \{(C,C') \in \mathcal{C}: e \in C, \text{ } W^*(C \cap (\mathcal{N} \setminus \Phi')) > W^*(C' \cap (\mathcal{N} \setminus \Phi')) \\
 & = W^{V',X'}(C') > W^{V,X}(C) \}
\end{align*}

where the first inequality says that $C$ has better information and the second says that $C'$ wins the election (note ties are ruled out by Lemma \ref{single_expert_no_ties}.) Now consider any $(C,C') \in Z(V',X')$. Note that $W^{V,X}(C')\geq W^{V',X'}(C')$ as $X'=X$ and $C'$ contains only nonexperts and all nonexperts possess weakly less votes under $V'$ than $V$. Also, $W^{V',X'}(C) \geq W^{V,X}(C)$ as moving from $V$ to $V'$ entails more delegation to the expert, who is in $C$. This implies that $(C,C') \in Z(V',X') \implies (C,C') \in Z(V,X)$. 

\end{proof}

Now we prove Proposition \ref{prop_single_exp}.

\begin{proof}
Consider any neutral strategy profile $(V,X)$ satisfying the properties of Lemmas \ref{lemma_no_overweight}, \ref{single_expert_no_ties}, \ref{single_expert_no_nonexpert_del}. That is:  $1 \leq v_e \leq \min\{\floor{w^*},\frac{N+1}{2}\}$, $e \not \in X$, and $v_i \leq 1$ for all nonexperts $i \not \in X$. Let $\Phi = \{i \in \mathcal{N}: v_i=0\} \cup X$ and $M = \{i \in \mathcal{N}: i \not \in \Phi, i \neq e\}$ (set of voting nonexperts, each with one vote). If $v_e > |M|$ then the expert dictates the election outcome; the strategy profile described in the Proposition must do weakly better as it first-best aggregates the information of the expert and $\min\{\floor{w^*}-1,\frac{N+1}{2}-1\}$ nonexperts. 

Assume henceforth $v_e \leq M$. By Lemma \ref{single_expert_no_ties}, $|M|+v_e$ must be odd. Fix some $i \in M$ and consider the incentive of $i$ to delegate her vote to the expert (at both possible signal realizations $s_i$). Delegation only changes election outcomes when: $i$'s vote is pivotal. Let $piv_i$ be the event $i$'s vote is pivotal; this occurs at signal realizations at which: (1) $\frac{|M|+v_e -1}{2}$ other nonexperts receive signals disagreeing with $e$, and (2) $\frac{|M|+v_e -1}{2} - v_e$ other nonexperts receive signals agreeing with $e$ (so, excluding $i$, there is a tie). Note that $i$ being pivotal is a function of the signal realizations of voters; as all voting is sincere, the prior is flat, and voters receive conditionally independent signals, if $i$ keeps their vote, the expected payoff conditional on $piv_i$ is $q$. If the vote is given to the expert, then the expected payoff conditioning on pivotality is the conditional probability the expert receives the correct signal (as they vote sincerely). Delegation to the expert weakly increases EU whenever:

\begin{align*}
& Pr(s_e=\omega |piv_i) \geq q \iff \frac{Pr(piv_i|s_e=\omega)Pr(s_e=\omega)}{Pr(piv_i|s_e=\omega)Pr(s_e=\omega)+Pr(piv_i|s_e\neq\omega)Pr(s_e\neq\omega)} \geq q \\
& \iff \frac{pq^{\frac{|M|+v_e -1}{2}-v_e}(1-q)^{\frac{|M|+v_e -1}{2}}}{pq^{\frac{|M|+v_e -1}{2} - v_e}(1-q)^{\frac{|M|+v_e -1}{2}} + (1-p)(1-q)^{\frac{|M|+v_e -1}{2}-v_e}q^{\frac{|M|+v_e -1}{2}}} \geq q \\
& \iff p \geq \frac{q^{v_e+1}}{q^{v_e+1}+(1-q)^{v_e+1}} \iff v_e+1 \leq w^* 
\end{align*}

Where the second line is due to  all voters voting sincerely, implying: $Pr(piv_i|s_e=\omega) = \binom{|M|-1}{\frac{|M|+v_e -1}{2}}q^{\frac{|M|+v_e -1}{2} - v_e}(1-q)^{\frac{|M|+v_e -1}{2}}$ and $Pr(piv_i|s_e \neq \omega) = \binom{|M|-1}{\frac{|M|+v_e -1}{2}-v_e}q^{\frac{|M|+v_e -1}{2}}(1-q)^{\frac{|M|+v_e -1}{2}-v_e}$.  The last equivalence is derived by rearranging and taking logs. The expression implies that nonexperts in $M$ will have an incentive to delegate to the expert unless we have: $v_e = \floor{w^*}$. Such a strategy profile will first-best aggregate the information of the expert and $|M| \leq N-\floor{w^*}$ nonexperts, which does weakly worse than the strategy profile described in the Proposition.
\end{proof}

\subsubsection{Proof of Proposition \ref{prop_multiexp}}
Consider a committee in which all members are independents and the prior is $\pi = \frac{1}{2}$. Suppose there are $n$ (even) voters who each have precision $q_i=q$; call these voters the `nonexperts'. The remaining $E=N-n$ (odd) voters are `experts'; each such voter $i$ has precision $q_i>q$ and $w^*(i) = \frac{\log(\frac{q_i}{1-q_i})}{\log(\frac{q}{1-q})}$.\footnote{Assuming $n$ even and $E$ odd is for convenience.}  Recall Proposition \ref{prop_multiexp}: 

\noindent {\bf Proposition \ref{prop_multiexp}}
{\it In the committee described above, for large enough $n$ there is a best neutral equilibrium of LD $\sigma$ in which: \begin{enumerate}
\item For each expert $i$, exactly $\floor{w^*(i)}-1$ nonexperts delegate to her (i.e. for  $\floor{w^*(i)}-1$ nonexperts $j$, $\sigma_j(a)=\sigma_j(b)=d_i$). 
\item All nondelegators vote sincerely.

\end{enumerate}

Moreover the best neutral equilibrium in LD is strictly better than the best neutral equilibrium in DD whenever $w^*(i)>2$ for some expert $i$.}

For the following results, we will consider fixing a set of $E$ experts $\mathcal{E}$ and growing the number of nonexperts $n$. For $n = 2,4,6,...$, let $(V_n,X_n)$ be the best neutral equilibrium for the committee with $n$ nonexperts. At this equilibrium, let $\Phi_n = \{i \in \mathcal{N}: v_{n,i}=0\} \cup X_n$ be the set of nonvoters, $S_n = \mathcal{N} \setminus \Phi_n$ be the set of sincere voters, and $S^E_n$ and $S^n_n$ be the set of sincere voting experts and nonexperts respectively. Note that by an similar argument to point (1) Lemma \ref{lemma_no_overweight}. $S^E_n$ is nonempty.

The proof of Proposition 4 proceeds as follows. Lemma \ref{lemma_wasting} next shows that the number of delegators and abstainers under the best neutral strategy profile remains bounded as $n$ gets large. The is because the amount of information `wasted' by delegation/abstention cannot be unbounded. An implication of this is that as $n$ increases, under the sequence of best neutral equilibria, the share of all votes held by nonexperts holding only a single vote converges to $1$. Lemma \ref{lemma_multiexp} then shows that under these circumstances nonexperts have incentives to delegate to experts until experts attain their rounded down Nitzan and Paroush optimal weights, as this ensures all experts are optimally weighted relative to single vote holding nonexperts, who dominate the committee. Proposition 4 then follows from this.

\begin{lemma} \label{lemma_wasting}
There exists an $M$ s.t. $|\Phi_n|<M$ for all $n$. 
\end{lemma}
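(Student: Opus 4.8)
The plan is to argue by contradiction: suppose $\sup_n |\Phi_n| = \infty$ and pass to a subsequence along which $|\Phi_n| \to \infty$. Recall that by Lemma \ref{lemma_best_neutral} the best neutral equilibrium is exactly the neutral profile maximizing the common payoff $U$, so $(V_n,X_n)$ is a maximizer; any profile doing strictly better is a contradiction. Since there are only $E$ experts, it suffices to bound the nonexpert waste; write $m_n$ for the number of sincere-voting nonexperts, so $|\Phi_n| \le (n-m_n)+E$. I split the subsequence according to whether $m_n$ stays bounded or $m_n \to \infty$.

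First I set up a benchmark. Let $(\hat V_n,\hat X_n)$ be the canonical profile in which each expert $e$ is delegated exactly $\floor{w^*(e)}-1$ nonexpert votes (so $\hat v_e = \floor{w^*(e)}$), nobody abstains, and every remaining nonexpert votes sincerely; this is feasible for large $n$ and wastes only the constant $D_{\max} := \sum_e (\floor{w^*(e)}-1)$ signals. As $(\hat V_n,\hat X_n)$ is a weighted majority of $n-D_{\max}$ unit-weight signals of precision $q>\tfrac12$ plus bounded total expert weight, a Chernoff bound gives $U(\hat V_n,\hat X_n)\to 1$, and optimality forces $U(V_n,X_n)\ge U(\hat V_n,\hat X_n)\to 1$. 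Hence on any subsequence along which $m_n$ stays bounded, the set of sincere voters is of bounded size with fixed imperfect precisions, so $U(V_n,X_n)$ is bounded away from $1$ — contradicting $U(V_n,X_n)\to 1$. Thus on our subsequence $m_n\to\infty$.

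It remains to rule out $m_n\to\infty$ together with $|\Phi_n|\to\infty$; the point is that once nonexperts dominate, no information can be profitably wasted. I would establish three strict local improvements valid whenever $m_n$ exceeds a threshold $m_0$: (i) if two or more nonexperts abstain, converting two of them to sincere voters strictly raises $U$; (ii) if some nonexpert holds $\ge 2$ votes, splitting that block into two independent sincere unit votes strictly raises $U$; (iii) if some expert $e$ holds $v_e>\floor{w^*(e)}$ votes, reassigning one of her delegated votes to a sincere nonexpert strictly raises $U$ (this is the single-expert marginal computation of Proposition \ref{prop_single_exp} applied in the nonexpert-dominated limit). Each reduces, after conditioning on the positive-probability event that the perturbed votes are pivotal, to one likelihood-ratio comparison: since the residual committee is a weighted majority dominated by $m_n$ unit nonexpert signals, the ratio $\Pr(D=-1)/\Pr(D=+1)$ of the residual vote differential tends to $\tfrac{1-q}{q}$, and the relevant inequality is $\tfrac{1-q}{q}\ge \big(\tfrac{1-q}{q}\big)^2$, i.e. $q>\tfrac12$, which holds strictly. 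All three perturbations change the cast weight by $0$ or $2$ and hence preserve its odd parity, so no ties are created and Lemma \ref{single_expert_no_ties} is respected. Applying (i)--(iii) to $(V_n,X_n)$ for $n$ large (so $m_n\ge m_0$) leaves at most one abstainer, no overweight nonexpert, and every expert weight at most $\floor{w^*(e)}$; as the experts form a fixed finite set, the total number of delegators is then at most $D_{\max}$, so $|\Phi_n|\le D_{\max}+1+E$, contradicting $|\Phi_n|\to\infty$. Taking $M:=D_{\max}+2+E$ works for all large $n$, and the finitely many remaining $n$ are handled by the trivial bound $|\Phi_n|\le n+E$.

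The main obstacle will be establishing the strict local improvements (i)--(iii). The naive monotonicity ``more sincere nonexpert signals help'' is false in general — diluting a lone, highly reliable expert with two weak voters strictly hurts — so it is essential to first secure $m_n\to\infty$ and carry out the pivotality computation only in the regime where nonexperts dominate the vote total. Verifying that the residual-differential likelihood ratio $\Pr(D=-1)/\Pr(D=+1)$ converges to $\tfrac{1-q}{q}$ uniformly over the bounded expert contribution, and checking that (ii)--(iii) (which involve two simultaneous changes each) also collapse to the same clean inequality, is the delicate part of the argument.
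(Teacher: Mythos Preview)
Your first step is fine: the benchmark profile $(\hat V_n,\hat X_n)$ gives $U\to 1$, so optimality forces $U(V_n,X_n)\to 1$, and a bounded set of sincere voters with imperfect precisions cannot achieve that; hence $m_n\to\infty$ along the subsequence.

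The gap is in the second step. You assert that the local improvements (i)--(iii) hold ``whenever $m_n$ exceeds a threshold $m_0$'', and justify this by saying the residual pivotality likelihood ratio tends to $(1-q)/q$ ``uniformly over the bounded expert contribution''. But the expert contribution $\sum_e v_{n,e}$ is \emph{not} known to be bounded at this point: every vote delegated to an expert is counted in $|\Phi_n|$, so bounding the expert weights is essentially the lemma you are proving. Concretely, nothing you have established rules out a profile with $m_n\to\infty$ unit-weight nonexperts \emph{and} an expert holding $c\,m_n$ votes for some fixed $c>0$ (take $q$ large enough that the nonexpert margin typically exceeds $c\,m_n$, so $U\to 1$ still holds). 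In such a regime the relevant pivotality events are not governed by a sea of unit nonexpert signals, and the clean $(1-q)/q$ heuristic breaks down. Indeed, the paper's own version of (iii) (Lemma~\ref{lemma_multiexp}) is proved only under the hypothesis $Y_n/T_n\to 1$, which is deduced \emph{from} Lemma~\ref{lemma_wasting}, not available to prove it. So your argument is circular as written: the ``delicate part'' you flag is not a detail but the whole content of the lemma.

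By contrast, the paper avoids local perturbations entirely and uses a direct Blackwell-informativeness comparison. The observation is that $(V_n,X_n)$ can at best first-best aggregate the signals of the sincere voters $S_n$, hence $U(V_n,X_n)$ is bounded by what a single decision maker could do observing $\Pi(S_n^n)$ together with $\Pi(\mathcal E)$. Once $|\Phi_n|$ is large enough that $|\Phi_n|$ nonexpert signals Blackwell-dominate $\Pi(\mathcal E)$ (a simple Condorcet/CLT fact), this decision maker is strictly worse than one observing $|\Phi_n\cup S_n^n|$ nonexpert signals; and the latter is exactly what all-sincere voting achieves (in fact all-sincere voting does weakly better, since it also uses the experts). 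Thus all-sincere voting strictly beats $(V_n,X_n)$, contradicting optimality. This route needs no pivotality computations and no control over the expert weights.
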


We first prove a helpful claim. For any $k \in \mathbb{N}$, let $\Pi(k)$ be the experiment induced by observing $k$ nonexpert signal realizations.  Let $\Pi^M(k)$ be the garbling of $\Pi(k)$ which just reports  whether a majority of the nonexpert signals realized to $a$ or $b$. Let $\Pi(\mathcal{J})$ be the experiment induced by observing signal realizations of $\mathcal{J} \subset \mathcal{N}$ and for any expert $e$, let $\Pi(e)$ denote the experiment induced by observing just signal realization $s_e$.

\emph{Claim.} For large enough $n$, $\Pi(n)$ is Blackwell more informative (Blackwell, 1953) than $\Pi(\mathcal{E})$.

\begin{proof}

We prove that for any single expert $e \in \mathcal{E}$, there exists a number of nonexperts $n_e$ such that $\Pi(n_e)$ is Blackwell more informative than $\Pi(e)$; it follows that for some finite $n$, $\Pi(n)$ is Blackwell more informative than $\Pi(\mathcal{E})$. 

The posterior upon observing the realization $s^n$ of $\Pi^M(n)$ is: $Pr(\omega=A|s^n=A) = Pr(s^n=A|\omega=A)Pr(\omega=A)/Pr(s^n=A)=Pr(s^n=A|\omega=A)$ (where the last equality is due to the symmetric prior and signal structures). By the Condorcet Jury Theorem, $Pr(s^n=A|\omega=A) \rightarrow 1$ as $n \rightarrow \infty$ ( similarly, $Pr(s^n=B|\omega=B) \rightarrow 1$). Hence, for large enough $n_e$, the distribution of posteriors under $\Pi^M(n_e)$ is a mean-preserving spread of the distribution of posteriors induced by $\Pi(e)$ (under $\Pi(e)$, the posteriors realizes to $q_e$ and $1-q_e$ with equal probability). Hence for large enough $n_e$, $\Pi(n_e)$ is Blackwell more informative than $\Pi^M(n_e)$ which is Blackwell more informative than $\Pi(e)$.
\end{proof}

Now we prove Lemma \ref{lemma_wasting}.

\begin{proof}
Suppose not for contradiction. Then there exists a subsequence of $n$, $\{n_k\} \rightarrow \infty$ s.t. $|\Phi_{n_k}| \rightarrow \infty$. Note that this implies the number of nonvoting nonexperts is goes to infinity (as $E$ finite).

Note that at best, $(V_n,X_n)$ first-best aggregates the information of $S_n$; the information of voters in $\Phi_n$ is not taken into account. If $V_n$ did first-best aggregate the information of $S_n$, this would do exactly as well as a decision maker with the same preferences as all voters who observes exactly $S^n_n$ nonexpert signals and the signals of $S^E_n$ before making a decision.

As $|\Phi_{n_k}| \rightarrow \infty$, by the Claim above, for large enough $k$, $\Pi(|\Phi_{n_k}|)$ is strictly Blackwell more informative than $\Pi(\mathcal{E})$. But then we have that $\Pi(|\Phi_{n_k} \cup S^n_{n_k}|)$ is strictly more informative than $\Pi(\mathcal{E} \cup S^n_{n_k})$. Under $(V_n,X_n)$, voters do weakly worse than if a benevolent DM made decision after observing information from $\Pi(S^n_{n_k})$ and $\Pi(\mathcal{E})$. But the DM would do strictly better if she observed $\Pi(|\Phi_{n_k} \cup S^n_{n_k}|)$ instead.

Consider instead the neutral strategy profile under which all $N$ voters vote sincerely under the initial allocation. As $E>1$, this does strictly better than $N$ nonexperts voting sincerely, which does weakly better than a committee of $|\Phi_{n_k} \cup S^n_{n_k}|$ nonexperts voting sincerely, which does as well as the benevolent DM observing $\Pi(|\Phi_{n_k} \cup S^n_{n_k}|)$ before making a decision (Nitzan and Paroush, 1982, Corollary 2), which in turn is weakly better than our supposed efficient equilibrium. Contradiction.

\end{proof}

The Lemma states that (under efficiency) the number of sincerely voting voters goes to infinity as $n$ does. It also implies (under efficiency) that there is a number of nonexperts, $Y_n \rightarrow \infty$, that each hold a single vote and vote sincerely.

Let $\{(V_n,X_n)\}_{n=2,4,...}$ be a sequence of neutral strategy profiles of the game with $n$ nonexperts and experts $\mathcal{E}$. Let $\Phi_n = \{i: v_{n,i}=0\} \cup X_n$ be the set of nonvoters, $\mathcal{Y}_n = \{i: q_i=q, v_{n_i}=1, i \not \in X_n\}$ be the set of voting nonexperts holding one vote, and $Y_n = |\mathcal{Y}_n|$. Finally let $T_n = (E+n) - \sum_{i \in \Phi_n} v_i$ be the number cast for the two alternatives (i.e. not abstained on) under $(V_n,X_n)$. Proposition \ref{prop_multiexp} will follow from the following Lemma.

\begin{lemma} \label{lemma_multiexp}
Suppose $T_n \rightarrow \infty$, $Y_n / T_n \rightarrow 1$, and for all $n$, ties occur w.p. $0$ and all experts vote sincerely under $(V_n,X_n)$. There exists $\bar{n}$ s.t. $\forall n \geq \bar{n}$ s.t.: (a) if $v_{n,e}<\floor{\frac{\log(q_e/1-q_e)}{\log(q/1-q)}}$ for some $e \in \mathcal{E}$ then it is weakly profitable for some $i \in \mathcal{Y}_n$ to addtionally delegate to $e$ at both signal realizations. (b) if $v_{n,e}>\floor{\frac{\log(q_e/1-_e)}{\log(q/1-q)}}$ for some $e \in \mathcal{E}$ then there is a weakly better neutral strategy profile with one vote from $e$ reassigned to some nonexpert $i$ with $v_{n,i}=0$.
\end{lemma}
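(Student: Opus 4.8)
The plan is to reduce each part to a single pivotal comparison for a representative single-vote nonexpert and then evaluate the relevant posterior likelihood ratio in the limit $n \to \infty$, exactly as in the single-expert computation behind Proposition \ref{prop_single_exp}, but now accounting for the other ``special'' voters (the remaining experts and any multi-vote nonexperts). Fix an expert $e$ holding $v_e := v_{n,e}$ votes and a nonexpert $i \in \mathcal{Y}_n$. Since every non-delegator votes sincerely, having $i$ additionally delegate to $e$ (part (a)) moves $i$'s single vote from the $s_i$-side to the $s_e$-side, and this alters the outcome only on the event that $s_i \neq s_e$ and the pre-delegation margin (votes for the $s_e$-alternative minus votes for the $s_i$-alternative) equals $-1$; because ties occur with probability $0$, the total cast weight $T_n$ is odd, so the margin is always odd and $-1$ is the unique pivotal value. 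On that event the outcome flips from the $s_i$-alternative to the $s_e$-alternative, so delegation is weakly profitable precisely when, conditional on this pivotal event, the state matching $s_e$ is at least as likely as the state matching $s_i$.

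First I would record the exact algebraic identity that drives the result. Conditioning (by symmetry) on $s_e = A$ and $s_i = B$, let $P_Y(m \mid \omega)$ be the probability that the bloc of $Y_n - 1$ other single-vote sincere nonexperts produces net margin $m$ (number voting $A$ minus number voting $B$). Writing the count of $A$-votes as $(Y+m)/2$, the binomial coefficients cancel and
\[
\frac{P_Y(m \mid \omega = A)}{P_Y(m \mid \omega = B)} = \left(\frac{q}{1-q}\right)^{m},
\]
independently of $Y_n$. I would then let $g$ denote the net margin contributed by the set $G$ of all other special voters (remaining experts and multi-vote nonexperts), all voting sincerely, so the pivotal constraint forces the single-vote bloc to net $-v_e - g$. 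Expressing the posterior likelihood ratio as $q_e(1-q)\sum_g P(G=g\mid A)\,P_Y(-v_e-g\mid A)$ over $(1-q_e)q\sum_g P(G=g\mid B)\,P_Y(-v_e-g\mid B)$ and applying the identity above converts every $A$-term into a $B$-term times a power of $q/(1-q)$.

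The two remaining ingredients are a local-limit estimate and a symmetry cancellation. As $Y_n \to \infty$, for each fixed $g$ the binomial-coefficient ratio tends to $1$, giving $P_Y(-v_e-g\mid B)/P_Y(-v_e\mid B) \to (q/(1-q))^{g/2}$; and the two-state symmetry of $G$ yields $P(G=g\mid A) = P(G=-g\mid B)$, whence $\sum_g P(G=g\mid A)(\tfrac{1-q}{q})^{g/2} = \sum_g P(G=g\mid B)(\tfrac{q}{1-q})^{g/2}$ and the $G$-dependent sums cancel. Collecting terms, the pivotal likelihood ratio converges to $\frac{q_e/(1-q_e)}{(q/(1-q))^{v_e+1}}$, which is $\geq 1$ iff $w^*(e) \geq v_e + 1$. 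Since $v_e < \floor{w^*(e)}$ gives $v_e + 1 \leq \floor{w^*(e)} \leq w^*(e)$, the ratio exceeds $1$ for large $n$ and part (a) follows. For part (b) I would run the identical argument for reassigning one of $e$'s (necessarily nonexpert) delegated votes back to a sincere single vote; the pivotal margin is now $+1$, the analogous computation gives limiting ratio $\frac{q_e/(1-q_e)}{(q/(1-q))^{v_e}}$, and $v_e > \floor{w^*(e)}$ forces $v_e > w^*(e)$, so this ratio is strictly below $1$ and the reassignment is strictly profitable for large $n$.

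The main obstacle is controlling the sum over $g$ uniformly: the local-limit step is valid term by term only for $g$ in a bounded range, so I need the total weight of $G$ to stay bounded as $n$ grows. This is where $Y_n/T_n \to 1$ together with Lemma \ref{lemma_wasting} (bounded number of delegators, hence boundedly many excess votes above one, hence boundedly many votes on any expert) is used, guaranteeing $g$ ranges over a fixed finite set. A secondary delicacy is the boundary case of part (a) in which $w^*(e)$ is an integer and $v_e = w^*(e) - 1$: there the limiting ratio equals exactly $1$, and I would conclude weak profitability by noting that delegation then brings $e$ to its exact optimal integer weight, which by \citet{nitzan1982} cannot lower the probability of a correct decision.
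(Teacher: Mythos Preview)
Your approach is correct and takes a genuinely different route from the paper. The paper establishes part (a) by a matching argument: it partitions pivotal signal realizations into equivalence classes (fixing the coalition of every voter outside $\mathcal{Y}_n$) and, for each class $\mathcal{C}_{n,k}$ at which delegation hurts, explicitly constructs a companion class $\mathcal{C}_{n,l}$ (swapping the sides of the non-$e$ experts and shifting $v_{n,e}$ single-vote nonexperts across) at which delegation helps; it then verifies, pair by pair, that the conditional gain at $\mathcal{C}_{n,l}$ exceeds the loss at $\mathcal{C}_{n,k}$ and that $\mathcal{C}_{n,l}$ is the more probable class for large $n$. You instead aggregate over all pivotal realizations into a single posterior odds ratio and dispatch the contribution of the other special voters in one stroke via the symmetry identity $P(G=g\mid A)=P(G=-g\mid B)$. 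Both routes rest on the same binomial-coefficient limit (the paper computes $\binom{Y_n-1}{M_D^n}/\binom{Y_n-1}{M_D^n-v_{n,e}}\to 1$, which is your local-limit step), but your route makes the reduction to the single-expert threshold $\frac{q_e/(1-q_e)}{(q/(1-q))^{v_e+1}}$ transparent and is more economical, while the paper's pairing yields a stronger per-class domination statement at the cost of heavier bookkeeping. One caveat worth flagging: to control the sum over $g$ you invoke Lemma~\ref{lemma_wasting}, but that lemma is proved only for sequences of \emph{best} neutral equilibria and is not among the stated hypotheses of Lemma~\ref{lemma_multiexp} (those give $T_n-Y_n=o(T_n)$, not boundedness). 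The paper's proof carries the same implicit reliance on bounded expert weights---its claim that the key ratio ``can only take finitely many values'' needs the expert-weight difference $W^{V_n,X_n}(E_D^n)-W^{V_n,X_n}(E_{D'}^n)$ bounded---so this is a shared wrinkle rather than a defect of your argument relative to theirs; and your explicit treatment of the integer knife-edge $w^*(e)=v_e+1$ is, if anything, more careful than the paper, which asserts the strict inequality $w^*>v_{n,e}+1$ without pausing on the equality case.
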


\begin{proof}
First we prove (a). Take an infinite increasing sequence of $n$ and strategy profiles $(V_n,X_n)$ satisfying the three conditions and suppose: (1) there is some expert $e$ s.t. for each $n$ along the sequence $v_{n,e}<\floor{\frac{\log(q_e/1-q_e)}{\log(q/1-q)}}$ (we can assume the expert in violation is the same along the sequence as $E$ is finite and so if not, we can find an appropriate subsequence), (2) $Y_n/n \rightarrow 1$. Assume $\mathcal{Y}_n$ is nonempty $\forall n$ along the sequence (or else we can take an appropriate subsequence). As all nonexperts are identical, we fix some nonexpert $i$ and assume $i \in \mathcal{Y}_n$ $\forall n$.

\emph{Some definitions.} Let $\mathcal{C}_n$ be the set of possible signal realizations for the committee with $n$ nonexperts. Given $(V_n,X_n)$, we say signal realizations $(C_n,C_n')$ and $(D_n,D_n')$ are equivalent to $i$ in the committee with $n$ nonexperts if: $i \in C_n \iff i \in D_n$, for all $j \not \in \mathcal{Y}_n$ $j \in C_n \iff j \in D_n$, $|\{j \in \mathcal{Y}_n : j \in C_n\}| =|\{j \in \mathcal{Y}_n : j \in D_n\}|$, and $|\{j \in \mathcal{Y}_n : j \in C_n'\}| =|\{j \in \mathcal{Y}_n : j \in D_n'\}|$. In other words, $(C_n,C_n')$ and $(D_n,D_n')$ are identical signal realizations up to permuting the labels of voting nonexperts holding $1$ vote other than $i$. Events $(C_n,C_n')$ and $(D_n,D_n')$ are equally probable. The expected utility at $V$ from $i$ keeping her own vote and casting it sincerely is the same conditional on realization $(C_n,C_n')$ as it is under $(D_n,D_n')$; this is because $C_n$ and $D_n$ contain exactly the same informational content, as do $C_n'$ and $D_n'$. Similarly, $i$'s expected utility from delegating to any $e' \in \mathcal{E}$ is the same conditional on $(C_n,C_n')$ as on $(D_n,D_n')$. For any $n$, $\mathcal{C}_n$ can be partitioned into `classes' of signal realizations $\mathcal{C}_{n,1},...,\mathcal{C}_{K,n}$, where $(C_n,C_n'),(D_n,D_n') \in \mathcal{C}_{n,k}$ if and only if $(C_n,C_n')$ and $(D_n,D_n')$ are equivalent to $i$.

For any $n$ on the sequence consider any class $\mathcal{C}_{n,k}$ of realizations equivalent to $i$ under $(V_n,X_n)$. Take any $(D_n,D_n') \in \mathcal{C}_{n,k}$ with $i \in D_n$, $e \in D_n'$. Let $E_{D}^n = \{e' \in \mathcal{E}: e' \in D_n\}$ be all the experts in $D_n$ and $E_{D'}^n = \mathcal{E} \setminus (E_D^n \cup \{e\})$ be the same for $D_n'$ excluding $e$. Let $M_D^n$ ($M_{D'}^n$) be the number of nonexperts in $\mathcal{Y}_n$ that are in $D_n$ ($D_n'$), excluding $i$. By definition, $E_{D}^n,E_{D'}^n,M_D^n,M_{D'}^n$ are the same for all $(D_n,D_n') \in \mathcal{C}_{n,k}$ and are different (in at least one component) for all $(C_n,C_n') \not \in \mathcal{C}_{n,k}$; these four sets uniquely define $\mathcal{C}_{n,k}$. 

Now we return to the proof. For any $n$ along the sequence and consider any signal profile $(D_n,D_n')$ for which the delegation from $i$ to $e$ is bad under $(V_n,X_n)$ \textemdash it strictly decreases EU relative to $i$ voting sincerely. At any such event, we have: $i \in D_n$, $e \in D_n'$ ($i$ and $e$ disagree), $Pr(s_{D_n} = \omega)>1/2$ ($s_{D_n}$ is more likely to be correct and so $i$ is better off casting her own vote), $W^{V_n,X_n}(D_n) = \frac{T_n+1}{2} = W^{V_n,X_n}(D_n') + 1$ (as there are no ties under $(V_n,X_n)$, $T_n$ is odd and this is necessary for $i$'s vote to be pivotal). Let $\mathcal{C}_{n,k} \ni (D_n,D_n')$ be the class of $(D_n,D_n')$ and note that delegation from $i$ to $e$ is equally bad for all signal realizations in this class. 

For each such $\mathcal{C}_{n,k}$, we will show that for large enough $n$, there is a corresponding \emph{unique} class of signal realizations, $\mathcal{C}_{n,l}$, for which delegation is good. That is, for all $(F_n,F_n') \in \mathcal{C}_{n,l}$, $i \in F_n, e \in F_n'$ and  $W^{V_n,X_n}(F_n) = \frac{T_n+1}{2} = W^{V_n,X_n}(F_n')+1$ ($i$'s vote is pivotal) and $Pr(s_{F_n}' = \omega)> \frac{1}{2}$. Further, we will show that $Pr(s_{F_n'} = \omega) > Pr(s_{D_n} = \omega)$ (implying that the EU gain from delegation conditional on $\mathcal{C}_{n,l}$ exceeds loss under $\mathcal{C}_{n,k}$) and $Pr(\mathcal{C}_{n,l})>Pr(\mathcal{C}_{n,k})$ (i.e. a class of signal realizations in which delegation is good occurs with larger probability). Together, this implies that delegation is profitable. The remainder of the proof constructs $\mathcal{C}_{n,l}$ satisfying the above properties.

Fix a $\mathcal{C}_{n,k}$ for which delegation is bad and consider any $(D_n,D_n') \in \mathcal{C}_{n,k}$. Note that or large enough $n$, $M_{D}^n = (Y_n-1) - M_{D'}^n \geq (Y_n-1) - (\frac{T_n}{2} - 1 - v_{n,e})>v_{n,e}$; the first inequality holds because there may be experts other than $e$ in $D'$, and the second holds because $Y_n \geq \frac{T_n}{2}+1$ for large enough $n$ (as $Y_n/T_n \rightarrow 1$). Consider any such large enough $n$.

Define $\mathcal{C}_{n,l}$ as follows. Let each $(F_n,F_N') \in \mathcal{C}_{n,l}$ be such that $F_n$ contains $E_{D'}^n$, $M_{D'}^n+v_{n,e}$ nonexperts (excluding $i$), and $i$. $F_n'$ contains $E_{D}^n$, $M_{D}^n - v_{n,e}$ nonexperts, and $e$. Then we have $W^{V_n,X_n}(F) = \frac{T_n+1}{2}=W^{V_n,X_n}(F')+1$. 

Let $r_{E_{D}^n} = \Pi_{e' \in E_{D}^n} r_{e'}$, $w_{E_{D}^n} = \Pi_{e' \in E_{D}^n} (1-r_{e'})$ and similar for $r_{E_{D'}^n},w_{E_{D'}^n}$. Note that:

\begin{align*}
& \frac{Pr(s_{F_n'} = \omega)}{Pr(s_{F_n'} \neq \omega)} \frac{Pr(s_{D_n} \neq  \omega)}{Pr(s_{D_n} = \omega)}  = \frac{r_e r_{E_{D}^n} q^{M_{D}^n-v_{n,e}} (1-q) w_{E_{D'}^n} (1-q)^{M_{D'}^n+v_{n,e}}}{(1-r_e) w_{E_{D}^n} (1-q)^{M_{D}^n-v_{n,e}} q r_{E_{D'}^n} q^{M_{D'}^n+v_{n,e}}} \frac{(1-q) w_{E_{D}^n} (1-q)^{M_{D}^n} r_e r_{E_{D'}^n} q^{M_{D'}^n}}{q r_{E_{D}^n} q^{M_{D}^n} (1-r_e) w_{E_{D'}^n} (1-q)^{M_{D'}^n} } \\
& = \frac{r_e^2 q^{M_{D}^n + M_{D'}^n - v_{n,e}} (1-q)^{M_{D}^n + M_{D'}^n+v_{n,e}+2}}{ (1-r_e)^2 q^{M_{D}^n + M_{D'}^n+v_{n,e}+2} (1-q)^{M_{D}^n + M_{D'}^n - v_{n,e}}} \\
& = \left[ \frac{r_e (1-q)^{v_{n,e}+1}}{(1 - r_e) q^{v_{n,e}+1}} \right]^2 
\end{align*}

This expression is $> 1$ if and only if:

\begin{align*}
& \log(\frac{r_e}{1-r_e}) > (v_{n,e}+1) \log(\frac{q}{1-q}) \iff w^* > v_{n,e}+1
\end{align*}

This holds and so $\frac{Pr(s_{F_n'} = \omega)}{Pr(s_{F_N'} \neq \omega)} \frac{Pr(s_{D_n} \neq  \omega)}{Pr(s_{D_n} = \omega)} >1$  which implies $Pr(s_{F_n'} = \omega) > Pr(s_{D_n} = \omega) $. It remains for us to show that for large $n$, $\mathcal{C}_{n,l}$ is more probable than $\mathcal{C}_{n,k}$:

\begin{align*}
& \frac{Pr(\mathcal{C}_{n,k})}{Pr(\mathcal{C}_{n,l})} \\
& = \frac{\binom{Y_n-1}{M_{D}^n}[r_{E_{D}^n} q^{M_{D}^n + 1} w_{E_{D'}^n}(1-r_e) (1-q)^{Y_n - 1 - M_{D}^n} + w_{E_{D}^n} (1-q)^{M_{D}^n + 1} r_{E_{D'}^n} r_e  q^{Y_n - 1 - M_{D}^n} ]}{\binom{Y_n-1}{ M_{D}^n-v_{n,e}} [r_{E_{D'}^n} q^{Y_n - 1 - M_{D}^n + v_{n,e} + 1} w_{E_{D}^n} (1-r_e) (1-q)^{M_{D}^n-v_{n,e}} + w_{E_{D'}^n} (1-q)^{Y_n - 1 - M_{D}^n + v_{n,e} + 1} r_{E_{D}^n} r_e q^{M_{D}^n-v_{n,e}}]} \\
& = \frac{\binom{Y_n-1}{M_{D}^n}}{\binom{Y_n-1}{ M_{D}^n-v_{n,e}}} \\
& \times \frac{[r_{E_{D}^n} q^{M_{D}^n -v_{n,e}} w_{E_{D'}^n}(1-q)^{Y_n - 1 - M_{D}^n} (q^{v_{n,e}+1} (1-r_e))  + w_{E_{D}^n} (1-q)^{M_{D}^n -v_{n,e}} r_{E_{D'}^n}   q^{Y_n - 1 - M_{D}^n} ((1-q)^{v_{n,e}+1} r_e)]}{ [w_{E_{D}^n} (1-q)^{M_{D}^n -v_{n,e}} r_{E_{D'}^n}   q^{Y_n - 1 - M_{D}^n} (q^{v_{n,e}+1} (1-r_e))  + r_{E_{D}^n} q^{M_{D}^n -v_{n,e}} w_{E_{D'}^n}(1-q)^{Y_n - 1 - M_{D}^n} ((1-q)^{v_{n,e}+1} r_e)]} \\
\end{align*}

Let $a = r_{E_{D}^n} q^{M_{D}^n -v_{n,e}} w_{E_{D'}^n}(1-q)^{Y_n - 1 - M_{D}^n}$, $b=w_{E_{D}^n} (1-q)^{M_{D}^n -v_{n,e}} r_{E_{D'}^n}   q^{Y_n - 1 - M_{D}^n}$, $c=(1-q)^{v_{n,e}+1}r_e$ and $d = q^{v_{n,e}+1} (1-r_e)$. Note that $c > d$ as $e$'s signal is strictly more informative than $v_{n,e}+1$ nonexperts and so $e$ is more likely to receive the correct signal.\footnote{More formally: $c >d \iff \log(c)>\log(d) \iff log(\frac{r_e}{1-r_e}) > (v_{n,e} + 1) \log (\frac{q}{1-q})$ which is true.} Similarly $a>b$ as $D$ is collectively more informative than $D'$ and so $D \setminus \{v_{n,e} + 1 \text{ nonexperts}\}$ must be more informative than $D' \setminus \{e\}$.\footnote{Formally: $a>b \iff \log(a) > \log(b) \iff \sum_{k \in E_{D}^n} \log(r_k/1-r_k) + (M_{D}^n - v_{n,e})\log(q/1-q) > \sum_{k \in E_{D'}^n} \log(r_k/1-r_k) + (Y_n-1-M_{D}^n)\log(q/1-q)$. This is true as we have $LHS + (v_{n,e} + 1) \log(q/1-q) > RHS + \log(r_e/1-r_e)$ but $(v_{n,e} + 1) \log(q/1-q) < \log(r_e/1-r_e)$} Hence the ratio of the bracketted terms in the expression above is: 

\begin{align*}
& \frac{ad + bc}{bd + ac} < 1
\end{align*}

Note that $M^n_D = \frac{T_n+1}{2}-1-W^{X_n,V_n}(E^n_d)$ and $M^n_{D'}=Y_n-1-M^n_D = \frac{T_n-1}{2} - W^{X_n,V_n}(E^n_{D'}) - v_{n,e}$; rearranging: $Y_n = 2 M_{D}^n + 1 - v_{n,e} + W^{V_n,X_n}(E_{D}^n) - W^{V_n,X_n}(E_{D'}^n)$. Substituting this for $Y_n$ in $a$ and $b$ and doing some algebra, we can see that the ratio above is independent of $Y_n$ and $M_D^n$; hence the only dependence on $n$ is through $r_{E^n_D},w_{E^n_D},r_{E^n_{D'}},w_{E^n_{D'}}$, and $v_{n,e}$. As $\mathcal{E}$ contains finitely many experts (who can hence be split into $D_n$ and $D_n'$ in finitely many ways), and $0<v_{n,e} < \floor{\frac{\log(q_e/1-q_e)}{\log(q/1-q)}}$, the LHS of the ratio can only finitely many values and hence is bounded away from $1$. We next show that $\frac{\binom{Y_n-1}{M_{D}^n}}{\binom{Y_n-1}{ M_{D}^n-v_{n,e}}}  \rightarrow 1$, implying that $\frac{Pr(\mathcal{C}_{n,k})}{Pr(\mathcal{C}_{n,l})}<1$ for all $n$ large enough.

Note that $Y_n = 2 M_{D}^n + 1 - v_{n,e} + W^{V_n,X_n}(E_{D}^n) - W^{V_n,X_n}(E_{D'}^n)$ also implies that $Y_n /M_{D}^n \rightarrow 2$ as $n \rightarrow \infty$ (divide the expression by $Y_n$; as $Y_n/T_n \rightarrow 1$, $\frac{1-v_{n,e} + W^{V_n,X_n}(E_{D}^n) - W^{V_n,X_n}(E_{D'}^n) }{Y_n} \rightarrow 0$). Hence: 

\begin{align*}
& \frac{\binom{Y_n-1}{M_{D}^n}}{\binom{Y_n-1}{M_{D}^n-v_{n,e}}} = \frac{(Y_n - 1 -M_{D}^n +v_{n,e})! (M_{D}^n-v_{n,e})!}{M_{D}^n! (Y_n-1-M_{D}^n)!} \\
& = \frac{(1/M_{D}^n)^{v_{n,e}}\Pi_{k=1}^{v_{n,e}} (Y_n-1-M_{D}^n+k)}{(1/M_{D}^n)^v_{n,e}\Pi_{k=1}^{v_{n,e}} (M_{D}^n - v_{n,e} +k)} = \frac{\Pi_{k=1}^{v_{n,e}} (Y_n/M_{D}^n-1/M_{D}^n-1+k/M_{D}^n)}{\Pi_{k=1}^{v_{n,e}} (1 - v_{n,e}/M_{D}^n +k/M_{D}^n)} \rightarrow 1
\end{align*}

This concludes the proof of Case (a).

Case (b) follows from the same working. Take such a sequence of $n$ and neutral strategy profiles $(V_n,X_n)$ satisfying the conditions in the Lemma and s.t. there exists some expert $e \in \mathcal{E}$ with efficient $v_{n,e}>\floor{\frac{\log(r_e/1-r_e)}{\log(q/1-q)}}$ for all $n$ along the sequence. Consider some nonexpert $i$ with $v_{n,i}=0$ $\forall n$ and consider the sequence of neutral strategy profiles $(V_n',X_n')$ with $X_n'=X_n$ $\forall n$, and $\forall n$: $v_{n,i}'=1$, $v_{n,e}' = v_{n,e}-1$ and $v_{n,j}'=v_{n,j}$ for all $j \neq i,e$. Note that in the proof for case (a), delegation from $i$ to $e$ was strictly profitable (for large enough $n$) \emph{if and only if} $v_{n,e}' < \floor{\frac{\log(r_e/1-r_e)}{\log(q/1-q)}}$. As $v_{n,e}' = v_{n,e} - 1 \geq \floor{\frac{\log(r_e/1-r_e)}{\log(q/1-q)}}$, it strictly decreases EU for $i$ to delegate to $e$ under $(V_{n}',X_n')$ for large enough $n$ or, in other words, it is strictly profitable for $e$ to delegate a single vote to $i$ under $V_n$ (for large enough $n$).

\end{proof}


%

\textbf{Proof of Proposition 4.} 

\begin{proof}
First note that Lemma \ref{single_expert_no_ties} applies in this setting as well, by the same argument. Hence ties occur w.p. $0$ under the best neutral equilibrium. By, Lemma \ref{lemma_wasting} $Y_n \rightarrow \infty \implies T_n \rightarrow \infty$ and $Y_n / T_n \rightarrow 1$ under any sequence of best neutral equilibria $(V_n,X_n)$. Further, for large enough $n$, all experts vote sincerely under $(V_n,X_n)$; this is because for $n$ large enough $Y_n > 0$, and hence if some expert $e \in \mathcal{E}$ were abstaining, payoffs would be improved by swapping the vote allocation and voting strategies of $e$ and some $i \in \mathcal{Y}_n$. Lemma \ref{lemma_multiexp} then implies part (1) of the Proposition. Note that the argument of Lemma \ref{lemma_multiexp} also directly implies that any sincere voting nonexpert $j$ must have $v_{n,j}=1$ for large enough $n$; to see this, if $v_{n,j}>1$, then using part (b) of the Lemma with $j$ instead of $b$ implies that payoffs could be improved by reassigning a vote from $j$ to some nonexpert $i$ with $v_{n,i}=0$, and having $i$ vote sincerely (nothing about the proof required that $e$ was an expert). 

Finally, to complete the proof of part (2) we need to show that for large enough $n$, all nondelegating nonexperts vote sincerely, or equivalently that no nonexperts abstain. To see this, first suppose that for some $n$, the best neutral equilibrium $(V_n,X_n)$ has $X_n \phi$ and satisfies all properties in the previous paragraphs. Now consider adding two nonexperts $i,j$ to the committee and suppose all other voters are adopting the same strategies as under $(V_n,X_n)$. Compare the welfare from $i,j$ both abstaining versus both voting sincerely. The two strategies only differ when $s_i = s_j$ (if $s_i=s_j$, then in either case no net extra votes are cast for either alternative). Conditional on $s_i=s_j$, collectively $i$ and $j$ have private information equivalent to that of a single voter $k$ with a precision such that her optimal weight is $w^*(k)=2$. $i$ and $j$ also collectively hold two votes. Lemma \ref{lemma_multiexp} tells us that when $n$ is large enough, conditional on $s_i=s_j$, if voters $i$ and $j$ were replaced with voter $k$, voter $k$ should optimally vote sincerely with two votes (and this is better than $k$ abstaining). Hence, it is better for $i$ and $j$ to both vote sincerely than to abstain. 

Finally, to show that no abstention is optimal for large enough $n$, suppose for some large $n$ that the best neutral equilibrium $(V_n,X_n)$ has $X_n \neq \phi$. Then it must be that $X_n$ contains only nonexperts (all experts vote sincerely at best equilibrium), $X_n$ wlog contains only nonexperts each with $v_{n,i}=1$ (if $v_{n,i}>1$ for some abstaining nonexpert, we can find a payoff equivalent strategy profile by distributing the additional votes to other nonexperts who abstain), and $X_n$ contains an even number of voters (Lemma \ref{single_expert_no_ties}). Let $m = |X_n|$. Then for committee with $n'=n-m$, the best neutral equilibrium $(V_{n'},X_{n'}$) must have $X_n'=\phi$ and must be otherwise identical to $(V_n,X_n)$ (because: (1) we can replicate the outcomes of $(V_n,X_n)$ as the additional $m$ abstaining nonexperts played no role, and (2) the best equilibrium can do no better with less voters). Similarly, for the committee with $n'+2$ nonexperts, the best neutral equilibrium must have two abstaining nonexperts and be outcome equivalent to $(V_n,X_n)$. But for large enough $n$, if $(V_{n'},X_{n'})$ has $X_{n'}=\phi$, the best equilibrium with $n'+2$ nonexperts must also have $X_{n'+2}=\phi$ (see previous paragraph). Contradiction.

That the best neutral equilbrium of LD does strictly better than that of DD for large $n$ when some expert $e$ has $\floor{w^*(e)}>1$: the best neutral equilibrium of DD is sincere voting for all voters. By Lemma \ref{lemma_multiexp}, under this strategy profile this is a strict incentive for some nonexpert to delegate to $e$.
\end{proof}

 \subsection{Proof of Section 5.}
 
 \noindent {\bf Proof of Proposition \ref{prop: dominant}
 Below, we denote the number of independents who have type $(I,q_i)$, $q_i<1$ as $n_U$.}
 
 \begin{prop}
 Suppose that $n_e \geq 1$, $n_U \geq n_e+ |n_A-n_B|+1$ $n_I \geq |n_A-n_B|$, and $max(n_e+n_A,n_e+n_B) \leq \frac{N}{2}$.
Then  the election game with delegation is dominance solvable. The DS outcome is the efficient outcome. 
\end{prop}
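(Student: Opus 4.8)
The plan is to run the iterated elimination in two rounds and then read off the resulting outcome. \textbf{Round one} pins down the partisans and the perfectly informed independents. For a partisan, voting unresponsively for her preferred alternative is weakly dominant: conditional on being pivotal she strictly prefers her own alternative (her preference is state-independent), and conditional on not being pivotal she is indifferent; since for any single voter one can split the other $N-1$ votes so as to make her pivotal, the strict gain occurs with positive probability. For a perfectly informed independent $e$ (with $q_e=1$) the same pivotality logic shows sincere voting is weakly dominant: when pivotal she knows $\omega$ and strictly wants the outcome to match it, and otherwise she is indifferent. This eliminates every other strategy for all partisans and all $q_i=1$ independents.

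\textbf{Round two} fixes those strategies and shows that, in the reduced game, delegating to some perfectly informed voter $e$ is weakly dominant for every independent $i$ with $q_i<1$. For the ``can never hurt'' direction I would fix an arbitrary profile of the others surviving round one and condition on $\omega$ and all realized signals: because $e$ votes sincerely she is always correct, so routing the entire block of votes $i$ controls (her own vote together with any votes transitively delegated to her) to $e$ casts them all for the correct alternative, which weakly increases that alternative's margin under majority rule and hence weakly raises the probability of the correct outcome---exactly $i$'s objective. (Delegating to $e$ never creates a cycle, since $e$ votes rather than delegates.) For the strict direction, which upgrades this to weak dominance over each competing strategy, I would exhibit an opponent profile in round-one-surviving strategies under which $i$ is pivotal and delegation strictly improves her payoff.

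This is where the counting hypotheses enter, and constructing these pivotal profiles \emph{in each state} is the main obstacle. Taking $n_A\ge n_B$ and working in state $\omega=a$, the committed votes (partisans plus the $n_e$ sincere perfectly informed independents) give $A$ a lead of $n_e+(n_A-n_B)$; since $n_U\ge n_e+|n_A-n_B|+1$ there are at least $n_e+(n_A-n_B)$ other $q_i<1$ independents besides $i$, and having exactly that many vote unresponsively for $B$ (an action that survives round one) cancels the lead, after which the rest abstain or split evenly to leave an exact tie excluding $i$. The complementary inequality $\max(n_e+n_A,\,n_e+n_B)\le N/2$ is precisely what guarantees the committed votes never already exceed half, so the balancing is feasible and $i$ is genuinely pivotal; the symmetric construction works in state $b$. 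At such a tie, delegating to $e$ delivers the correct alternative (payoff $1$), voting for the wrong alternative delivers $0$, and abstaining delivers the tie-break value $1/2$; in particular delegation beats unresponsive-$B$ and sincere voting strictly in state $a$ and beats unresponsive-$A$ strictly in state $b$, which is why pivotality is needed in both states. Hence after round two every $q_i<1$ independent delegates to a perfectly informed voter, the surviving profile is unique up to the payoff-irrelevant choice of which perfectly informed voter to delegate to, and since voters have strict preferences over alternatives the order of elimination is immaterial, so the game is dominance solvable.

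Finally I would verify efficiency. Under the DS profile every independent's vote is ultimately cast for the correct alternative, so in state $a$ alternative $A$ collects $n_A+n_I$ votes against $n_B$; as $n_I\ge|n_A-n_B|\ge n_B-n_A$ we get $n_A+n_I\ge n_B$, so $A$ wins (with only a knife-edge tie in the boundary case $n_I=|n_A-n_B|$), and symmetrically $B$ wins in state $b$. The outcome therefore matches the state, which coincides with the first-best/majoritarian outcome because $n_I\ge|n_A-n_B|$ makes the independents collectively decisive; hence the DS outcome is efficient.
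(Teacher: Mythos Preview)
Your proposal is correct and follows essentially the same approach as the paper's own proof: the two-round iterated elimination (partisans and perfectly informed independents in round one, delegation by the remaining independents in round two), the weak-dominance argument via ``can never hurt plus strictly helps at a constructed pivotal profile,'' and the use of two separate profiles---one making $i$ pivotal in state $a$ and one in state $b$---to knock out every competing strategy are exactly what the paper does. Your explicit mention of order-independence via strict preferences and the block-of-votes formulation of the weak direction are minor expository additions, not substantive departures.
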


\begin{proof}
Consider partisan voters: it is obvious that $\sigma_i(a)=\sigma_i(b)=a$ ($A$ voters)  or  $\sigma_i(a)=\sigma_i(b)=b$, ($B$ voters) is weakly better than all other strategies, as it maximizes the probability of the preferred outcome for any strategy profile $\sigma_{-i}.$   
W.l.o.g consider a type $A$ voter. Suppose $N$ is odd, then  it is strictly better to vote unresponsively for $A$ than  voting unresponsively for $B$ or abstaining unresponsively for a strategy profile  where $N-1$ voters are equally divided between $A$ and $B$.  Consider delegating to any other voter $j\neq i$: there exists a profile for each such $j$ such that $\sigma_j(a)=\sigma_j(b)= b$. This is strictly worse than $\sigma_i(a)=\sigma_i(b)=a$.
If $N$ is even, then let $\frac{N-2}{2}$ vote unresponsively $a$ and $\frac{N-2}{2}+1$ vote $b$ unresponsively - then it is a strict best response to vote unresponsively  $a$, as it creates a tie. $\sigma_i(a)=\sigma_i(b)=b$ or $\sigma_i(a)=\sigma_i(b)=x$  will lead to $B$ for sure.  Consider delegating to any other voter $j\neq i$: there exists a profile for each such $j$ such that $\sigma_j(a)=\sigma_j(b)= b$. This is strictly worse than $\sigma_i(a)=\sigma_i(b)=a$.

Consider an independent (fully informed) expert $i\in N_e$: it is  obvious that the sincere strategy is weakly better than any other as it maximizes the probability of getting the correct alternative for any $\sigma_{-i}.$   Suppose $N$ is odd, then  it is strictly better to vote sincerely for a strategy profile where $N-1$ voters are equally divided between $a$ and $b$ relative to voting unresponsively, or semi- sincerely  or insincerely ( as they do not get the correct alternative for both states in each of these cases).  
Consider delegating to any other voter $j\neq i$: there exists a profile for each such $j$ such that $\sigma_j(a)=\sigma_j(b)= b$. This is strictly worse than $\sigma_i(a)=a,\sigma_i(b)=a$.

Similarly for the case when $N$ is even, then let $\frac{N-2}{2}$ vote for $a$ and $\frac{N-2}{2}+1$ vote $b$ - it is a strict best response to vote sincerely, as it creates a tie in state $a$ instead of the outcome $B$ for sure. Voting unresponsively  will deliver $B$ for sure in state $a$ and voting semi sincerely (either $B$ in state $a$ or vice versa) or insincerely will do worse. 

Now we show that in the reduced game, the strategy of delegation to an informed independent is weakly dominant for all non expert independents. Since experts know the state of nature with certainty, delegation maximizes the probability of reaching the correct outcome. We now construct profiles to show that delegation is a strict best response for a non expert independent voter $i$  relative to each of the other pure strategies.  $max(n_e+n_A,n_e+n_B) \leq \frac{N}{2}$ is needed to guarantee each nonperfectly informed independent $i$ is pivotal under some profile in each state, otherwise the outcome is determined by partisan voters.  

W.l.o.g let $n_A-n_B=z\geq 0$.  Let $y= n_e+z$. Then consider the profile where partisan and expert voters use their dominant strategies and $y$ of the remaining independent voters excluding voter $i$ use
$\sigma_i(a)= \sigma_i(b)= b,$ while the rest abstain unresponsively.   
This profile is feasible if $n_U\geq y+1,$ which holds by assumption. 
Therefore $i$ is pivotal in state $a$ but not in state $b$ Where $B$ is the outcome). The correct outcome in state $a$ is reached by delegating to any expert. However, if $i$ votes unresponsively for $b$,or $x$ or delegates to any non expert independent or a $B$ partisan the outcome is $B$ in state $a$ as well as in state $b$. 

To show that voting unresponsively for $A$ (or delegating to an $A$ partisan)  is also strictly worse than delegation to an expert, consider the profile where $\tilde y= n_e-(n_A-n_B)$ of the non expert independents (excluding voter $i$) vote unresponsively for $A$ , and the rest abstain. Since $\tilde y\leq  y$ such a profile is feasible by assumption.  Then voter $i$ is pivotal in state $b$ but not in state $a$ (where the outcome is $A$). By delegating to an expert $i$ can ensure the correct outcome in each state. By voting unresponsively for $A$, or delegation to an $A$ partisan, the outcome is $A$ in state $b$.

The case of $n_B-n_A>0$ is exactly symmetric.

Note that under the conditions of the proposition, expert independents decide the election outcome and therefore the outcome is efficient.

 $n_I \geq |n_A-n_B|$ ensures independents are pivotal with positive probability which ensures sincere voting is a weakly dominant strategy for perfectly informed voters ($n_e \geq 1$). 
\end{proof}
\noindent {\bf Proof of Proposition \ref{prop: nodeleg}}
As before, we denote the number of independents who have type $(I,q_i)$, $q_i<1$ as $n_U$.
\begin{prop}
Suppose that $n_e \geq 1$, $n_I \geq |n_A-n_B|$, $max(n_e+n_A,n_e+n_B) \leq \frac{N}{2}$, and that delegation is not allowed. Then: \begin{enumerate}
 \item  The game is not dominance solvable. 
 \item There exist multiple efficient equilibria in asymmetric strategies for nonperfectly informed voters where the outcome is $a$ in state $\omega=a$ and $b$ in state $\omega=b$. Assume $(n_I-n_e)\geq n_I+ |n_A-n_B|+1$.  If $n_e> |n_A-n_B|,$ there exists an efficient equilibrium where all nonperfectly informed independent voters abstain. If $n_e < |n_A-n_B|$ then there exists an inefficient equilibrium where all nonperfectly informed independent voters abstain and the correct outcome is not chosen in at least one state. There also exist inefficient symmetric PSNE where the outcome is $i \in \{a,b\}$ regardless of the state.  
 \end{enumerate}
\end{prop}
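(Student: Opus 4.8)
The plan is to fix the behavior forced by the first round of elimination and then analyze the residual game played by the $n_U$ nonperfectly informed independents. By the arguments preceding Proposition~\ref{prop: dominant}, every partisan has a weakly dominant strategy of voting unresponsively for her preferred alternative and every perfectly informed independent has a weakly dominant strategy of voting sincerely; by Lemma~1 the two ``reverse-sincere'' strategies are weakly dominated for each nonexpert. So after one round the only strategic choices left are those of the nonexperts, each with surviving actions $\{\text{sincere},\,a,\,b,\,x\}$. The engine of the whole argument is the standard fact that a strategy which is the \emph{strict unique} best response to some profile of surviving strategies can never be deleted by IEWDS; I will exhibit, for a representative nonexpert, surviving profiles that single out different actions.

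For Part 1, write $z=|n_A-n_B|$ and assume wlog $n_A\ge n_B$. Fix a nonexpert $i$ and let the remaining nonexperts use unresponsive (hence undominated) strategies; since the experts vote with the state, the vote margin in state $a$ exceeds that in state $b$ by exactly $2n_e$. I would first choose how many of the other nonexperts vote $a$, vote $b$, or abstain so that, excluding $i$ and the experts, the $a$- and $b$-tallies differ by $-n_e$; then in state $a$ there is an exact tie before $i$ while in state $b$ alternative $B$ leads by $2n_e\ge 2$. Thus $i$ is decisive only in state $a$: voting $a$ yields the correct outcome in both states (probability $1$), whereas abstaining only ties state $a$, voting $b$ loses it, and sincere voting errs with probability $1-q_i$; hence voting $a$ is the strict unique best response. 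A symmetric configuration (tallies differing by $+n_e$, creating a tie in state $b$) makes voting $b$ the strict unique best response, and the worked example in the text shows abstaining is the strict unique best response when $n_e=1$. Feasibility of these tie-creating configurations is guaranteed by $\max(n_e+n_A,n_e+n_B)\le \tfrac{N}{2}$ (equivalently $n_U\ge n_e+z$) together with sufficiently many nonexperts (as in the example, $n_U\ge n_e+z+1$). Since at least two distinct actions survive every round for each nonexpert and they induce different outcomes, no further elimination terminates the process, so the game is not dominance solvable.

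For Part 2 I would construct each equilibrium directly and verify there is no profitable deviation. For the \emph{efficient asymmetric} equilibria, let exactly $z$ nonexperts vote unresponsively for $B$ (neutralizing the partisan imbalance, so partisans plus these voters split $n_A$ against $n_A$) and let all other nonexperts abstain; the $n_e\ge 1$ experts then decide correctly, the correct alternative winning by margin $n_e$ in each state. No nonexpert can improve on an already-correct outcome, so this is an equilibrium in undominated strategies, and permuting which $z$ nonexperts neutralize gives multiplicity. For the \emph{all-abstain} profile, the outcome is decided by partisans and experts, giving $a$ a margin $n_e+z$ in state $a$ and $b$ a margin $n_e-z$ in state $b$: if $n_e>z$ both margins are positive and the outcome is correct in both states (efficient), while if $n_e<z$ the majority partisans win in both states, so the outcome is wrong in state $b$ (inefficient), and when $z-n_e\ge 2$ no nonexpert is pivotal, so abstention is a best response. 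Finally, the \emph{symmetric inefficient} PSNE has all nonexperts vote unresponsively for $a$ (resp.\ $b$); for $n_U$ large enough the winning margin exceeds one, no single voter is pivotal, and the profile is a PSNE in undominated strategies whose outcome is $a$ (resp.\ $b$) regardless of the state.

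The main obstacle is the pivotality bookkeeping. Each claim hinges on verifying that a constructed profile makes the target action a \emph{strict} unique best response rather than merely a weak one, and this in turn depends delicately on whether $i$ is pivotal in one state or in both---which is governed by the $2n_e$ gap between the two states' margins and by the partisan imbalance $z$. The same care is needed in the deviation checks of Part 2, where one must confirm both feasibility of the constructions under the stated inequalities and that no single nonexpert's deviation flips or ties a decisive vote.
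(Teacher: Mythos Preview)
Your proposal is correct and follows essentially the same approach as the paper's proof: for Part~1 you fix partisans and experts at their dominant strategies and then exhibit, for a generic nonexpert, surviving profiles making unresponsive $a$-voting and unresponsive $b$-voting strict unique best responses (the paper uses the identical constructions with $y=n_e+z$ nonexperts voting $b$ and the rest abstaining, and the mirror profile); for Part~2 your constructions---$z$ nonexperts neutralizing the partisan imbalance, the all-abstain profile split by whether $n_e\gtrless z$, and the all-$a$/all-$b$ unresponsive profiles---are exactly those the paper gives. One minor slip: after Lemma~1 there are six surviving interim strategies for a nonexpert (the two ``semi-sincere'' maps $(a,x)$ and $(x,b)$ also survive), not four, but your unique-best-response checks go through against these as well; and your appeal to the worked example for abstention surviving covers only $n_e=1$, whereas the paper argues directly (and equally loosely) that abstention is undominated in general---neither treatment pins down the $x$-versus-sincere comparison explicitly, though it is easily handled by a profile where the correct alternative already wins by one without $i$.
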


\begin{proof}
 
 (1) We use the results of Proposition \ref{prop: dominant} that $A$,$B$ partisan and expert independents have a dominant strategy (a strategy that weakly dominates a set of strategies $S$  also weakly dominates a strict subset of $S$). We now show that in the reduced game without delegation two of the pure strategies (voting unresponsively $a$ or $b$) of a non expert independent $i$ is a unique best response (unique best response) to some profile $\sigma_{-i}$  in the reduced game.

W.l.o.g let $n_A-n_B=z\geq 0$.  Let $y= n_e+z$. Then consider the profile where partisan and expert voters use their dominant strategies and $y$ of the remaining independent voters excluding voter $i$ use
$\sigma_i(a)= \sigma_i(b)= b,$ while the rest abstain unresponsively.   
This profile is feasible if $n_U\geq y+1,$ which holds by assumption. 
Therefore $i$ is pivotal in state $a$ but not in state $b$ where $b$ is the outcome). The correct outcome in state $a$ is reached by voting $a$ unresponsively. However, if $i$ votes unresponsively for $b$ ( $x$) 
the outcome is $b$ (tie) in state $a$ and $b$  in state $b$. Therefore voting unresponsively for $A$ is a unique best response. Moreover voting $x$ unresponsively is a strictly better response than voting $b$ unresponsively. 

To show that voting unresponsively for $B$ is also a unique best response, consider the profile where $\tilde y= n_e-(n_A-n_B)$ of the non expert independents (excluding voter $i$) vote unresponsively for $A$ , and the rest abstain. Since $\tilde y\leq  y$ such a profile is feasible by assumption.  Then voter $i$ is pivotal in state $b$ but not in state $a$ (where the outcome is $a$).  By voting unresponsively for $B$ ,  the outcome is $B$ in state $b$ while voting unresponsively for $A$ yields outcome $A$ in state $b$ and abstaining unresponsively leads to a tie in state $b$. Therefore voting unresponsively for $B$ is a unique best response. Moreover, voting $x$ unresponsively is strictly better than voting $b$ unresponsively. 

Therefore we have shown that voting unresponsively for $A$ or$B$ are unique best response, they are undominated while  abstaining unresponsively is not dominated by any of the other pure strategies. Moreover, since the profiles for showing unique best response for $a$ and $b$ are different, no mixed strategy dominates abstention either. 

The case of $n_B-n_A>0$ is exactly symmetric. 

(2) W.l.o.g assume $n_A-n_B\geq 0$ and $n_e\geq 1$ and let $n_A-n_B$ of the non expert independents vote unresponsively $b$, while the rest abstain. The experts decide the election so it is efficient. This requires $n_I-n_e\geq n_A-n_B$. To generate multiple such equilibria, we need to assume  $n_I-n_e\geq n_A-n_B+1$- permuting the non experts between abstention and voting unresponsively $b$ will not change the outcome as long as numbers remain the same. With sufficiently many non experts there are more equilibria where some of the non experts neutralise each others votes leaving the experts decisive. 
If $n_e> |n_A-n_B|,$ there exists an efficient equilibrium where all nonperfectly informed independent voters abstain. This is easy to see that in this case the experts are  decisive and the correct outcome will be chosen in both  states. If $n_e< |n_A-n_B|,$ the experts are not decisive so if all non experts abstain, the outcome is incorrect in one state.  

Again assume $n_A-n_B\geq 0$. If all non experts vote $a$ unresponsively the outcome is $a$, in both states if there are $ n_U> n_e-(n_A-n_B)$ of non expert independents. This condition ensures that votes for $A$ in state $b$ are $n_A+ n_U> n_B +n_e.$ Similarly when all non experts vote $b$ the outcome is $b$ in state $a$ when $n_B+ n_U> n_A+n_e $

\end{proof}
\end{document}